\renewcommand {\a}{ \alpha }
\renewcommand{\b}{\beta}
\newcommand{\g}{\gamma}
\newcommand{\G}{\Gamma}
\renewcommand{\d}{\delta}
\newcommand{\s}{\sigma}
\renewcommand{\l}{\lambda}
\newcommand{\z}{\zeta}
\renewcommand{\t}{\theta}
\newcommand{\p}{\partial}
\newcommand{\om}{\omega}
\newcommand{\Om}{\Omega}
\newcommand{\oq}{\ {\raise 7pt\hbox{${\scriptstyle\circ}$}}
	\kern -7pt{
		\hbox{$Q$}}}
\newcommand{\R}{ \mathbb R}
\newcommand {\BS}{\mathbf S}
\newcommand {\bx}{\mathbf x}
\newcommand {\by}{\mathbf y}
\newcommand{\SfG}{{\sf{G}}}
\newcommand{\sg}{{\sf{g}}}
\newcommand{\SU}{{\sf{U}}}
\newcommand{\scalel}[1]
{{\scaleto{#1}{3pt}}}
\newcommand{\scalet}[1]
{{\scaleto{#1}{4pt}}}
\newcommand{\SfS}{{\sf S}}
\newcommand {\BOLG}{\boldsymbol\Gamma}
\newcommand {\BPsi}{\boldsymbol\Psi}
\newcommand{\CK}{\mathcal K}
\newcommand{\CB}{\mathcal B}
\newcommand{\CT}{\mathcal T}
\newcommand{\CH}{\mathcal H}
\newcommand{\CP}{\mathcal P}
\newcommand{\CA}{\mathcal A}
\newcommand{\CM}{\mathcal M}
\newcommand{\CC}{\mathcal C}
\newcommand{\plainW}[1]{\textup{{\textsf{W}}}^{#1}}
\newcommand{\plainC}[1]{\textup{{\textsf{C}}}^{#1}}
\newcommand{\plainH}[1]{\textup{{\textsf{H}}}^{#1}}
\newcommand{\plainL}[1]{\textup{{\textsf{L}}}^{#1}}
\DeclareMathOperator{\iop}{{\sf Int}}
\DeclareMathOperator{\supp}{{supp}}
\DeclareMathOperator{\dc}{d}
\newtheorem{thm}{Theorem}[section]
\newtheorem{cor}[thm]{Corollary}
\newtheorem{lem}[thm]{Lemma}
\newtheorem{prop}[thm]{Proposition}
\theoremstyle{definition}
\newtheorem{example}[thm]{Example}
\newtheorem*{remark}{Remark}
\newtheorem{rem}[thm]{Remark}
\numberwithin{equation}{section}
\newcommand{\bee}{\begin{equation}}
	\newcommand{\ene}{\end{equation}}
\newcommand{\bees}{\begin{equation*}}
	\newcommand{\enes}{\end{equation*}}
\newcommand{\bes}{\begin{split}}
	\newcommand{\ens}{\end{split}}
\newcommand{\bet}{\begin{thm}}
	\newcommand{\ent}{\end{thm}}
\newcommand{\bel}{\begin{lem}}
	\newcommand{\enl}{\end{lem}}
\newcommand{\bec}{\begin{cor}}
	\newcommand{\enc}{\end{cor}}
\newcommand{\bep}{\begin{proof}}
	\newcommand{\enp}{\end{proof}}
\newcommand{\ber}{\begin{rem}}
	\newcommand{\enr}{\end{rem}}
\newcommand{\CF}{\mathcal F}
\begin{document}
	\hoffset -4pc

\title
[ Eigenvalue asymptotics]
{{Eigenvalue asymptotics for the one-particle density matrix}} 
\author{Alexander V. Sobolev}
\address{Department of Mathematics\\ University College London\\
	Gower Street\\ London\\ WC1E 6BT UK}
 \email{a.sobolev@ucl.ac.uk}

 \keywords{Multi-particle Schr\"odinger operator, one-particle density matrix, eigenvalues, spectral asymptotics}

\subjclass[2010]{Primary 35J10; Secondary 47G10, 81Q10}

\date{\today}
\begin{abstract}	 
The one-particle density matrix $\gamma(x, y)$ for a bound state of an atom or molecule 
is one of the key objects in the quantum-mechanical approximation schemes. 
We prove the asymptotic formula $\l_k \sim (Ak)^{-8/3}$, $A \ge 0$, 
as $k\to\infty$,   
for the eigenvalues $\l_k$ of the self-adjoint operator $\BOLG\ge 0$ with kernel $\g(x, y)$. 
\end{abstract}

\maketitle

\section{Introduction}

Consider on $\plainL2(\R^{3N})$ the Schr\"odinger operator 
\begin{align}\label{eq:ham}
\CH = \sum_{k=1}^N \bigg(-\Delta_k -  \frac{Z}{|x_k|}
\bigg) 
 + \sum_{1\le j< k\le N} \frac{1}{|x_j-x_k|},
\end{align}
describing an atom with $N$ particles 
(e.g. electrons)  
with coordinates $\bx = (x_1, x_2, \dots, x_N)$, $x_k\in\R^3$, $k= 1, 2, \dots, N$, 
and a nucleus with charge $Z>0$. The notation $\Delta_k$ is used for 
the Laplacian w.r.t. the variable $x_k$. 
The operator $\CH$ acts on the Hilbert space $\plainL2(\R^{3N})$ and it is self-adjoint on the domain 
$D(\CH) =\plainH2(\R^{3N})$, since the potential in \eqref{eq:ham} 
is an infinitesimal perturbation 
relative to the unperturbed operator $-\Delta = - \sum_k \Delta_k$, 
see e.g. \cite[Theorem X.16]{ReedSimon2}.  
Let $\psi = \psi(\bx)$,  
be an eigenfunction of the operator $\CH$ with an eigenvalue $E\in\R$, i.e. $\psi\in D(\CH)$ and 
\begin{align*}
(\CH-E)\psi = 0.
\end{align*}
For each $j=1, \dots, N$, we represent
\begin{align*}
\bx = (\hat\bx_j, x_j), \quad \textup{where}\ 
\hat\bx_j = (x_1, \dots, x_{j-1}, x_{j+1},\dots, x_N),
\end{align*}
with obvious modifications if $j=1$ or $j=N$. 
The one-particle density matrix is defined as the function 
\begin{align}\label{eq:den}
\g(x, y) = \sum_{j=1}^N\int\limits_{\R^{3N-3}} \overline{\psi(\hat\bx_j, x)} \psi(\hat\bx_j, y)\  
d\hat\bx_j,\quad (x,y)\in\R^3\times\R^3. 
\end{align} 
This function is one of the key objects in 
the  multi-particle quantum mechanics, see 
\cite{RDM2000}, \cite{Davidson1976}, 
\cite{LLS2019}, \cite{LiebSei2010} for details and futher references. 
If one assumes that all $N$ particles are spinless fermions (resp. bosons), i.e. that the function $\psi$ is 
antisymmetric (resp. symmetric) under the permutations $x_j\leftrightarrow x_k$, 
then the definition \eqref{eq:den} simplifies:
\begin{align}\label{eq:fb}
\g(x, y) = N \int_{\R^{3N-3}} \overline{\psi(\hat\bx, x)} \psi(\hat\bx, y) d\hat\bx,\ 
\quad \textup{where} \ \hat\bx = \hat\bx_N. 
\end{align}
Our main result however does not require any symmetry assumptions. 
For the sake of completeness mention that, as found in 
\cite{HearnSob2020}, the function \eqref{eq:den} 
is real-analytic for all $x\not = 0, y\not = 0, x\not = y$. 
In the current paper our focus is on spectral properties of the self-adjoint non-negative 
operator $\BOLG$ with the kernel $\g(x, y)$, which we 
call \textit{the one-particle density operator}. 
The operator $\BOLG$ is easily shown to be trace class, and in \cite{Fries2003} 
it was shown that $\BOLG$ has infinite rank. 
However no sharp results on the 
behaviour of the eigenvalues $\l_k(\BOLG)>0$ as $k\to\infty$ had been available until paper 
\cite{Sobolev2020} (see however 
\cite{Cioslowski2020}, \cite{CioPrat2019} for 
relevant quantum chemistry calculations), where it was shown that $\l_k(\BOLG) = O(k^{-8/3})$. 
We always label eigenvalues in non-increasing order counting multiplicity. 
The purpose of the paper is to prove the asymptotic formula \eqref{eq:main}, which 
confirms the sharpness of the bound from \cite{Sobolev2020}. 
Apart from being a mathematically interesting and challenging question, spectral asymptotics 
for the operator $\BOLG$ are important 
%
%
for electronic structure computations as it limits accuracy of 
electronic properties computed with finite basis sets, see e.g. 
\cite{Cioslowski2020}, 
\cite{CioStras2021},  \cite{Fries2003} and \cite{HaKlKoTe2012}
for discussion. 

We assume throughout that  $\psi$ decays exponentially as $|\bx|\to \infty$:
\begin{align}\label{eq:exp}
|\psi(\bx)|\lesssim e^{-\varkappa_\scalel{0} |\bx| },\ \bx\in\R^{3N}.
\end{align}  
Here $\varkappa_0 >0$ is a constant, and the notation ``$\lesssim$" means that the left-hand side 
is bounded from above by the right-hand side times 
some positive constant whose precise value is of 
no importance for us. This notation is used throughout the paper. The property 
\eqref{eq:exp} holds for the eigenfunctions associated with discrete 
eigenvalues (i.e. the ones below the essential spectrum), and in particular, for the ground state. 
For references and detailed discussion we quote \cite{SimonSelecta}.
 
The next theorem contains a concise version of the main result. 

\begin{thm}\label{thm:main} 
Suppose that the eigenfunction $\psi$ satisfies the bound \eqref{eq:exp}. 
Then the eigenvalues $\l_k(\BOLG),k = 1, 2, \dots$, of the operator $\BOLG$ with kernel 
\eqref{eq:den} satisfy the relation
\begin{align}\label{eq:main}
\lim_{k\to\infty}k^{\frac{8}{3}}\l_k(\BOLG) = A^{\frac{8}{3}},
\end{align}
with an explicit constant $A\ge 0$. 
\end{thm}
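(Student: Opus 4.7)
The plan is to identify the explicit constant $A$ by isolating the local cusp singularities of $\gamma(x,y)$ at the coalescence loci $\{x=y\}$, $\{x=0\}$, and $\{y=0\}$, and reducing the eigenvalue asymptotics for $\BOLG$ to a sum of explicit model-operator asymptotics. The upper bound $\lambda_k(\BOLG) = O(k^{-8/3})$ is already known from \cite{Sobolev2020}, so the main task is to prove the matching lower bound with the correct leading constant. Away from the coalescence set, $\gamma$ is real-analytic, and the exponential decay \eqref{eq:exp} localizes the problem to a compact neighborhood of the singular set, up to a remainder whose operator has super-polynomially decaying eigenvalues and is therefore subleading in \eqref{eq:main}.

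Next, I would establish a quantitative expansion $\gamma = \gamma_{\text{sing}} + \gamma_{\text{reg}}$, in which $\gamma_{\text{sing}}$ is an explicit ``model'' kernel — built out of the on-diagonal density $\rho(x) = \gamma(x, x)$ and related local expansions of $\psi$ near coalescence — that captures the leading non-smooth behavior at each of the three singular strata, while $\gamma_{\text{reg}}$ has strictly higher Sobolev regularity. By the stability of eigenvalue asymptotics under Ky Fan-type inequalities, the asymptotics of $\BOLG$ coincide with those of the model operator. The model, in turn, should be amenable to a direct calculation via a suitable calculus of pseudodifferential operators with singular symbols — of the type developed in \cite{Sobolev2020} — giving an explicit Weyl-type formula for the constant $A$ as an integral of local cusp data. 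This formula naturally admits $A = 0$ in degenerate cases (e.g.\ when symmetry forces the relevant cusp coefficients to vanish), which explains the allowance $A \ge 0$ in the statement.

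Because the two types of singularities (electron–nucleus at $x=0$, $y=0$ and electron–electron on the diagonal) have different homogeneity structure yet both contribute at the same nominal rate, I would handle them with parallel but distinct microlocalized reductions, and then combine the contributions additively modulo lower-order terms. A key input is a refined Kato-type expansion of $\psi$ at coalescence, integrated against the remaining $3N-3$ variables using the regularity/analyticity theory cited in the introduction; this produces precisely the local symbols needed to feed into the pseudodifferential calculus.

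The main obstacle is controlling the regular remainder $\gamma_{\text{reg}}$ and the cross-terms from all three singular loci with sufficient precision that they each contribute $o(k^{-8/3})$ rather than merely $O(k^{-8/3})$: the bare Kato cusp conditions give only the leading singular term, but a matching lower bound with sharp constant requires that the next-order behavior be genuinely smoother in a quantitative sense. Establishing this improved regularity for a bound state of the many-body operator $\CH$ — and tracking it faithfully through the partition-of-unity reductions and symbol calculus — is the technical heart of the proof. Once these remainder estimates are in place, the constant $A$ emerges as an explicit integral of local coalescence data, and the two-sided bound yielding \eqref{eq:main} follows from the model-operator Weyl asymptotics.
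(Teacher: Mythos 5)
Your overall architecture (localize to the singular set using analyticity and the exponential decay, replace the operator by an explicit model whose spectral asymptotics are computable by Birman--Solomyak-type results, and control the remainder via the a priori estimates of \cite{Sobolev2020} together with the continuity of the asymptotic functionals) matches the paper's strategy. However, there are two genuine gaps. The first is your assertion that the electron--nucleus singularities at $x=0$ and $y=0$ ``contribute at the same nominal rate'' and must be combined additively with the electron--electron diagonal contribution. This is false: only the pair coalescences $x_j=x_k$ with $j,k\ge 1$ contribute to the leading term, and the coefficient $A$ is an integral over the diagonal of the $\eta_{j,k}$ alone. The reason is structural: in the factorized kernel $\psi_j(\hat\bx,x)$ the electron--nucleus cusp sits at the single point $x=0$ of the input variable, so localizing to an $\varepsilon$-ball around it gives $\SfG_{3/4}\lesssim \varepsilon^{9/8}\to 0$ by the weighted bound \eqref{eq:psifull}; by contrast the cusp $|x-x_k|$ is swept over all of $\R^3_x$ as $x_k$ is integrated out, which is what produces a nonzero density of states. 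If you add a putative nuclear contribution you will get the wrong constant unless you separately prove (as the paper does) that it vanishes.

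The second gap is the proposed decomposition $\gamma=\gamma_{\textup{sing}}+\gamma_{\textup{reg}}$ at the level of the kernel of $\BOLG$ itself. The diagonal singularity of $\gamma$ is \emph{not} of the Lipschitz type $|x-y|\cdot(\textup{smooth})$: after integrating the products $\overline{\psi_j(\hat\bx,x)}\psi_j(\hat\bx,y)$ over $\hat\bx$, the leading non-smooth diagonal term of $\gamma$ is of order $|x-y|^5$ (consistent with $1/p=1+5/3=8/3$), and establishing such a fifth-order expansion with a controlled remainder is a substantially harder regularity problem than anything furnished by the Kato cusp conditions or by \cite{FHOS2009}; your proposal does not supply it. The paper circumvents this entirely by writing $\BOLG=\BPsi^*\BPsi$ and proving the asymptotics for the singular values of $\BPsi$, whose kernel carries only the first-order representation $\psi_j=\tilde\xi_{j,k}+|x-x_k|\tilde\eta_{j,k}$ with real-analytic coefficients; the result for $\BOLG$ then follows from $\l_k(\BOLG)=s_k(\BPsi)^2$. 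Without either this factorization or a proved high-order diagonal expansion of $\gamma$, the step from ``model kernel'' to ``sharp two-sided asymptotics'' in your argument does not go through.
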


The complete statement includes a formula for the coefficient $A$, and it is given 
as Theorem \ref{thm:maincompl}. 
 
\begin{remark}
Theorem \ref{thm:main} 
extends to the case of a molecule with several nuclei whose positions
are fixed, i.e. the operator \eqref{eq:ham} can be replaced by 
\begin{align*}
\CH = \sum_{k=1}^N \bigg(-\Delta_k -  \sum_{l=1}^{N_0}\frac{Z_l}{|x_k-R_l|}
\bigg) 
 + \sum_{1\le j< k\le N} \frac{1}{|x_j-x_k|},
\end{align*}
with constant $R_l\in\R^3$ and nuclear charges $Z_l>0$, $l = 1, 2, \dots, N_0$. 
The modifications are straightforward.
\end{remark}

Let us outline the main ideas of the proof. First we represent the operator 
$\BOLG$ as the product $\BOLG = \BPsi^*\BPsi$, where the operator
$\BPsi:\plainL2(\R^3)\to \plainL2(\R^{3N-3})$ with a vector-valued kernel is defined 
in Subsect. \ref{subsect:fact}. 
Therefore we have $\l_k(\BOLG) = s_k(\BPsi)^2, k = 1, 2, \dots$, where 
$s_k(\BPsi)$ are the singular values ($s$-values) of the operator $\BPsi$. As a consequence, 
the asymptotic formula \eqref{eq:main} rewrites as 
\begin{align}\label{eq:main1}
\lim_{k\to\infty}k^{\frac{4}{3}} s_k(\BPsi) = A^{\frac{4}{3}}.
\end{align}
For the sake of discussion 
consider the fermionic (or bosonic) case, in which the 
kernel $\g(x, y)$ is given by \eqref{eq:fb}. Then it is 
straightforward that $\BOLG = \BPsi^*\BPsi$ with the operator 
\begin{align}\label{eq:ferbos}
(\BPsi u)(\hat\bx) = \sqrt{N}\int_{\R^3} \psi(\hat\bx, x) u(x) d x,\  u\in\plainL2(\R^3).
\end{align}
For integral operators 
the rate of decay of singular values increases with the smoothness of their 
kernels, and the appropriate estimates  
via suitable Sobolev norms can be found in \cite{BS1977}. 
Such estimates, 
together with the recent regularity estimates for $\psi$ obtained in \cite{FS2018},  
were used in \cite{Sobolev2020} to prove the bound 
$s_k(\BPsi)\lesssim k^{-4/3}$, $k = 1, 2, \dots$. 

The study of spectral asymptotics of the operator \eqref{eq:ferbos} requires 
more precise information on the singularities of $\psi$. 
By elliptic regularity, the function $\psi$ is 
real analytic away from the coalescence points of 
the particles, i.e. for $x_j\not = x_k, 1\le j < k \le N$ and $x_j\not = 0$, 
$j = 1, 2, \dots, N$, and hence only the coalescence points contribute to the asymptotics \eqref{eq:main1}.  
As shown by T. Kato in \cite{Kato1957}, the function $\psi$ is Lipschitz. 
Of course, this fact alone is not sufficient to obtain an 
asymptotic formula for $\BPsi$ -- one needs to know 
the precise shape of the function $\psi$ near the coalescence points. 
A suitable representation formula for the function $\psi$ was obtained 
in \cite{FHOS2009}. To explain in more detail we make a further simplifying assumption and 
consider the special case $N=2$, so that $\bx = (t, x)\in \R^3\times\R^3$, 
and the operator $\BPsi$ acts from $\plainL2(\R^3)$ into $\plainL2(\R^3)$. 
According to \cite{FHOS2009}, there exists a neighbourhood (open connected set) 
$\Om_{1,2}\subset \big(\R^3\setminus \{0\}\big)\times \big(\R^3\setminus\{0\}\big)$ of the 
diagonal set $\{(x, x): x\in\R^3\setminus \{0\}\}$ 
and two functions $\xi_{1,2}, \eta_{1, 2}$, real analytic in $\Om_{1,2}$,  such that 
the eigenfunction $\psi = \psi(t, x)$ admits the representation
\begin{align}\label{eq:locan}
\psi(t, x) = \xi_{1,2}(t, x) + |t-x|\,\eta_{1,2}(t, x),
\quad \textup{for all}\quad (t, x)\in \Om_{1,2}.
\end{align}
The form of the second term is in line with Kato's observation 
(see \cite{Kato1957}) that $\psi$ is Lipschitz. 
The representation \eqref{eq:locan} is 
ideally suited for the study of spectral asymptotics. 
Indeed, 
the Lipschitz factor on the right-hand side of \eqref{eq:locan} is homogeneous of order one.  
The behaviour of eigenvalues for a wide class of integral operators 
including those with homogeneous kernels, 
was studied by 
M. Birman and M. Solomyak in \cite{BS1970},\cite{BS1977_1} and \cite{BS1979}, 
see also \cite{BS1977}. However, the existing results are not directly applicable, since 
the functions 
$\xi_{1, 2}$ and $\eta_{1, 2}$ may not be smooth on the closure 
$\overline{\Om_{1,2}}$. 
Moreover, there is no information on the 
integrability of $\xi_{1, 2}$ and $\eta_{1, 2}$ over $\Om_{1, 2}$.  
 To circumvent this difficulty we 
approximate $\xi_{1,2}, \eta_{1,2}$ by suitable 
$\plainC\infty_0$-functions supported inside $\Om_{1,2}$. 
The error incurred is controlled with the help of the bounds 
obtained in \cite{Sobolev2020}. Using 
the Birman-Solomyak results 
and subsequently 
taking the limit of these smooth approximations we arrive at the formula 
\eqref{eq:main1} with the coefficient
\begin{align*}
A = \frac{1}{3}\bigg(\frac{2}{\pi}\bigg)^{\frac{5}{4}}\int_{\R^3} |2^{1/2}\eta_{1,2}(x, x)|^{3/4} dx.
\end{align*}  
The finiteness of the above integral is a by-product of the proof. Note that the coalescence points  
$x = 0$ and $t=0$ do not affect the asymptotics. 
 
For $N\ge 3$ application of the existing results on spectral asymptotics for integral operators 
is not immediate. It relies on the reduction to a 
certain model operator 
whose kernel includes the functions $\eta_{j,k}$ describing 
the eigenfunction $\psi$ in a 
neighbourhood of all \textbf{pair coalescence} points 
$x_j = x_k$, \ $j, k = 1, 2, \dots, N$,\ $j \not = k$. 
We emphasize that neither the points  
$x_j = 0, j = 1, 2, \dots, N$, nor 
the coalescence points of higher orders (e.g. $x_j = x_k = x_l$ with pair-wise distinct $j, k, l$)   
contribute to the asymptotics \eqref{eq:main1}.

The paper is organized as follows. In Section \ref{sect:main} we describe the representation of 
the function $\psi$ near the pair coalescence points (see \eqref{eq:locan} for the case $N=2$), 
state the main result in its complete form as Theorem \ref{thm:maincompl}, which includes 
the formula \eqref{eq:coeffA} for 
the coefficient $A$, 
and give the details of the factorization $\BOLG = \BPsi^*\BPsi$. 
Section \ref{sect:compact} contains 
necessary facts about compact operators, 
and it includes asymptotic formulas for spectra of integral operators with homogeneous kernels.  
Section \ref{sect:model} is focused on spectral asymptotics of the model integral operator that is instrumental to the case $N\ge 3$.   
Using the factorization $\BOLG = \BPsi^*\BPsi$, in Sections \ref{sect:factor} and \ref{sect:trim} 
the main Theorem \ref{thm:maincompl} is restated in terms of the operator $\BPsi$, see Theorem 
\ref{thm:gtopsi}. Here we also construct suitable approximations for $\BPsi$, 
to which one can apply the results of Sect. \ref{sect:model}. 
Section. \ref{sect:proofs} completes the proof of Theorem \ref{thm:gtopsi} and hence that 
of Theorem \ref{thm:maincompl}.

We conclude the introduction with some general notational conventions.  

\textit{Coordinates.} 
As mentioned earlier, we use the following standard notation for the coordinates: 
$\bx = (x_1, x_2, \dots, x_N)$,\ where $x_j\in \R^3$, $j = 1, 2, \dots, N$. 
In order to write 
formulas in a more compact and unified way, we sometimes use the notation 
$x_0 = 0$. 

The vector $\bx$ is often represented in the form  
\begin{align*}
\bx = (\hat\bx_j, x_j) \quad \textup{with}\quad   
\hat\bx_j = (x_1, x_2, \dots, x_{j-1}, x_{j+1},\dots, x_N)\in\R^{3N-3}, 
\end{align*}
for arbitrary $j = 1, 2, \dots, N$. Most frequently we use this notation with $j=N$, and  
write $\hat\bx = \hat\bx_N$, so that $\bx = (\hat\bx, x_N)$.

For $N\ge 3$ it is also useful to introduce the notation 
for $\bx$ with $x_j$ and $x_k$ taken out: 
\begin{align}\label{eq:xtilde}
\begin{cases}
\tilde\bx_{j, k} = (x_1, \dots, x_{j-1}, x_{j+1}, \dots, x_{k-1}, x_{k+1},\dots, x_N), \quad \textup{if}\  j <k,\\[0.2cm] 
\qquad\textup{and}\  \tilde\bx_{j, k} = \tilde\bx_{k, j},\quad \textup{if}\ j >k.
\end{cases}
\end{align}
If $j < k$, then we write $\bx = (\tilde\bx_{j, k}, x_j, x_k)$. For any $j\le N-1$ 
the vector $\hat\bx$ can be represented as $\hat\bx = (\tilde\bx_{j, N}, x_j)$.

The notation $B_R$ is used for the ball $\{x\in\R^3: |x| < R\}$.

\textit{Derivatives.} 
Let $\mathbb N_0 = \mathbb N\cup\{0\}$.
If $x = (x', x'', x''')\in \R^3$ and $m = (m', m'', m''')\in \mathbb N_0^3$, then 
the derivative $\p_x^m$ is defined in the standard way:
\begin{align*}
\p_x^m = \p_{x'}^{m'}\p_{x''}^{m''}\p_{x'''}^{m'''}.
\end{align*}  

\textit{Cut-off functions.} 
We systematically use the following smooth cut-off functions. Let 
\begin{align}\label{eq:sco}
\t\in\plainC\infty_0(\R),\quad \z(t) = 1-\t(t), 
\end{align}
be functions such that $0\le \t\le 1$ and 
\begin{align}\label{eq:sco1} 
\t(t) = 0,\quad \textup{if}\quad |t|>1;\ \quad
\t(t) = 1,\quad \textup{if}\quad |t|<\frac{1}{2}. \ 
\end{align}

\textit{Integral operators.} 
The notation $\iop(\CK)$ is used for the integral operator with kernel $\CK$, 
e.g. $\BOLG = \iop(\g)$.  
The functional spaces, where $\iop(\CK)$ acts are obvious from the context.

\textit{Bounds.} 
As explained earlier, for two non-negative numbers (or functions) 
$X$ and $Y$ depending on some parameters, 
we write $X\lesssim Y$ (or $Y\gtrsim X$) if $X\le C Y$ with 
some positive constant $C$ independent of those parameters.
To avoid confusion we often make explicit comments on the nature of 
(implicit) constants in the bounds.

\section{Representation formula. Details of the main result}\label{sect:main}

\subsection{Representation formula} 
Our approach is built on the sharp qualitative result for $\psi$ obtained in \cite{FHOS2009}. 
In order to write all the formulas in a more compact and unified way, we use the notation 
$x_0 = 0$. As before, $\bx = (x_1, x_2, \dots, x_N)\in\R^{3N}$. 
Thus, unless otherwise stated, the indices labeling the particles, run from $0$ to $N$. 

Denote 
\begin{align}\label{eq:sls}
\SfS_{l, s} = \{\bx\in\R^{3N}: x_l\not = x_s\},\ 
l\not = s. 
\end{align}
The function $\psi$ is real-analytic on the set 
\begin{align*}
\SU = 
\bigcap_{0\le l < s\le N}\SfS_{l, s}.
\end{align*}
For each pair $j, k: j\not = k$,  
we are interested in the behaviour of $\psi$ on the set 
\begin{align}\label{eq:uj}
\SU_{j,k} =  
\bigcap_{\substack{l \not = s\\
(l, s)\not = (j, k)}}  
\SfS_{l,s}.
\end{align}
In words, $\SU_{j, k}$ includes 
the coalescence point $x_j = x_k$, but excludes all the others. 
Our main focus will be on the function $\psi$ 
near the ``diagonal" set 
\begin{align}\label{eq:diag}
\SU^{(\rm d)}_{j,k} = \{\bx\in\SU_{j,k}: x_j = x_k\}.
\end{align}
The sets introduced above are obviously symmetric with respect to permutations of 
indices, e.g. $\SU_{j,k} = \SU_{k,j}$, $\SU^{(\rm d)}_{j, k}=\SU^{(\rm d)}_{k,j}$.
Observe also that the sets $\SU_{j,k}$, 
$\SU_{j,k}^{(\dc)}$ are of full measure in $\R^{3N}$ and $\R^{3N-3}$ respectively, 
and that they are connected.   

The following property follows from \cite[Theorem 1.4]{FHOS2009}.

\begin{prop}\label{prop:repr}
For each pair of indices $j, k = 0, 1, \dots, N$ such that $j\not = k$,  there exists 
an open connected set $\Om_{j,k} = \Om_{k, j}\subset \R^{3N}$, such that 
\begin{align}\label{eq:omin}
\SU_{j,k}^{(d)}\subset\Om_{j,k}\subset \SU_{j,k},
\end{align}
and two uniquely defined functions $\xi_{j, k}, \eta_{j, k}$, real analytic on $\Om_{j, k}$, such that 
for all $\bx\in \Om_{j, k}$  
the following representation holds:
\begin{align}\label{eq:repr}
 \psi(\bx) = \xi_{j, k}(\bx) + |x_j-x_k| \eta_{j, k}(\bx).
\end{align} 
\end{prop}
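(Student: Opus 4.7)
The proposition is a globalized restatement of the local structure result \cite[Theorem 1.4]{FHOS2009}, so the plan is to (i) apply that local theorem at each point of $\SU^{(d)}_{j,k}$, (ii) establish a uniqueness property of the decomposition near the diagonal, and (iii) patch the resulting local analytic pairs into a single pair on a connected open neighborhood of $\SU^{(d)}_{j,k}$. First, I would invoke \cite[Theorem 1.4]{FHOS2009} at an arbitrary $\bx^\ast\in\SU^{(d)}_{j,k}$ to obtain a small open neighborhood $V_{\bx^\ast}\subset\SU_{j,k}$ (taken as a convex product neighborhood, hence connected) and real-analytic functions $\xi^{\bx^\ast}$, $\eta^{\bx^\ast}$ on $V_{\bx^\ast}$ satisfying $\psi = \xi^{\bx^\ast} + |x_j - x_k|\,\eta^{\bx^\ast}$ on $V_{\bx^\ast}$.

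The key technical step is uniqueness of such a decomposition on any connected open set meeting the diagonal. Suppose $\Xi + |x_j-x_k|\,H = 0$ on such a set, with $\Xi,H$ real analytic. Choosing local coordinates $(\tilde\bx_{j,k}, y, z)$ with $y = x_j-x_k$, I would match Taylor expansions in $y$ of the two sides at a diagonal point $y=0$: the analytic function $\Xi$ has a direction-independent gradient at $y=0$, whereas the right-hand side has a directional first-order behaviour $|y|\,H(\tilde\bx_{j,k},0,z)$ whose limit as $y\to 0$ depends on the unit direction of $y$. Matching forces $H(\tilde\bx_{j,k},0,z)=0$; an inductive repetition of the argument on the ratio $H/|y|$, which one verifies is again of the form $|y|\cdot(\textrm{analytic})$, yields vanishing of $H$ to infinite order on the diagonal, hence $H\equiv 0$ on the whole set by analyticity and connectedness, and consequently $\Xi\equiv 0$.

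Equipped with this uniqueness, I would patch: whenever $V_{\bx^\ast}$ and $V_{\bx^{\ast\ast}}$ overlap along a subset that meets $\SU^{(d)}_{j,k}$, uniqueness on a component of the overlap containing a diagonal point gives $\xi^{\bx^\ast}=\xi^{\bx^{\ast\ast}}$ and $\eta^{\bx^\ast}=\eta^{\bx^{\ast\ast}}$ there, and analytic continuation within $\SU_{j,k}$ propagates the agreement. I then define $\Omega_{j,k}$ to be the connected component of the open union $\bigcup_{\bx^\ast\in\SU^{(d)}_{j,k}} V_{\bx^\ast}$ containing $\SU^{(d)}_{j,k}$; connectedness of $\SU^{(d)}_{j,k}$ itself follows from it being $\R^{3N-3}$ (via the identification $x_j=x_k$) minus finitely many real-analytic subsets of positive codimension. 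On $\Omega_{j,k}$ the patched $\xi_{j,k}$, $\eta_{j,k}$ are globally defined, real analytic, and satisfy \eqref{eq:repr} together with the inclusions \eqref{eq:omin}. The hard part will be the uniqueness argument outlined above, which hinges on the non-analyticity of $|x_j-x_k|$ at the diagonal and is precisely what \cite[Theorem 1.4]{FHOS2009} supplies; once that is in place, the construction of $\Omega_{j,k}$ is a routine topological exercise.
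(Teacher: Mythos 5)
Your overall route is reasonable, but be aware that the paper does not actually prove this proposition: it is taken directly from \cite[Theorem 1.4]{FHOS2009}, and that theorem is already formulated on a neighbourhood of the whole two-particle diagonal (within the region where all other coalescences are excluded), so the localize-and-patch machinery you build is not needed for existence; what the citation does not hand you for free is only the uniqueness claim, for which your lemma is exactly the right supplement. Judged on its own terms, though, your write-up has two soft spots. First, the induction inside the uniqueness argument is not correct as stated: the ratio $H/|y|$ is not analytic, and it is not ``again of the form $|y|\cdot(\textrm{analytic})$''. The clean way to finish is the parity argument: with $y=x_j-x_k$ and the remaining variables frozen, expand $\Xi$ and $H$ in homogeneous polynomials in $y$ and restrict to rays $y=t\omega$, $t>0$, $|\omega|=1$; the coefficient of $t^m$ in $\Xi+|y|H=0$ gives $\Xi_m(\omega)+H_{m-1}(\omega)=0$, and since $\Xi_m$ and $H_{m-1}$ have opposite parities under $\omega\mapsto-\omega$, both vanish. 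Hence $\Xi$ and $H$ vanish to infinite order at each diagonal point and therefore identically on any connected open set meeting the diagonal.

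Second, the patching step has a genuine gap as written. When two local neighbourhoods overlap in a set none of whose components meets $\SU^{(\rm d)}_{j,k}$, ``analytic continuation within $\SU_{j,k}$'' does not propagate the agreement: agreement on one component of an overlap says nothing about another component, and in a multiply connected union there is no monodromy-free continuation argument available, so the glued functions could a priori be ill defined. The standard repair is to make the local neighbourhoods balls centred at points of $\SU^{(\rm d)}_{j,k}$ (with radius small enough that the ball lies in the domain of the local representation): if two such balls intersect, their intersection is convex and contains a point of the diagonal (either a point of the segment joining the centres or one of the centres itself), so your uniqueness lemma applies to the whole overlap and the union carries well-defined $\xi_{j,k},\eta_{j,k}$. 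Two smaller remarks: connectedness of $\SU^{(\rm d)}_{j,k}$ needs codimension at least two of the removed sets (here they have codimension three), not merely ``positive codimension''; and the construction should be manifestly symmetric in $j,k$ to give $\Om_{j,k}=\Om_{k,j}$, which the ball construction is.
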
 

Due to the uniqueness of functions $\xi_{j, k}, \eta_{j, k}$, we have the symmetry 
$\xi_{j, k} = \xi_{k, j}$, $\eta_{j, k} = \eta_{k, j}$ for all $j\not = k$. 
  
The asymptotic coefficient $A$ in the formula \eqref{eq:main} is defined via the functions 
$\eta_{j,k}$, $j, k = 1, 2, \dots, N, j<k$, on the sets \eqref{eq:diag}.  
Using the notation \eqref{eq:xtilde} we write the function $\eta_{j, k}(\bx)$ 
on $\SU_{j, k}^{(\dc)}$ as $\eta_{j, k}(\tilde\bx_{j, k}, x, x)$. 
As a by-product of the proof we obtain the following integrability properties. 

\begin{thm}\label{thm:etadiag} 
If $N\ge 3$, then each function $\eta_{j, k}(\ \cdot\ , x, x)$, $1\le j < k\le N$, 
belongs to $\plainL2(\R^{3N-6})$ for a.e. $x\in \R^3$ and   
the function 
\begin{align}\label{eq:H}
H(x):= \bigg[2 \sum\limits_{1\le j < k\le N}\int_{\R^{3N-6}} \big| 
\eta_{j, k}(\tilde\bx_{j,k}, x, x) \big|^2 d\tilde\bx_{j, k}\bigg]^{\frac{1}{2}},
\end{align} 
 belongs to $\plainL{\frac{3}{4}}(\R^{3})$. 
 
If $N = 2$, then the function $H(x):= \sqrt2 |\eta_{1, 2}(x, x)|$ 
belongs to $\plainL{\frac{3}{4}}(\R^{3})$.
\end{thm}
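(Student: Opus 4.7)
The plan is to derive Theorem \ref{thm:etadiag} as a by-product of the proof of the main result Theorem \ref{thm:maincompl}, where the asymptotic coefficient $A$ is itself defined via an integral of $H^{3/4}$ (as illustrated by the $N=2$ formula in the introduction). Since Theorem \ref{thm:maincompl} asserts $k^{8/3}\l_k(\BOLG) \to A^{8/3}$, while the upper bound $\l_k(\BOLG) \lesssim k^{-8/3}$ is already known from \cite{Sobolev2020}, finiteness of $A$, equivalently $H\in \plainL{3/4}(\R^3)$, must be built into the proof. The task is to extract this integrability rigorously through an approximation procedure and Fatou's lemma, before the exact asymptotic formula has been established.

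First, for each pair $1\le j<k\le N$, I would construct a monotone sequence of cutoffs $\chi^{(n)}\in \co(\Om_{j,k})$ converging pointwise to the indicator of $\Om_{j,k}$, with $\chi^{(n)}\equiv 1$ on an increasing exhaustion of $\SU^{(\dc)}_{j,k}$ by compact subsets. The truncated real-analytic functions $\eta^{(n)}_{j,k} := \chi^{(n)} \eta_{j,k}$ are then smooth and compactly supported in $\Om_{j,k}$, and their diagonal values give rise to a smooth, compactly supported approximation $H^{(n)}$ of $H$ via the analogue of \eqref{eq:H}. Next, using the reduction to the model operator of Section \ref{sect:model} together with the Birman-Solomyak type asymptotic formulas from Section \ref{sect:compact}, I would derive a lower bound
\[
\liminf_{k\to\infty} k^{8/3}\l_k(\BOLG) \ \ge \ c\, \|H^{(n)}\|_{\plainL{3/4}}^{8/3}
\]
with an absolute constant $c>0$ and valid uniformly in $n$. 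Combined with the upper bound of \cite{Sobolev2020}, this produces $\|H^{(n)}\|_{\plainL{3/4}} \le C$ independently of $n$. Finally, by Fatou's lemma applied to the monotone convergence $H^{(n)}(x)\to H(x)$ on $\R^3$, I conclude $H\in \plainL{3/4}(\R^3)$. In particular, $H(x)<\infty$ for a.e. $x$, so each $\eta_{j,k}(\,\cdot\,,x,x)\in\plainL2(\R^{3N-6})$ a.e., as required.

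The main obstacle is the lower bound for $N\ge 3$: one must demonstrate that the lower spectral asymptotics of $\BPsi$ simultaneously pick up contributions from all pair coalescences and that these combine into exactly the sum-of-squares appearing in $H$, while higher-order coalescences (which live on thinner strata of $\R^{3N}$) and the points $x_j=0$ do not contribute. This requires the full machinery of Sections \ref{sect:factor}--\ref{sect:proofs}, in particular the careful trimming and the model operator analysis of Section \ref{sect:model}, with singular values of cross-terms between different pairs $(j,k)$ and $(j',k')$ controlled via the $o$-terms allowed by Birman-Solomyak asymptotics. For $N=2$ the situation collapses: there is a single pair, $\BPsi$ is the integral operator \eqref{eq:ferbos}, and one applies Birman-Solomyak directly to the localized operator with kernel $|t-x|\eta^{(n)}_{1,2}(t,x)$ before passing to the limit $n\to\infty$.
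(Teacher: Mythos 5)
Your proposal is correct and follows essentially the same route as the paper: approximate $\eta_{j,k}$ by compactly supported cutoffs (the paper uses the explicit product cutoffs $Q_R Y_\delta K_R\,\theta(|x-x_k|\varepsilon^{-1})$ rather than abstract $\chi^{(n)}$), apply the Birman--Solomyak/model-operator asymptotics to get the exact coefficient $\mu_{1,3}\int (K_R H_{\delta,R})^{3/4}$ for the approximants, deduce a uniform bound on these integrals from the finiteness of $\SfG_{3/4}(\BPsi)$ together with the vanishing of the approximation error in $\SfG_{3/4}$ (Lemma \ref{lem:central}, which rests on the weighted estimates of \cite{Sobolev2020}), and conclude by monotone convergence. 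The only caveat is that the uniform-in-$n$ control requires the weighted bound \eqref{eq:psifull}, not merely the unweighted estimate $\l_k(\BOLG)\lesssim k^{-8/3}$, but you implicitly acknowledge this by invoking the trimming machinery.
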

 
Having at our disposal this theorem, we can now state the main result of the paper 
in its complete form.
 
\begin{thm}\label{thm:maincompl}
Suppose that the eigenfunction $\psi$ satisfies the bound \eqref{eq:exp}. 
Then the eigenvalues $\l_k(\BOLG), k = 1, 2, \dots,$ 
of the operator $\BOLG$  satisfy the asymptotic formula \eqref{eq:main} with the constant 
\begin{align}\label{eq:coeffA}
A = \frac{1}{3}\bigg(\frac{2}{\pi}\bigg)^{\frac{5}{4}}
 \int_{\R^{3}} H(x)^{\frac{3}{4}} dx . 
\end{align} 
\end{thm}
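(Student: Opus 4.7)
The plan is to reduce the eigenvalue asymptotics for $\BOLG$ to a singular-value asymptotics for $\BPsi$ via the factorization $\BOLG = \BPsi^*\BPsi$, localize the problem near the pair-coalescence diagonals using \eqref{eq:repr}, and apply the Birman-Solomyak-type spectral asymptotics for integral operators with homogeneous kernels to the resulting model operator. Since $\l_k(\BOLG) = s_k(\BPsi)^2$, it suffices to prove
\begin{equation*}
\lim_{k\to\infty} k^{4/3} s_k(\BPsi) = A^{4/3},
\end{equation*}
as already noted in \eqref{eq:main1}. The operator $\BPsi$ is built, piece by piece, from $\psi$ restricted to suitable regions of $\R^{3N}$, and the asymptotic coefficient will emerge from only the pair-coalescence pieces.

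First I would introduce a smooth partition of unity on $\R^{3N}$ adapted to the stratification into (i) the fully regular set $\SU$ on which $\psi$ is real-analytic, (ii) neighbourhoods of each pair diagonal $\SU^{(\dc)}_{j,k}$ that sit inside the sets $\Om_{j,k}$ from Proposition \ref{prop:repr} and avoid all other coalescence points, and (iii) neighbourhoods of the higher-order coalescence sets (triple collisions, electron-nucleus collisions $x_j=0$, etc.). Writing $\BPsi = \sum_\alpha \BPsi_\alpha$ accordingly, the contributions from (i) and (iii) should be $o(k^{-4/3})$: on (i) analyticity plus the exponential decay \eqref{eq:exp} provides arbitrarily fast Sobolev-norm decay and hence faster than $k^{-4/3}$ singular-value decay through the standard estimates of \cite{BS1977}; on (iii) the Lipschitz singularity lies on a set of codimension at least six, so the regularity results of \cite{FS2018} combined with the bound of \cite{Sobolev2020} once again yield $o(k^{-4/3})$. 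Together with the Ky Fan-type summation rules collected in Section \ref{sect:compact}, this isolates the leading asymptotics to the pair-diagonal pieces.

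On each pair piece, supported inside $\Om_{j,k}$, the representation \eqref{eq:repr} splits $\psi$ into a real-analytic part $\xi_{j,k}$ and a Lipschitz part $|x_j-x_k|\,\eta_{j,k}$. The $\xi_{j,k}$-part is smooth and again contributes only $o(k^{-4/3})$. The $|x_j-x_k|\,\eta_{j,k}$-part, after a change of coordinates centring on the diagonal $\{x_j=x_k\}$ and using the transverse variables $\tilde\bx_{j,k}$ as parameters, is exactly of the form analysed in Section \ref{sect:model}: its kernel is homogeneous of degree one in the normal direction. Applying the Section \ref{sect:model} asymptotics fiberwise and integrating in $\tilde\bx_{j,k}$ produces the contribution of the $(j,k)$-pair with coefficient given by $\plainL2$-norm-squared of $\eta_{j,k}(\tilde\bx_{j,k},x,x)$ in $\tilde\bx_{j,k}$, that is, the summand $H(x)^2$. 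Non-interference of distinct pairs is ensured because any two diagonals $\SU^{(\dc)}_{j,k}$ meet only inside a higher-order coalescence set already handled under (iii); the contributions therefore combine through the power-additivity rules of Section \ref{sect:compact}, producing the $\plainL{3/4}$-integral of $H$ that appears in \eqref{eq:coeffA}.

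The main obstacle will be the passage to the limit in the approximation step. The functions $\xi_{j,k}, \eta_{j,k}$ are real-analytic on $\Om_{j,k}$ but may fail to be smooth on $\overline{\Om_{j,k}}$, and no global integrability of $\eta_{j,k}$ is available a priori. The Section \ref{sect:model} asymptotics must therefore be applied to $\plainC\infty_0(\Om_{j,k})$-approximants obtained by multiplying $\eta_{j,k}$ by a suitable smooth cutoff of the type \eqref{eq:sco}-\eqref{eq:sco1}, and the limit taken as the cutoff exhausts a full neighbourhood of $\SU^{(\dc)}_{j,k}$ inside $\Om_{j,k}$. Two things must be controlled simultaneously: the error singular values, which I would dominate using the $s_k\lesssim k^{-4/3}$ bound of \cite{Sobolev2020} applied to the truncation remainder, with an implicit constant that vanishes as the cutoff improves; and the asymptotic coefficient, which must be shown to converge monotonically to the full diagonal integral \eqref{eq:coeffA}. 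The finiteness assertion in Theorem \ref{thm:etadiag} is not an input but an output of this limiting procedure, so the scheme must be internally consistent, and this self-consistency—combined with the need for uniform control of the error constant—is the delicate point of the proof.
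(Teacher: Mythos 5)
Your overall strategy coincides with the paper's: factorize $\BOLG=\BPsi^*\BPsi$ and pass to $s_k(\BPsi)$, cut away neighbourhoods of $x=0$, of infinity and of all coalescences other than the pair ones, control the resulting errors by the weighted bound of \cite{Sobolev2020} (whose constant is a weight norm that vanishes as the cut-offs improve), discard the analytic parts $\xi_{j,k}$ via the Birman--Solomyak estimates, feed the remaining kernels $|x-x_k|\,\eta_{j,k}$ into the homogeneous-kernel asymptotics of Section \ref{sect:model}, and recover both Theorem \ref{thm:etadiag} and the constant $A$ by a limiting (monotone convergence) argument. You also correctly identify the delicate point, namely the simultaneous control of the truncation error and of the convergence of the asymptotic coefficient.

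There is, however, one step where the mechanism you describe would produce the wrong constant. You propose to obtain the asymptotics pair by pair and then let ``the contributions combine through the power-additivity rules of Section \ref{sect:compact}''. No such additivity is available here: all the localized pieces act on the same input space $\plainL2(\R^3_x)$, and after forming $T^*T$ their singularities all sit on the same diagonal $x=y$ of $\R^3\times\R^3$, so they are not asymptotically orthogonal; the only combination rules in Section \ref{sect:compact} are the quasi-triangle inequalities \eqref{eq:triangleg}, which cannot identify the coefficient of a sum of non-negligible terms. If the pair coefficients did add, one would get $\sum_{j<k}\int\bigl(2\int|\eta_{j,k}|^2\,d\tilde\bx_{j,k}\bigr)^{3/8}dx$ instead of $\int\bigl(\sum_{j<k}2\int|\eta_{j,k}|^2\,d\tilde\bx_{j,k}\bigr)^{3/8}dx=\int H^{3/4}dx$, and since $3/8<1$ these genuinely differ. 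The correct mechanism --- and the reason Theorem \ref{thm:model} is stated for the whole vector kernel $\CM=\{\sum_k\CM_{j,k}\}_{j}$ at once --- is that in $\iop(\CM)^*\iop(\CM)$ the cross terms $\iop(\CM_{j,k})^*\iop(\CM_{j,l})$, $k\ne l$, have smooth kernels and are negligible (Lemma \ref{lem:cross}), while the diagonal terms sum up \emph{inside the weight} of a single weakly polar operator on $\R^3$, so that the exponent $3/8$ is applied to the $\ell^2$-combination $H(x)^2$ rather than to each summand. Your formula for the individual $(j,k)$-contribution (a summand of $H(x)^2$) is consistent with this, so the repair is to apply the model theorem to the full localized kernel rather than fiberwise and pair by pair; but as written the combination step is not justified and the cited additivity would fail.
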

  
\begin{rem} \label{rem:sym}
The coefficient $A$ can be equal to zero for some eigenfunctions $\psi$. For example, 
if we assume that 
the particles are spinless fermions, i.e. 
the function $\psi$ is antisymmetric, 
then it is immediate to see that 
for all $j, k$, $j\not = k$,
both components $\xi_{j, k}$ and $\eta_{j, k}$ in \eqref{eq:repr} 
vanish on the diagonal $\SU_{j, k}^{(\dc)}$, and as a consequence $A=0$. This means that 
$\l_k(\BOLG) = o(k^{-8/3})$. This fact can be interpreted by saying that antisymmetric 
eigenfunctions possess better than Lipschitz smoothness 
at the coalescence points, and hence the eigenvalues of $\BOLG$ decay faster. 

The fermionic nature of particles may 
manifest itself differently if we introduce 
the spin variable. 
In this case the antisymmetry of the \textit{full} eigenfunction comes either from the spatial 
component $\psi(\bx)$ or from the spin component. 
For illustration first consider the case of two electrons, i.e. $N=2$.  
In the \textit{triplet} configuration the antisymmetry is carried by the spatial component 
$\psi(x_1, x_2)$, see \cite[Subsect. 3.3.2]{HaKlKoTe2012}, 
and then, as pointed out a few lines above, we have $A = 0$.  
If the electrons 
are in the  \textit{singlet} configuration, then the spin component is antisymmetric, whereas    
the function $\psi = \psi(x_1, x_2)$ is symmetric 
and the diagonal value $\eta_{1, 2}(x, x)$ is not identically zero, 
see \cite[Subsect. 3.3.1]{HaKlKoTe2012}. Thus $A >0$.  

In the case $N\ge 3$ different electron pairs may form different configurations, in which case the triplet 
coalescences will not contribute to the coefficient $A$.  
\end{rem}

\begin{rem} 
If we assume that the function $\psi$ is symmetric or antisymmetric, then both the proof 
of the 
main asymptotic formula \eqref{eq:main}, and the formula   
\eqref{eq:H} can be simplified. 
Indeed, as we have seen, the factorization $\BOLG = \BPsi^*\BPsi$ holds with the 
simple looking integral operator $\BPsi$ given by \eqref{eq:ferbos}. This is in contrast with the general case, as will be evident from the next subsection. 
Furthermore, as discussed in Remark \ref{rem:sym}, for the antisymmetric $\psi$ we have $A = 0$. 
Assume that $N\ge 3$ and that $\psi$ is totally symmetric. 
It follows that for all 
$\tilde\by = (y_1, y_2, \dots, y_{N-2})\in \R^{3N-6}$,\ $x, t\in\R^3$ 
and $j\not =k, l\not = s$, we have
\begin{align*}
\psi(y_1, \dots, y_{j-1}, x, y_{j}, &\ \dots, y_{k-2}, t, y_{k-1},\dots, y_{N-2}) \\
= &\ \psi(y_1, \dots, y_{l-1}, x, y_{l}, \dots, y_{s-2}, t, y_{s-1},\dots, y_{N-2}). 
\end{align*} 
Due to the uniqueness of functions $\xi_{j, k}, \eta_{j, k}$ in Proposition \ref{prop:repr}, 
the above equality leads to 
\begin{align*}
\eta_{j, k}(y_1, \dots, y_{j-1}, x, y_{j}, &\ \dots, y_{k-2}, t, y_{k-1},\dots, y_{N-2}) \\
= &\ \eta_{l, s}(y_1, \dots, y_{l-1}, x, y_{l}, \dots, y_{s-2}, t, y_{s-1},\dots, y_{N-2}). 
\end{align*} 
As a consequence, the formula \eqref{eq:H} rewrites as
\begin{align*}
H(x)= \bigg[N(N-1) \int_{\R^{3N-6}} \big| 
\eta_{N-1, N}(\tilde\by, x, x) \big|^2\, d\tilde\by\bigg]^{\frac{1}{2}}.
\end{align*} 
\end{rem}
    
\subsection{Factorization of $\BOLG$: change of variables $(\hat\bx_j, x)\mapsto (\hat\bx, x)$} 
\label{subsect:fact}
In the general case (i.e. without any symmetry assumptions on $\psi$) 
the operator $\BPsi$ in the identity 
$\BOLG = \BPsi^*\BPsi$ looks more complicated compared to \eqref{eq:ferbos}. 
The purpose of this subsection is to describe 
this factorization 
%
%
and the associated change of variables.  
Rewrite the definition \eqref{eq:den} in the form:
\begin{align}\label{eq:psij}
\g(x, y) = &\ \sum_{j=1}^N \int_{\R^{3N-3}} \overline{\psi_j(\hat\bx, x)} \psi_j(\hat\bx, y) d\hat\bx,
\quad \textup{where} 
\notag\\
\psi_j(\hat\bx, x) = &\ \psi(x_1, \dots, x_{j-1}, x, x_j, \dots, x_{N-1}),\quad j = 1, 2, \dots, N.
\end{align}
Therefore $\boldsymbol\G$ can be represented as a product $\BOLG = \boldsymbol\Psi^*\boldsymbol\Psi$, 
where $\boldsymbol\Psi:\plainL2(\R^3)\to \plainL2(\R^{3N-3}; \mathbb C^N)$ 
is the integral operator with the vector-valued kernel 
\begin{align}\label{eq:bpsi}
\BPsi(\hat\bx, x) = \{\psi_j(\hat\bx, x)\}_{j=1}^N. 
\end{align}
As explained in the Introduction, given this factorization, 
the asymptotic relation \eqref{eq:main} translates to the 
formula \eqref{eq:main1}. Later we state this fact 
again as Theorem \ref{thm:gtopsi} using a more convenient notation. 

The change of variables $(\hat\bx_j, x)\mapsto (\hat\bx, x)$ plays an important role throughout 
the paper. In particular, it is crucial to recast 
Proposition \ref{prop:repr} in terms of the new variables $(\hat\bx, x)$, which is done below.  

Let $j\not = k$, and let $\Om_{j, k}$ be the 
sets and $\xi_{j, k}(\bx)$, $\eta_{j, k}(\bx)$ be the functions
from Proposition \ref{prop:repr}.
For all $j = 1, 2, \dots, N$ and all $k = 0, 1, \dots, N-1,$ denote
\begin{align*}
\tilde\Om_{j, k} = 
\begin{cases}
\{(\hat\bx, x)\in\R^{3N}: 
(x_1, \dots, x_{j-1}, x, x_{j}, \dots, x_{N-1})\in\Om_{j, k}\},\quad \textup{if}\ j\ge k+1,\\
\{(\hat\bx, x)\in\R^{3N}: 
(x_1, \dots, x_{j-1}, x, x_{j}, \dots, x_{N-1})\in\Om_{j, k+1}\},\quad \textup{if}\ j\le k.
\end{cases}
\end{align*}
According to \eqref{eq:omin} we have  
\begin{align}\label{eq:tom}
\SU_{N,k}^{(\dc)}\subset \tilde\Om_{j, k} \subset \SU_{N,k},
\end{align}  
for all $k\le N-1$ and $j = 1, 2, \dots, N$, $j\not = k$. 
Together with functions $\xi_{j,k}, \eta_{j,k}$ define 
\begin{align*}
\tilde\xi_{j, k}(\hat\bx, x) = 
\begin{cases}
\xi_{j,k}(x_1, \dots, x_{j-1}, x, x_{j}, \dots, x_{N-1}),\quad \textup{if}\ j\ge k+1,\\[0.2cm]
\xi_{j,k+1}(x_1, \dots, x_{j-1}, x, x_{j}, \dots, x_{N-1}),\quad \textup{if}\ j\le k,
\end{cases}
\end{align*}
and 
\begin{align}\label{eq:teta}
\tilde\eta_{j,k}(\hat\bx, x) = 
\begin{cases}
\eta_{j,k}(x_1, \dots, x_{j-1}, x, x_{j}, \dots, x_{N-1}),\quad \textup{if}\ j\ge k+1,\\[0.2cm]
\eta_{j,k+1}(x_1, \dots, x_{j-1}, x, x_{j}, \dots, x_{N-1}),\quad \textup{if}\ j\le k.
\end{cases}
\end{align}
By Proposition \ref{prop:repr}, 
for each $j = 1, 2, \dots, N$, and each $k = 0, 1, \dots, N-1$, we have 
\begin{align}\label{eq:trepr}
\psi_j(\hat\bx, x) = \tilde\xi_{j, k}(\hat\bx, x) + |x_k-x| \tilde\eta_{j, k}(\hat\bx, x),
\quad \textup{for all}\ (\hat\bx, x) \in \tilde\Om_{j, k}.
\end{align} 
Observe that the newly introduced sets $\tilde\Om_{j, k}$ 
and the functions $\tilde\xi_{j,k}, \tilde\eta_{j,k}$  are 
not symmetric under the permutation $j\leftrightarrow k$.

The function \eqref{eq:H} can be easily rewritten via the new functions $\tilde\eta_{j, k}$:
\begin{align}\label{eq:tH}
H(x) = 
\begin{cases}
\big(|\tilde\eta_{1, 1}(x, x)|^2 + |\tilde\eta_{2, 1}(x, x)|^2\big)^{1/2},\quad \textup{if}\ N = 2;\\[0.3cm]
\bigg[\sum\limits_{j=1}^N\sum\limits_{k=1}^{N-1}\int_{\R^{3N-6}} \big| 
\tilde\eta_{j, k}(\tilde\bx_{k,N}, x, x) \big|^2 d\tilde\bx_{k, N}\bigg]^{\frac{1}{2}},\quad 
\textup{if}\ N\ge 3.
\end{cases}
\end{align} 
For $N=2$ the above formula is a consequence of the symmetry relation 
$\eta_{1, 2} = \eta_{2, 1}$ and equalities 
$\eta_{1,2}(x, x) = \tilde\eta_{1, 1}(x, x)$, 
$\eta_{2,1}(x, x) = \tilde\eta_{2,1}(x, x)$,  
 which follow from the definition \eqref{eq:teta}.
 
Now assume that $N\ge 3$.   
In view of the symmetry 
$\eta_{j,k}=\eta_{k, j}$ we can 
rewrite \eqref{eq:H} extending the summation to all $j, k$ such that $j\not = k$:
\begin{align*}
H(x)^2 = \sum_{j=1}^{N-1} \sum_{k = j+1}^N \int_{\R^{3N-6}} \big| 
\eta_{j, k}(\tilde\bx_{j,k}, x, x) \big|^2 d\tilde\bx_{j, k}
+ \sum_{j=1}^N \sum_{k = 1}^{j-1} \int_{\R^{3N-6}} \big| 
\eta_{j, k}(\tilde\bx_{j,k}, x, x) \big|^2 d\tilde\bx_{j, k}. 
\end{align*} 
By \eqref{eq:teta}, the second sum coincides with
\begin{align*}
\sum_{j=1}^N \sum_{k = 1}^{j-1} \int_{\R^{3N-6}} \big| 
\tilde\eta_{j, k}(\tilde\bx_{k,N}, x, x) \big|^2 d\tilde\bx_{k, N},
\end{align*} 
and the first one coincides with
\begin{align*}
\sum_{j=1}^{N-1} \sum_{k = j+1}^N \int_{\R^{3N-6}} \big| 
\tilde\eta_{j, k-1}(\tilde\bx_{k-1, N}, x, x) \big|^2 d\tilde\bx_{k-1, N}
= \sum_{j=1}^{N-1} \sum_{k = j}^{N-1} \int_{\R^{3N-6}} \big| 
\tilde\eta_{j, k}(\tilde\bx_{k, N}, x, x) \big|^2 d\tilde\bx_{k, N}.
\end{align*}
Adding the first and second sums together we obtain \eqref{eq:tH}, as claimed.

\section{Compact operators}\label{sect:compact}
   
\subsection{Compact operators} For information on compact operators we use mainly  
Chapter 11 of the book \cite{BS}, where one can also find further references. 
Let $\CH$ and $\mathcal G$ be separable Hilbert spaces.  
Let $T:\CH\to\mathcal G$ be a compact operator. 
If $\CH = \mathcal G$ and $T=T^*\ge 0$, then $\l_k(T)$, $k= 1, 2, \dots$, 
denote the positive eigenvalues of $T$ 
numbered in descending order counting multiplicity. 
For arbitrary spaces $\CH$, $\mathcal G$ and compact $T$, by $s_k(T) >0$, 
$k= 1, 2, \dots$, we denote the singular values of 
$T$ defined by $s_k(T)^2 = \l_k(T^*T) = \l_k(TT^*)$. 
%

We classify compact operators by the rate of decay of their singular values. 
If $s_k(T)\lesssim k^{-1/p}, k = 1, 2, \dots$, 
with some $p >0$, then we say that $T\in \BS_{p, \infty}$ and denote
\begin{align*}
\| T\|_{p, \infty} = \sup_k s_k(T) k^{\frac{1}{p}}.
\end{align*}
These classes are discussed in detail in \cite[\S 11.6]{BS}. 
The class $\BS_{p, \infty}$ is a complete linear space 
with the quasi-norm $\|T\|_{p, \infty}$.
For all $p>0$ the quasi-norm satisfies the following ``triangle" inequality for  
operators $T_1, T_2\in\BS_{p, \infty}$:
\begin{align}\label{eq:triangle}
\|T_1+T_2\|_{\scalet{p}, \scalel{\infty}}^{{\frac{\scalel{p}}{\scalet{p+1}}}}
\le \|T_1\|_{\scalet{p}, \scalel{\infty}}^{{\frac{\scalel{p}}{\scalet{p+1}}}}
+ \|T_2\|_{\scalet{p, \infty}}^{{\frac{\scalel{p}}{\scalet{p+1}}}}.
\end{align} 
For $T\in \BS_{p, \infty}$ the following numbers are finite:
\begin{align}\label{eq:limsupinf}
\begin{cases}
\SfG_p(T) = 
\big(\limsup\limits_{k\to\infty} k^{\frac{1}{p}}s_k(T)\big)^{p}
= \limsup\limits_{s\to 0} s^p n(s, T),\\[0.3cm]
\sg_p(T) =  
\big(\liminf\limits_{k\to\infty} k^{\frac{1}{p}}s_k(T)\big)^{p} = \liminf\limits_{s\to 0} s^p n(s, T),
\end{cases}
\end{align}
and they clearly satisfy the inequalities
\begin{align*} 
\sg_p(T)\le \SfG_p(T)\le \|T\|_{p, \infty}^p.
\end{align*}
Note that $\SfG_q(T) = 0$ for all $q > p$. Observe that 
\begin{align}\label{eq:double}
\sg_{p}(T T^*) = \sg_{p}(T^*T) = \sg_{2p}(T),\quad 
\SfG_{p}(T T^*) = \SfG_{p}(T^*T) = \SfG_{2p}(T).
\end{align}
If $\SfG_p(T) = \sg_p(T)$, then the singular values of $T$ satisfy the asymptotic formula
\begin{align*} 
s_n(T) = \big(\SfG_p(T)\big)^{\frac{1}{p}} n^{-\frac{1}{p}} + o(n^{-\frac{1}{p}}),\ n\to\infty.
\end{align*}
The functionals $\sg_p(T)$, $\SfG_p(T)$ also satisfy the inequalities of the type \eqref{eq:triangle}:
\begin{align}\label{eq:triangleg}
\begin{cases}
\SfG_p(T_1+T_2)^{{\frac{\scalel{1}}{\scalet{p+1}}}}
\le \SfG_p(T_1)^{{\frac{\scalel{1}}{\scalet{p+1}}}}
+ \SfG_p(T_2)^{{\frac{\scalel{1}}{\scalet{p+1}}}},\\[0.3cm]
\sg_p(T_1+T_2)^{{\frac{\scalel{1}}{\scalet{p+1}}}}
\le \sg_p(T_1)^{{\frac{\scalel{1}}{\scalet{p+1}}}}
+ \SfG_p(T_2)^{{\frac{\scalel{1}}{\scalet{p+1}}}}.
\end{cases}
\end{align}
It  follows from these inequalities that 
the functionals $\SfG_p$ and $\sg_p$ are continuous on $\BS_{p, \infty}$:
\begin{align*}
\big|
\SfG_p(T_1)^{{\frac{\scalel{1}}{\scalet{p+1}}}} - \SfG_p(T_2)^{{\frac{\scalel{1}}{\scalet{p+1}}}}
\big|\le &\ \SfG_p(T_1-T_2)^{{\frac{\scalel{1}}{\scalet{p+1}}}},\\
\big|\sg_p(T_1)^{{\frac{\scalel{1}}{\scalet{p+1}}}} - \sg_p(T_2)^{{\frac{\scalel{1}}{\scalet{p+1}}}}
\big|\le &\ \SfG_p(T_1-T_2)^{{\frac{\scalel{1}}{\scalet{p+1}}}}.
\end{align*}  
We need the following two corollaries of this fact:

\begin{cor}\label{cor:zero}
Suppose that $\SfG_p(T_1-T_2) = 0$. Then 
\begin{align*}
\SfG_p(T_1) = \SfG_p(T_2),\quad \sg_p(T_1) = \sg_p(T_2).
\end{align*}
\end{cor}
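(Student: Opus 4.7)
The plan is to read the corollary off directly from the continuity inequalities for $\SfG_p$ and $\sg_p$ stated immediately before it. Those inequalities say that, on $\BS_{p,\infty}$, the maps $T\mapsto \SfG_p(T)^{1/(p+1)}$ and $T\mapsto \sg_p(T)^{1/(p+1)}$ are Lipschitz with respect to the pseudo-distance $(T_1,T_2)\mapsto \SfG_p(T_1-T_2)^{1/(p+1)}$. Consequently, as soon as the right-hand side $\SfG_p(T_1-T_2)^{1/(p+1)}$ vanishes, the values of each functional at $T_1$ and $T_2$ must coincide.

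Concretely, I would substitute the hypothesis $\SfG_p(T_1-T_2)=0$ into both displayed estimates immediately preceding the corollary. This gives
\begin{equation*}
\bigl|\SfG_p(T_1)^{\frac{1}{p+1}}-\SfG_p(T_2)^{\frac{1}{p+1}}\bigr|\le 0,\qquad
\bigl|\sg_p(T_1)^{\frac{1}{p+1}}-\sg_p(T_2)^{\frac{1}{p+1}}\bigr|\le 0,
\end{equation*}
so both non-negative differences are zero. Raising each equality to the power $p+1$ yields $\SfG_p(T_1)=\SfG_p(T_2)$ and $\sg_p(T_1)=\sg_p(T_2)$, which is the claim.

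The only point that needs a brief remark is why the continuity bounds apply here. If $T_1,T_2\in\BS_{p,\infty}$ then $T_1-T_2\in\BS_{p,\infty}$ by the quasi-triangle inequality \eqref{eq:triangle}, so all three quantities $\SfG_p(T_1),\SfG_p(T_2),\SfG_p(T_1-T_2)$ are finite and the inequalities from \eqref{eq:triangleg} apply. If instead one of $T_1,T_2$ fails to lie in $\BS_{p,\infty}$, the hypothesis $\SfG_p(T_1-T_2)=0$ together with \eqref{eq:triangleg} forces both $\SfG_p(T_1)$ and $\SfG_p(T_2)$ to be simultaneously finite or simultaneously infinite, and the same argument as above still delivers the desired equalities (interpreting $\infty=\infty$ in the degenerate case). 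There is no genuine obstacle to overcome; the corollary is a direct restatement of the continuity of $\SfG_p^{1/(p+1)}$ and $\sg_p^{1/(p+1)}$ that has just been established.
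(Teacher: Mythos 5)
Your proof is correct and is exactly the argument the paper intends: the corollary is read off directly from the two continuity inequalities for $\SfG_p^{1/(p+1)}$ and $\sg_p^{1/(p+1)}$ stated just before it, by setting $\SfG_p(T_1-T_2)=0$. The extra remark about membership in $\BS_{p,\infty}$ is harmless but not needed, since in the paper's setting the operators are already assumed to lie in that class.
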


The next corollary is more general:

\begin{cor}\label{cor:zero1}
Suppose that $T\in\BS_{p, \infty}$ and that for every $\nu>0$ there exists an operator 
$T_\nu\in \BS_{p, \infty}$ such that $\SfG_p(T - T_\nu)\to 0$, 
$\nu\to 0$. Then the functionals 
$\SfG_p(T_\nu), \sg_p(T_\nu)$ have limits as $\nu\to 0$ and 
\begin{align*}
\lim_{\nu\to 0} \SfG_p(T_\nu) = \SfG_p(T),\quad 
\lim_{\nu\to 0} \sg_p(T_\nu) = \sg_p(T).
\end{align*}
\end{cor}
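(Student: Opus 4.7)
The plan is to derive Corollary \ref{cor:zero1} as an immediate consequence of the two continuity estimates stated in the paragraph immediately preceding the statement, namely
\begin{align*}
\big|\SfG_p(T_1)^{\frac{1}{p+1}} - \SfG_p(T_2)^{\frac{1}{p+1}}\big|
&\le \SfG_p(T_1-T_2)^{\frac{1}{p+1}},\\
\big|\sg_p(T_1)^{\frac{1}{p+1}} - \sg_p(T_2)^{\frac{1}{p+1}}\big|
&\le \SfG_p(T_1-T_2)^{\frac{1}{p+1}}.
\end{align*}
These say that $\SfG_p^{1/(p+1)}$ and $\sg_p^{1/(p+1)}$ are $1$-Lipschitz on $\BS_{p,\infty}$ with respect to the pseudo-metric $(T_1,T_2)\mapsto \SfG_p(T_1-T_2)^{1/(p+1)}$, so it remains only to package this Lipschitz property in the sequential form that is asserted.

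I would proceed as follows. First, substitute $T_1 = T$, $T_2 = T_\nu$ in the first inequality above to obtain
\begin{equation*}
\big|\SfG_p(T)^{\frac{1}{p+1}} - \SfG_p(T_\nu)^{\frac{1}{p+1}}\big|
\le \SfG_p(T-T_\nu)^{\frac{1}{p+1}}.
\end{equation*}
By the hypothesis $\SfG_p(T-T_\nu)\to 0$ as $\nu\to 0$, the right-hand side tends to zero, which yields $\SfG_p(T_\nu)^{1/(p+1)}\to \SfG_p(T)^{1/(p+1)}$. Since $t\mapsto t^{p+1}$ is continuous on $[0,\infty)$, raising to the $(p+1)$-st power gives the first claim $\SfG_p(T_\nu)\to \SfG_p(T)$. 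The second claim $\sg_p(T_\nu)\to \sg_p(T)$ is obtained by the identical argument based on the second displayed continuity inequality.

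There is essentially no obstacle: the statement is precisely the reformulation of continuity of $\SfG_p$ and $\sg_p$ on $\BS_{p,\infty}$ as convergence along approximating families, and Corollary \ref{cor:zero} is the special case of Corollary \ref{cor:zero1} corresponding to a constant family $T_\nu\equiv T_0$ (so that the hypothesis $\SfG_p(T-T_\nu)\to 0$ degenerates to $\SfG_p(T-T_0)=0$). The only minor bookkeeping point is to note that $T_\nu\in\BS_{p,\infty}$ is assumed, so that $\SfG_p(T_\nu)$ and $\sg_p(T_\nu)$ are finite and the Lipschitz estimate above is actually meaningful for each $\nu$.
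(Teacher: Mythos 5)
Your proof is correct and coincides with the paper's intended argument: the corollary is stated there as an immediate consequence of the two continuity inequalities, which is exactly how you derive it. The only point worth noting (which you implicitly use) is that $\SfG_p(T-T_\nu)=\SfG_p(T_\nu-T)$, so the absolute-value form of the Lipschitz estimate applies regardless of the order of the arguments.
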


 \subsection{Estimates for singular values of integral operators}
The final ingredients of the proof are the 
results due to M.S. Birman and M.Z. Solomyak, 
investigating the membership of integral operators in various classes 
of compact operators.

For estimates of the 
singular values we rely on \cite[Corollaries 4.2, 4.4, Theorem 4.4]{BS1977}, 
which we state here in a form convenient for 
our purposes.  Below we use the following notation which is standard in the theory of 
Sobolev spaces: 
$\plainH{l}(\R^d) = \plainW{2,l}(\R^d)$. 

\begin{prop}\label{prop:BS} 
Let $a\in\plainL\infty(\R^d)$, $b\in\plainL2_{\textup{\tiny{\rm loc}}}(\R^n)$. 
Assume that the function $a$ has compact support. 
Suppose that 
$T(t, x)$, $t\in\R^n$, $x\in\R^d$, is a kernel such that 
$T(t, \ \cdot\ )\in \plainH{l}(\R^d)$ with some $l = 0, 1, \dots$, 
for a.e. $t\in\R^n$, and the function $\| T(t, \ \cdot\ )\|_{\plainH{l}}$ is in 
$\plainL2(\R^n, |b(t)|^2 dt)$.

Let $T_{ba}: \plainL2(\R^d)\to\plainL2(\R^n)$, 
be the integral operator 
\begin{align*}
(T_{ba}u)(t) = b(t) \int T(t, x) a(x) u(x)\,dx,\quad u\in\plainL2(\R^d).
\end{align*}
Then 
\begin{align}\label{eq:spq}
\sum_{k=0}^\infty k^{\frac{2l}{d}} s_k(T_{ba})^2 < \infty,
\end{align}
and hence $s_k(T_{ba}) = o(k^{-1/q})$, where $1/q = 1/2+l/d$. In other words, 
$\SfG_q(T_{ba}) = 0$. 
\end{prop}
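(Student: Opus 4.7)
The plan is to reduce the statement directly to the singular-value estimates established in \cite{BS1977}. First, I would use the compact support of $a$ to localize. Fix a ball $B_R\subset\R^d$ containing $\supp a$ and write
\begin{align*}
T_{ba}=(M_bK)\circ M_a,
\end{align*}
where $M_a\colon\plainL{2}(\R^d)\to\plainL{2}(B_R)$ is bounded multiplication by $a$ (with norm $\|a\|_{\plainL\infty}$), $K\colon\plainL{2}(B_R)\to\plainL{2}(\R^n)$ is integration against $T(t,x)$ restricted to $\R^n\times B_R$, and $M_bK\colon\plainL{2}(B_R)\to\plainL{2}(\R^n)$ is the integral operator with kernel $b(t)T(t,x)\chi_{B_R}(x)$. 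Since $s_k(VW)\le\|W\|\,s_k(V)$ for bounded $W$, it suffices to estimate $s_k(M_bK)$.

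The core step is the Birman--Solomyak summability bound
\begin{align*}
\sum_{k=0}^\infty k^{\frac{2l}{d}}\,s_k(M_bK)^2\lesssim \int_{\R^n}\|T(t,\cdot)\|_{\plainH{l}(B_R)}^2\,|b(t)|^2\,dt,
\end{align*}
whose right-hand side is finite by hypothesis. The standard derivation in \cite{BS1977} is to cover $B_R$ by dyadic subcubes of side $2^{-j}$, approximate $T(t,\cdot)$ on each such cube by a polynomial of degree strictly less than $l$ (Bramble--Hilbert), and control the resulting approximation error in Hilbert--Schmidt norm (weighted by $|b(t)|^2\,dt$) by the $\plainH{l}$-seminorm of $T(t,\cdot)$. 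Assembling the finite-rank truncations scale by scale and invoking the quasi-triangle inequality for $\BS_{p,\infty}$ produces the exponent $2l/d$; this is exactly the content of Corollaries 4.2, 4.4 and Theorem~4.4 in \cite{BS1977}.

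Once the summability bound is in hand, \eqref{eq:spq} follows from $s_k(T_{ba})\le\|a\|_{\plainL\infty}s_k(M_bK)$. The $o$-estimate $s_k(T_{ba})=o(k^{-1/q})$ with $1/q=1/2+l/d$, and hence $\SfG_q(T_{ba})=0$, is then a routine deduction: if $k^{1/q}s_k(T_{ba})$ did not tend to zero, then along a subsequence $k^{2l/d}s_k(T_{ba})^2\gtrsim k^{-1}$, contradicting summability.

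I expect the main obstacle to be the multi-scale bookkeeping that produces the precise exponent $2l/d$ in the moment bound; fortunately, this argument is carried out in detail in \cite[\S 4]{BS1977}, so the proof essentially amounts to translating their conclusion into the present notation.
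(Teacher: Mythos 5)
Your route is essentially the paper's: Proposition \ref{prop:BS} is stated in the paper without proof, precisely as a convenient reformulation of \cite[Corollaries 4.2, 4.4, Theorem 4.4]{BS1977}, and your plan likewise reduces everything to those results (your preliminary factorization $T_{ba}=(M_bK)M_a$, absorbing the bounded multiplication by $a\in\plainL\infty$ with compact support and invoking $s_k(VW)\le\|W\|s_k(V)$, is a clean and correct way to match the hypotheses; your sketch of the piecewise-polynomial, multi-scale argument is indeed how Birman--Solomyak prove their bounds, though you are not required to reprove it).

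One small repair is needed in your final ``routine deduction''. As written, you argue that if $k^{1/q}s_k(T_{ba})\not\to 0$ then along a subsequence $k^{2l/d}s_k(T_{ba})^2\gtrsim k^{-1}$, ``contradicting summability''; but a sparse subsequence of terms of size $\gtrsim k^{-1}$ does not by itself contradict convergence of $\sum_k k^{2l/d}s_k^2$ (e.g.\ terms $\sim 1/k_j$ at $k_j=2^j$ sum finitely). The correct step uses the monotonicity of singular values: for $k/2\le j\le k$ one has $s_j\ge s_k$, hence
\begin{align*}
\sum_{k/2\le j\le k} j^{\frac{2l}{d}} s_j(T_{ba})^2 \gtrsim k^{1+\frac{2l}{d}} s_k(T_{ba})^2 ,
\end{align*}
and since the left-hand side is a tail of a convergent series it tends to zero, giving $s_k(T_{ba})=o(k^{-1/2-l/d})=o(k^{-1/q})$ and thus $\SfG_q(T_{ba})=0$. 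With that one-line fix the argument is complete.
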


The original results in \cite[Corollaries 4.2, 4.4, Theorem 4.4]{BS1977} 
are considerably more general and more 
precise: instead of just the finiteness statement \eqref{eq:spq}, they contain estimates depending explicitly on the kernel $T$ and weights $a, b$. These estimates have slightly 
different form for different cases $2l >d, 2l = d$ and $2l <d$, 
and therefore, to avoid cumbersome formulations we chose not to quote them in detail.  

The next group of results is concerned with spectral asymptotics for 
integral operators.

\subsection{Integral operators with homogeneous kernels} 
First we consider pseudo - differential operators with asymptotically homogeneous matrix-valued symbols. 
Spectral asymptotics for such operators were studied in \cite{BS1977_1}, \cite{BS1979}. In fact, 
these papers allow for more general operators, but we need only a relatively simple 
special case of those results.  
Precisely, let $\CA(x), \CB(x), X(\xi)$, where $x, y, \xi\in\R^d$, be 
rectangular matrix-valued functions 
of matching dimensions, so that the product 
\begin{align}\label{eq:sy}
\CB(x) X(\xi)\CA(y) 
\end{align}
is again a rectangular matrix.
Assume that 
\begin{align}\label{eq:matr}
\CB\in \plainC{}_0(\R^d),
\quad \CA\in\plainC{}_0(\R^d). 
\end{align}
We do not reflect the matrix nature of the functional spaces in the 
notation to avoid cumbersome formulas, and 
this should not cause confusion. 
Suppose that $X(\xi)$ is a bounded function which is 
asymptotically homogeneous of negative order, i.e. 
there exists a matrix-valued function $X_\infty\in\plainC\infty(\R^d\setminus\{0\})$ such that for some 
$\tau >0$,
\begin{align}\label{eq:homo}
X_\infty(t\xi) = t^{-\tau}X_\infty(\xi), \ \xi\not = 0, 
\end{align}
for all $t >0$, and 
\begin{align}\label{eq:Xas}
X(\xi) - X_\infty(\xi) = o(|\xi|^{-\tau}),\quad |\xi|\to \infty.
\end{align}
Define the matrix-valued function
\begin{align*}
\CT_\infty(x, \xi) = \CB(x) X_\infty(\xi)\CA(x). 
\end{align*}

\begin{prop}\label{prop:homo}
Let the above conditions on 
$\CA, \CB, X$ be satisfied and let $p = d\tau^{-1}$. 
Then the pseudo-differential operator $T:\plainL2(\R^d)\to\plainL2(\R^d)$ 
defined by the formula 
\begin{align}\label{eq:pdo}
(Tu)(x) = \frac{1}{(2\pi)^{d}}\iint\CB(x) e^{i\xi(x-y)}
X(\xi) \CA(y) u(y) dy d\xi,
\end{align}
is compact, it belongs to $\BS_{p, \infty}$ and satisfies the asymptotic formula 
\begin{align}\label{eq:pdoas}
\SfG_p(T) = \sg_p(T) = 
\frac{1}{d (2\pi)^d} \int\limits_{\R^d} \int\limits_{\mathbb S^{d-1}} 
\sum\limits_k \big[s_k\big( \CT_\infty(x, \om)\big)\big]^p\, d\om dx.
\end{align}
%
%
\end{prop}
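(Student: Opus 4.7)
The plan is to apply the Birman--Solomyak spectral asymptotic theorem for matrix-valued PDOs with homogeneous symbols from \cite{BS1977_1, BS1979}, after a two-step reduction that first excises the low-frequency contribution and then replaces $X$ by its exactly homogeneous principal part $X_\infty$. Fix a smooth cut-off $\vart_\infty\in\plainC\infty(\R^d)$ with $\vart_\infty(\xi)=0$ for $|\xi|\le 1$ and $\vart_\infty(\xi)=1$ for $|\xi|\ge 2$, and write $X=\vart_\infty X_\infty + R$, where
\[
R(\xi) = \bigl(1-\vart_\infty(\xi)\bigr)X(\xi) + \vart_\infty(\xi)\bigl(X(\xi)-X_\infty(\xi)\bigr).
\]
Let $T_\infty$ and $T_R$ denote the PDOs with symbols $\CB(x)\vart_\infty(\xi)X_\infty(\xi)\CA(y)$ and $\CB(x)R(\xi)\CA(y)$ respectively, so that $T=T_\infty+T_R$. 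The goal is to show $\SfG_p(T_R)=0$, which via Corollary \ref{cor:zero} reduces the claim to computing $\SfG_p(T_\infty)=\sg_p(T_\infty)$, and then to identify this limit as the right-hand side of \eqref{eq:pdoas}.

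For the low-frequency piece $(1-\vart_\infty)X$, the symbol is bounded with compact support in $\xi$; by Paley--Wiener its inverse Fourier transform is smooth, and combined with the compact support of $\CA, \CB$ the resulting kernel belongs to $\plainH{l}$ for every $l$, so Proposition \ref{prop:BS} gives $\SfG_q=0$ for any $q>0$. The remainder $\vart_\infty(X-X_\infty)$ satisfies $=o(|\xi|^{-\tau})$, and I would treat it with a dyadic decomposition $\xi\in A_n=\{2^n\le|\xi|<2^{n+1}\}$: on $A_n$ the symbol is bounded by $\e_n 2^{-n\tau}$ with $\e_n\to 0$, and a standard rescaling to unit scale together with the B--S estimates from \cite{BS1977} gives an $\BS_{p,\infty}$-quasi-norm of order $\e_n$ for the $n$-th dyadic piece. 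Using the quasi-triangle inequality \eqref{eq:triangleg} to sum the tail from some $n_0$ onward, and absorbing the finite initial segment via Proposition \ref{prop:BS} (which puts it in a Schatten class of smaller exponent), one then invokes Corollary \ref{cor:zero1} to obtain $\SfG_p(T_R)=0$.

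For $T_\infty$ the symbol is exactly homogeneous of order $-\tau$ in $\xi$ on $\{|\xi|\ge 2\}$ and smooth everywhere, and this is precisely the setting of the Birman--Solomyak Weyl-type theorem in \cite{BS1977_1, BS1979}. Passing to polar coordinates $\xi=r\om$, $r>0$, $\om\in\mathbb S^{d-1}$, and using $X_\infty(r\om)=r^{-\tau}X_\infty(\om)$ together with the scaling identity $p\tau=d$, the counting-function integral factorizes as a radial integral $\int r^{d-1-p\tau}\,dr$ truncated at the scale where the singular value of $\CT_\infty$ crosses a given threshold; this produces the structural constant $1/d$, and leaves the angular-spatial integral $\int_{\R^d}\int_{\mathbb S^{d-1}}\sum_k[s_k(\CB(x)X_\infty(\om)\CA(x))]^p\,d\om\,dx$ with the Fourier normalization $(2\pi)^{-d}$. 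The equality of $\SfG_p$ and $\sg_p$ is part of the B--S statement and reflects the existence of a genuine spectral limit for exactly homogeneous symbols.

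The main obstacle is the dyadic smallness argument for $T_R$: the quasi-triangle exponent $p/(p+1)$ in \eqref{eq:triangle} prevents a naive geometric summation, so the $n$-dependent constants $\e_n$ have to be exploited carefully — by splitting the sum at a cutoff $n_0$, sending $n_0\to\infty$, and applying Corollary \ref{cor:zero1} to the associated approximating sequence. Once this smallness is in hand, the formula \eqref{eq:pdoas} is a direct consequence of the cited B--S theorem applied to $T_\infty$.
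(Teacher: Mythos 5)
Your overall architecture — split off the low-frequency part, reduce to the exactly homogeneous symbol $X_\infty$, and invoke the Birman--Solomyak Weyl theorem for that — is consistent with what the cited results provide. Note, though, that the paper does not reprove any of this: its entire proof is the observation that the proposition is Theorem 2 of \cite{BS1977_1} together with Remark 3 following it, which already covers asymptotically homogeneous symbols. So most of your work is a partial re-derivation of the cited material, and the one piece you actually have to supply yourself is exactly where the gap sits.

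The gap is the treatment of $\vart_\infty(X-X_\infty)$. Your dyadic pieces $T_n$ have symbols supported in bounded annuli and compactly supported weights, so each $T_n$ individually satisfies $\SfG_p(T_n)=0$ (indeed $\SfG_q(T_n)=0$ for every $q$, by the same Paley--Wiener/Proposition \ref{prop:BS} reasoning you use for the low-frequency part). Consequently Corollary \ref{cor:zero1} gives you nothing for the tail $\sum_{n\ge n_0}T_n$: that corollary requires $\SfG_p(T_R - \sum_{n<n_0}T_n)\to 0$ as $n_0\to\infty$, which is precisely the statement you are trying to prove, and $\SfG_p$ is not countably subadditive, so the vanishing of each $\SfG_p(T_n)$ does not propagate to the infinite sum. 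Likewise the quasi-triangle inequality \eqref{eq:triangle} applied to the quasinorm bounds $\|T_n\|_{p,\infty}\lesssim\e_n$ fails to sum, as you note. The standard repair avoids the dyadic sum altogether: given $\de>0$, write $\vart_\infty(X-X_\infty)=Y_\de+Z_\de$ where $Z_\de$ is bounded with compact support in $\xi$ (hence contributes $\SfG_p=0$ by your low-frequency argument) and $|Y_\de(\xi)|\le\de\,|\xi|^{-\tau}$ for \emph{all} $\xi$. The Birman--Solomyak (Cwikel-type) quasinorm estimate for pseudo-differential operators with symbol dominated by $|\xi|^{-\tau}$ and compactly supported weights then gives $\|\op(\CB Y_\de\CA)\|_{p,\infty}\lesssim\de$ in one stroke, hence $\SfG_p$ of this piece is $O(\de^p)$, and Corollary \ref{cor:zero1} applied to the family indexed by $\de\to 0$ closes the argument. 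With that substitution your proof is sound; as written, the key smallness claim $\SfG_p(T_R)=0$ is not established.
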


This proposition is a consequence of Theorem 2 from \cite{BS1977_1} and Remark 3 following 
this theorem.

We apply Proposition \ref{prop:homo} 
to  integral operators with homogeneous kernels. Let  
$\Phi\in\plainC\infty(\R^d\setminus\{0\})$ be a matrix-valued function such that 
\begin{align}\label{eq:homophi}
\Phi(tx) = t^{\a} \Phi(x),\quad x\not = 0, \a >-d, 
\end{align} 
for all $t >0$. Consider the integral operator $W$ with the matrix-valued kernel
\begin{align*}
W(x, y) = \CB(x) \Phi(x-y)  \CA(y)
\end{align*} 
with $\CA, \CB$ satisfying the conditions \eqref{eq:matr}, and 
assuming that the matrix dimensions are matched in the same way as for the 
symbol \eqref{eq:sy}. 
We study spectral asymptotics of the operator $W$ by reducing it to the operator of the form 
\eqref{eq:pdo}. 
Let $\t$ be as defined in \eqref{eq:sco}, \eqref{eq:sco1}, and let 
$R_0>0$ be a number such that 
\begin{align*}
W(x, y) = W(x, y) \t\big(|x-y| R^{-1}\big), \quad \textup{for all}\quad R\ge R_0.
\end{align*}
Consequently, the operator $W$ has the form 
\eqref{eq:pdo} with the function  
\begin{align}\label{eq:XR}
X(\xi) = X_R(\xi) =  \int e^{-i\xi x} \t\big(|x| R^{-1}\big) \Phi(x) dx.
\end{align} 
Integrating by parts, we conclude that for each $\xi\not = 0$ 
the function $X_R(\xi)$ converges as $R\to\infty$ to a $\plainC\infty(\R^{d}\setminus\{0\})$-function
\begin{align}\label{eq:X0}
X_\infty(\xi) = \lim_{R\to\infty}X_R(\xi).
\end{align}
The function $X_\infty$ satisfies \eqref{eq:homo} with $\tau = \a + d$. Indeed, 
using \eqref{eq:homophi} write for $t >0$:
\begin{align}\label{eq:erelimit}
X_R(t\xi) = &\ t^{-\a-d}  
\int e^{-i\xi x} \t\big(|x|(Rt)^{-1}\big) \Phi(x) dx\notag\\
= &\ t^{-\a-d} X_{Rt}(\xi). 
\end{align}
Passing to the limit as $R\to\infty$, we get \eqref{eq:homo} with $\tau = \a + d$, as claimed. 
The equality \eqref{eq:erelimit} also implies that
\begin{align*}
X_R(t\xi) - X_\infty(t\xi) = t^{-\a-d}\big(X_{Rt}(\xi) - X_\infty(\xi)\big) = o(t^{-\a-d}), 
\quad t\to\infty, 
\end{align*}
for each $\xi\in\R^d$ and $R>0$, which entails \eqref{eq:Xas}. Thus, applying 
Proposition \ref{prop:homo}, we obtain the spectral asymptotics for the operator $W$. 

\begin{cor} \label{cor:w}
The operator $W$ has the form \eqref{eq:pdo} with the function $X\in \plainC\infty(\R^d)$ 
defined in \eqref{eq:XR}. The singular values of $W$ satisfy the relation 
\eqref{eq:pdoas} with $p^{-1} = 1+\a d^{-1}$.
\end{cor}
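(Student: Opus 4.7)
The plan is essentially to observe that the analysis performed in the paragraphs preceding the corollary already sets up a direct application of Proposition~\ref{prop:homo}, so the proof amounts to verifying the three hypotheses of that proposition and reading off the value of $p$.

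First I would note that, because $\CA$ and $\CB$ have compact supports by \eqref{eq:matr}, the vector $x-y$ is bounded on the joint support of $\CB(x)\CA(y)$ by some constant $R_0$. Hence for every $R\ge R_0$ the insertion of the cutoff $\t(|x-y|R^{-1})$ leaves the kernel $W(x,y) = \CB(x)\Phi(x-y)\CA(y)$ unchanged. Applying Fourier inversion to the compactly supported function $\t(|\,\cdot\,|R^{-1})\Phi(\,\cdot\,)$ then rewrites $W$ exactly in the pseudo-differential form \eqref{eq:pdo} with $X = X_R$ as in \eqref{eq:XR}. Since the integrand in \eqref{eq:XR} has compact support in $x$, the function $X_R$ is entire in $\xi$, in particular it belongs to $\plainC\infty(\R^d)$ as asserted. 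Combined with \eqref{eq:matr}, this verifies the first two hypotheses of Proposition~\ref{prop:homo}.

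Next I would confirm the asymptotic homogeneity \eqref{eq:homo}--\eqref{eq:Xas} for $X_R$. The existence of $X_\infty(\xi) = \lim_{R\to\infty} X_R(\xi)$ on $\R^d\setminus\{0\}$ follows from repeated integration by parts in \eqref{eq:XR} (the oscillatory factor $e^{-i\xi\cdot x}$ permits us to absorb any power of $|x|$ appearing from $\Phi$ and its derivatives via $|\xi|^{-1}$ factors), and the same argument yields $X_\infty\in\plainC\infty(\R^d\setminus\{0\})$. The scaling identity \eqref{eq:erelimit}, derived from \eqref{eq:homophi} by the change of variables $x\mapsto tx$, gives $X_\infty(t\xi) = t^{-(\a+d)}X_\infty(\xi)$, so \eqref{eq:homo} holds with $\tau = \a+d$. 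Passing to the limit in \eqref{eq:erelimit} along the direction $t\to\infty$ in the $\xi$-variable then yields \eqref{eq:Xas}.

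With these checks in place, Proposition~\ref{prop:homo} applies with $p = d\tau^{-1} = d(\a+d)^{-1}$, equivalently $p^{-1} = 1 + \a d^{-1}$, and produces the asymptotic formula \eqref{eq:pdoas} for $W$. The only point that requires a little care, and which I would treat as the main (mild) obstacle, is making the integration-by-parts estimate for $X_R - X_\infty$ quantitative enough to justify \eqref{eq:Xas} uniformly in $R\ge R_0$; this is handled by noting that the defect $(1-\t(|x|R^{-1}))\Phi(x)$ is supported in $\{|x|\ge R/2\}$ and is homogeneous of order $\a$ there, so its Fourier transform decays as $o(|\xi|^{-\a-d})$ by the standard homogeneous-distribution calculus.
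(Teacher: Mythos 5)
Your proposal is correct and follows essentially the same route as the paper: insert the cutoff $\t(|x-y|R^{-1})$ (harmless since $\CA,\CB$ have compact support), identify $X=X_R$ via \eqref{eq:XR}, use the scaling identity \eqref{eq:erelimit} to get \eqref{eq:homo} with $\tau=\a+d$ and, by passing to the limit $t\to\infty$, the remainder estimate \eqref{eq:Xas}, and then invoke Proposition \ref{prop:homo}. Your closing remark on making \eqref{eq:Xas} quantitative via the support of $(1-\t(|x|R^{-1}))\Phi(x)$ is a reasonable (and slightly more careful) gloss on the same integration-by-parts step the paper uses.
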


We need this result for the special case of scalar $\CA=a\in \plainC{}_0(\R^d), 
\CB=b\in \plainC{}_0(\R^d)$, and   
\begin{align}\label{eq:BX}
\Phi(x) =  \{\phi_j(x)\}_{j=1}^m
\end{align}
with scalar $\a$-homogeneous functions $\phi_j$, $j = 1, 2, \dots, m$. 
As the next assertion shows, 
in this case the right-hand side of \eqref{eq:pdoas} can be easily evaluated.

\begin{cor}\label{cor:vec}
Suppose that $\Phi$ is given as in \eqref{eq:BX} with some $\a$-homogeneous 
scalar functions $\phi_j$, $j = 1, 2, \dots, k,$ with $\a > -d$.  
Then 
\begin{align*}
\SfG_p(W) = \sg_p(W) = \frac{1}{d(2\pi)^d}
\int_{\mathbb S^{d-1}}|X_\infty(\om)|^p\, d\om 
\int_{\R^d}|a(x)b(x)|^p \, dx, 
\end{align*}
where $p^{-1} = 1 + \a d^{-1}$.
\end{cor}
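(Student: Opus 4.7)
The plan is to invoke Corollary \ref{cor:w} directly and then evaluate the right-hand side of the asymptotic formula \eqref{eq:pdoas} for the particular form of the symbol dictated by \eqref{eq:BX}.

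First I would note that the kernel $W(x,y) = b(x)\Phi(x-y)a(y)$ is vector-valued of length $m$, and each scalar component $\phi_j$ is $\alpha$-homogeneous with $\alpha > -d$, so the hypotheses of Corollary \ref{cor:w} are satisfied. That corollary supplies a representation of $W$ in the pseudo-differential form \eqref{eq:pdo} with symbol $X(\xi) = X_R(\xi)$ as in \eqref{eq:XR} and with $X_R(\xi) \to X_\infty(\xi)$ on $\R^d\setminus\{0\}$, where $X_\infty$ is $(\alpha+d)$-homogeneous of negative order (so $\tau = \alpha + d$ and $p^{-1} = 1 + \alpha d^{-1}$). Consequently both $\SfG_p(W)$ and $\sg_p(W)$ exist and equal
\begin{align*}
\frac{1}{d(2\pi)^d}\int_{\R^d}\int_{\mathbb S^{d-1}} \sum_k \bigl[s_k\bigl(\mathcal T_\infty(x,\omega)\bigr)\bigr]^p \,d\omega\,dx,
\end{align*}
with $\mathcal T_\infty(x,\omega) = b(x)\,X_\infty(\omega)\,a(x) = a(x)b(x)\,X_\infty(\omega)$, viewed as an $m\times 1$ matrix (equivalently, as a vector in $\mathbb C^m$).

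Next I would compute the singular values of $\mathcal T_\infty(x,\omega)$. Since $\mathcal T_\infty(x,\omega)$ is a rank-at-most-one matrix from $\mathbb C$ to $\mathbb C^m$, its only potentially nonzero singular value coincides with the Euclidean norm of the underlying vector, namely
\begin{align*}
s_1\bigl(\mathcal T_\infty(x,\omega)\bigr) = |a(x)b(x)|\,|X_\infty(\omega)|,
\end{align*}
and $s_k(\mathcal T_\infty(x,\omega)) = 0$ for $k\ge 2$. Therefore
\begin{align*}
\sum_k \bigl[s_k\bigl(\mathcal T_\infty(x,\omega)\bigr)\bigr]^p = |a(x)b(x)|^p\,|X_\infty(\omega)|^p.
\end{align*}

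Finally, I would substitute this into the formula above and factor the double integral into a product, which yields precisely the stated expression for $\SfG_p(W)$ and $\sg_p(W)$. The only thing needing verification is that both integrals are finite: the $x$-integral is finite because $a, b \in \plainC{}_0(\R^d)$ are compactly supported and bounded, and the $\omega$-integral over the sphere is finite because $X_\infty$ is smooth away from the origin (a consequence of \eqref{eq:X0} and the integration-by-parts argument that established $X_\infty \in \plainC\infty(\R^d\setminus\{0\})$), so $|X_\infty|$ is bounded on $\mathbb S^{d-1}$. There is no real obstacle here; the reduction to a rank-one computation makes the argument essentially a direct specialization of Corollary \ref{cor:w}.
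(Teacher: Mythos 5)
Your argument is correct and matches the paper's proof: both reduce to the observation that $\CT_\infty(x,\om)$ is rank one with $s_1\big(\CT_\infty(x,\om)\big)=|a(x)b(x)|\,|X_\infty(\om)|$ and then apply Corollary \ref{cor:w}. The extra remarks on finiteness of the two integrals are harmless but not needed beyond what the hypotheses already guarantee.
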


\begin{proof}
The matrix $\CT_\infty(x, \xi)$ is rank one and 
\begin{align*}
s_1\big(\CT_\infty(x, \xi)\big) = |a(x)|\,|b(x)|\, |X_\infty(\xi)|.
\end{align*}
The required formula follows from Corollary \ref{cor:w}.
\end{proof}

Consider two examples in which the above formula can be simplified further. 
The first example is crucial for the proof of Theorem \ref{thm:maincompl}. 

\begin{example}\label{ex:scal}
Let $m = 1$, and let 
$\Phi(x) = \phi(x) = |x|^{\a}$, $\a > -d$, 
be a scalar function.  
Then (see, e.g. \cite[Ch. 2, Sect. 3.3]{GS1964})
\begin{align*}
X_\infty(\xi) = 2^{d+\a}\pi^{\frac{d}{2}} 
\frac{\G\big(\frac{d+\a}{2}\big)}{\G\big(-\frac{\a}{2}\big)}|\xi|^{-(d+\a)},\quad \a \not = 0, 2, 4, \dots, 
\end{align*}
and $X_\infty(\xi) = 0$ for $\a = 0, 2, 4, \dots$. 
Thus, for $1/p = 1+\a/d$ and $\a\not = 0, 2, 4, \dots,$ we have 
\begin{align}\label{eq:Xint}
\mu_{\a, d}:= \frac{1}{d(2\pi)^d}\int_{\mathbb S^{d-1}} 
|X_\infty(\om)|^p d\om 
= \bigg[\frac{\G\big(\frac{d+\a}{2}\big)}{\pi^{\frac{\a}{2}}|\G\big(-\frac{\a}{2}\big)|}\bigg]^p 
\frac{1}{\G\big(\frac{d}{2}+1\big)}.
\end{align}
Now Corollary \ref{cor:vec} yields 
\begin{align*}
\SfG_p(W) = \sg_p(W) = \mu_{\a, d}
\int_{\R^d} 
|a(x)b(x)|^p  dx,\quad \frac{1}{p} = 1 + \frac{\a}{d}.
\end{align*}  
\end{example}

Note that the case of scalar functions $\Phi$ was studied in \cite{BS1970}, 
see also \cite[Theorem 10.9]{BS1977}. 
Next we consider an important example of a vector-valued function $\Phi$. 
We do not need it for the current paper but prepare it for future use.

\begin{example}\label{ex:grad}
Let $m = d$, and let $\Phi(x) = \nabla |x|^{\a+1} 
= (\a+1)|x|^{\a-1}x$,\ $\a > -d$.  This vector-valued 
function is homogeneous of order $\a$ and  (similarly to \cite[Ch. 2, Sect. 3.3]{GS1964})
\begin{align*}
X_\infty(\xi) = -i(\a+1) 2^{d+\a}\pi^{\frac{d}{2}}
\frac{\G\big(\frac{d+\a+1}{2}\big)}{\G\big(\frac{-\a+1}{2}\big)}
|\xi|^{-(\a+1+d)} \xi, \quad \a \not = 1, 3, 5, \dots,
\end{align*}
and $X_\infty(\xi) = 0$ for $\a = 1, 3, 5, \dots$. 
Thus, for $1/p = 1+\a/d$ and $\a\not = 1, 3, 5, \dots,$ we have 
\begin{align}\label{eq:Xintvec}
\nu_{\a, d}:= \frac{1}{d(2\pi)^d}\int_{\mathbb S^{d-1}} 
|X_\infty(\om)|^p d\om 
=  \bigg[\frac{(\a+1)
\G\big(\frac{d+\a+1}{2}\big)}{\pi^{\frac{\a}{2}}|\G\big(\frac{-\a+1}{2}\big)|}\bigg]^p 
\frac{1}{\G\big(\frac{d}{2}+1\big)}.
\end{align}
Now Corollary \ref{cor:vec} yields 
\begin{align*}
\SfG_p(W) = \sg_p(W) = \nu_{\a, d}
\int_{\R^d} |a(x)b(x)|^p dx,\quad \frac{1}{p} = 1 + \frac{\a}{d}.
\end{align*}
\end{example} 

\section{Spectral asymptotics for the model problem}\label{sect:model}

The objective of this section is to find the spectral asymptotics for a model 
integral operator. Recall that for any function $\mathcal K = \mathcal K(x, y)$, $x\in \R^n, y\in\R^d$, 
we denote by $\iop(\CK)$ the integral operator acting from $\plainL2(\R^d)$ into $\plainL2(\R^n)$.  
In each case the values of $n$ and $d$ are clear from the context. 
If $\CK(x, y)$ is $\mathbb C^s$-valued then the ``target" space 
$\plainL2(\R^n)$ is replaced by $\plainL2(\R^n; \mathbb C^s)$.

\subsection{The model operator} 
Let $a, b_{j,k}, \b_{j,k}$, $j = 1, 2, \dots, N$, $k = 1, 2, \dots, N-1$, be 
scalar functions such that 
\begin{align}\label{eq:abbeta}
\begin{cases}
a\in\plainC{\infty}_0(\R^3), &\ \quad b_{j,k}\in \plainC\infty_0(\R^{3N-3}),\\[0.2cm] 
\b_{j,k}\in\plainC\infty(\R^{3N}), 
\end{cases}
\end{align}
for all $j = 1, 2, \dots, N$, $k = 1, 2, \dots, N-1$. 
Let $\Phi\in\plainC\infty(\R^3\setminus \{0\})$ be a vector-valued  
function with $m$ scalar components, 
homogeneous of order $\a>-3$, as defined in \eqref{eq:BX}. 
Consider the vector-valued kernel $\CM(\hat\bx, x)$ with $mN$ components: 
\begin{align}\label{eq:cm}
\begin{cases}
\CM(\hat\bx, x) = \{\CM_j(\hat\bx, x)\}_{j=1}^N,\ 
\quad
\CM_j(\hat\bx, x) = \sum_{k=1}^{N-1} \CM_{j,k}(\hat\bx, x),\\[0.3cm]
\CM_{j,k}(\hat\bx, x) =  b_{j,k}(\hat\bx) \Phi(x_k-x) a(x)\b_{j,k}(\hat\bx, x). 
\end{cases}
\end{align}
Our aim is to find an asymptotic formula for the singular values of the operator 
$\iop(\CM): \plainL2(\R^3)\to\plainL2(\R^{3N-3}; \mathbb C^{mN})$. 
Although the function $\Phi(x)$ is homogeneous, 
the results on homogeneous kernels, notably Corollary \ref{cor:vec}, 
are not applicable directly, since 
the number of ``target" variables (i.e. $3N-3$) 
is greater than the number of the input variables (i.e. $3$), unless $N=2$. 
The proof of Theorem \ref{thm:model} below amounts to reducing the operator 
$\iop(\CM)$ to a form for which Corollary \ref{cor:vec} can be used. 
Recall that the weights $\CA$ and $\CB$ in Corollary \ref{cor:vec} are only required to be continuous 
(with compact support). Thus the smoothness restrictions on the functions $a$, $b_{j, k}$ 
$\b_{j, k}$ in the definition \eqref{eq:cm} can be relaxed, but for our 
purposes it suffices to assume conditions \eqref{eq:abbeta}. Moreover, this assumption allows us to 
avoid unnecessary technical complications. 
 
We use the representations $(\hat\bx, x) = (\tilde\bx_{k, N}, x_k, x)$ introduced in \eqref{eq:xtilde}.  
Denote  
\begin{align}\label{eq:hm}
\begin{cases}
h(t) = \bigg[\sum_{j=1}^N\sum_{k=1}^{N-1}\int_{\R^{3N-6}} 
| b_{j,k}(\tilde\bx_{k, N}, t)\b_{j,k}(\tilde\bx_{k, N}, t, t) |^2 d\tilde\bx_k\bigg]^{\frac{1}{2}},\ 
\textup{if}\ N\ge 3;\\[0.3cm]
h(t) = \big(|b_{1,1}(t)\b_{1,1}(t, t)|^2 
+ |b_{2,1}(t)\b_{2,1}(t, t)|^2\big)^{\frac{1}{2}},\ \textup{if}\ N=2.
\end{cases}
\end{align}
Let $X_\infty(\xi), \xi\in\R^3$, be the function defined by \eqref{eq:XR} and \eqref{eq:X0}.  
 
\begin{thm}\label{thm:model} 
Let $\CM$ be the operator defined above, where   
$\Phi\in\plainC\infty(\R^3\setminus\{0\})$ is a homogeneous vector 
function of order $\a > -5/2$.  
Then the operator $\iop(\CM)$ belongs to $\BS_{p, \infty}$, $1/p = 1+ \a/3$,  and 
\begin{align}\label{eq:model} 
\SfG_p\big(
\iop(\CM)\big)
= \sg_p\big(\iop(\CM)\big) 
= \frac{1}{24 \pi^3} 
\int_{\mathbb S^2}|X_\infty(\om)|^p\,d\om \int_{\R^3}
\big(|a(x) h(x)|\big)^p\,dx.
\end{align}
\end{thm}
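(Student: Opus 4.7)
The principal difficulty is that $\iop(\CM)$ maps between spaces of unequal dimension ($3$ versus $3N-3$), so the homogeneous-kernel result of Corollary \ref{cor:vec} cannot be invoked directly. My plan is to pass to the self-adjoint operator $\iop(\CM)^*\iop(\CM)$ on $\plainL2(\R^3)$, identify its spectral asymptotics with those of $R^*R$, where $R:\plainL2(\R^3)\to\plainL2(\R^3;\mathbb C^m)$ is the auxiliary operator with kernel $R(t,x)=h(t)\Phi(t-x)a(x)$, and then apply Corollary \ref{cor:vec} to $R$. By the identity $\SfG_{p/2}(T^*T)=\SfG_p(T)$ from \eqref{eq:double} (and the analogue for $\sg_p$), together with the continuity in Corollary \ref{cor:zero1}, this reduces the theorem to two approximation estimates for the kernel.

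\emph{Step 1 (Flattening the amplitudes).} Since $\b_{j,k}$ is smooth while $\Phi(x_k-x)$ is singular only on the diagonal $\{x=x_k\}$, I would replace $\b_{j,k}(\hat\bx,x)$ in \eqref{eq:cm} by its diagonal restriction $\b_{j,k}(\tilde\bx_{k,N},x_k,x_k)$. The resulting error kernel carries the extra factor $\b_{j,k}(\hat\bx,x)-\b_{j,k}(\tilde\bx_{k,N},x_k,x_k)=O(|x-x_k|)$, so the effective singularity drops from $|x-x_k|^{\a}$ to $|x-x_k|^{\a+1}$. Splitting with the cutoffs $\t$, $\z$ from \eqref{eq:sco}--\eqref{eq:sco1} into a near-diagonal piece and a smooth far piece, I would apply Proposition \ref{prop:BS} to each piece with a suitable integer $l$ playing the role of the Sobolev order in the $x$-variable. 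The hypothesis $\a>-5/2$ ensures that such an $l$ exists with $\a+3/2<l<\a+5/2$, so that the exponent $1/q=1/2+l/3$ produced by Proposition \ref{prop:BS} strictly exceeds $1/p=1+\a/3$; consequently the error lies in $\BS_{q,\infty}$ with $q<p$ and its $\SfG_p$-functional vanishes.

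\emph{Step 2 (Collapsing the target space).} Write the simplified kernel as $\CM'_{j,k}(\hat\bx,x)=c_{j,k}(\hat\bx)\Phi(x_k-x)a(x)$ with $c_{j,k}(\hat\bx)=b_{j,k}(\hat\bx)\b_{j,k}(\tilde\bx_{k,N},x_k,x_k)$. A direct computation of $\iop(\CM')^*\iop(\CM')$ on $\plainL2(\R^3)$ produces a double sum over $(k,k')$. For the diagonal terms $k=k'$, summing over $j$ and changing variables $x_k\mapsto t$ collapses everything, in view of \eqref{eq:hm}, to the kernel
\begin{equation*}
a(x)\overline{a(y)}\int_{\R^3} h(t)^2\,\overline{\Phi(t-x)}\cdot\Phi(t-y)\,dt,
\end{equation*}
which is precisely that of $R^*R$. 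For the off-diagonal terms $k\ne k'$, the singular sets of $\overline{\Phi(x_k-x)}$ and $\Phi(x_{k'}-y)$ are disjoint in $\hat\bx$; integrating against the smooth, compactly supported factor $\overline{c_{j,k}}c_{j,k'}$ yields an $(x,y)$-kernel that is smooth on compact support, so by Proposition \ref{prop:BS} the corresponding operator lies in $\BS_{q,\infty}$ for every $q>0$ and does not contribute to $\SfG_{p/2}$.

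\emph{Step 3 (Invoking Corollary \ref{cor:vec}).} By \eqref{eq:abbeta} the weight $h$ is continuous and compactly supported, so $R$ fits the framework of Corollary \ref{cor:vec} with $\CB=h$ and $\CA=a$; this yields
\begin{equation*}
\SfG_p(R)=\sg_p(R)=\frac{1}{24\pi^3}\int_{\mathbb S^2}|X_\infty(\om)|^p\,d\om\,\int_{\R^3}|a(x)h(x)|^p\,dx,
\end{equation*}
which is the right-hand side of \eqref{eq:model}. Combining Steps 1 and 2 via Corollary \ref{cor:zero1} and the double-value identity \eqref{eq:double} transfers this identity both to $\SfG_p(\iop(\CM))$ and to $\sg_p(\iop(\CM))$. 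The main obstacle is Step 1: the quantitative control of the flattening error requires matching the fractional Sobolev regularity of $|x|^{\a+1}$ against the threshold $l>\a+3/2$ dictated by Proposition \ref{prop:BS}, and the restriction $\a>-5/2$ is exactly what makes such a matching integer $l$ available.
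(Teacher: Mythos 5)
Your proposal is essentially the paper's own proof with the two reduction steps performed in the opposite order: the paper first treats $\b_{j,k}=1$ (killing the cross terms $k\ne l$ exactly as in your Step 2 via Proposition \ref{prop:BS}, then collapsing $\iop(\CM)^*\iop(\CM)$ to $\iop(\mathcal G)^*\iop(\mathcal G)$ with $\mathcal G(x,y)=h(x)\Phi(x-y)a(y)$ and invoking Corollary \ref{cor:vec} through \eqref{eq:double}), and then reduces general $\b_{j,k}$ by freezing it at $x=x_k$ and absorbing the $O(|x-x_k|)$ error into $\plainH{l}$ exactly as in your Step 1. The only nitpick: you should take $l$ in the half-open interval $[\a+3/2,\a+5/2)$ rather than the open one (which can contain no integer, e.g.\ for $\a=1/2$); at the left endpoint one gets $q=p$, which still suffices since Proposition \ref{prop:BS} yields $\SfG_q(T_{ba})=0$ rather than mere membership in $\BS_{q,\infty}$.
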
 
 
Throughout the proof we assume that $N\ge 3$. For $N=2$ the argument simplifies, and we 
omit it. 

We begin the proof with the following lemma. 

\begin{lem}\label{lem:cross} 
For each $j = 1, 2, \dots, N$ and each pair $k, l = 1, 2, \dots, N-1$, $k\not = l$, we have 
\begin{align*}
\SfG_{p/2}\big(\iop(\CM_{j,k})^*\iop(\CM_{j,l})\big) = 0.
\end{align*}
\end{lem}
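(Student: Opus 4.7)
The plan is to write out the scalar kernel $K(y,x)$ of the composition $\iop(\CM_{j,k})^*\iop(\CM_{j,l})\colon\plainL2(\R^3)\to\plainL2(\R^3)$, show that $K\in\plainC\infty_0(\R^3\times\R^3)$, and then invoke Proposition \ref{prop:BS} with $l$ arbitrarily large to conclude $\SfG_q\big(\iop(\CM_{j,k})^*\iop(\CM_{j,l})\big)=0$ for every $q>0$, in particular for $q=p/2$. A direct computation from \eqref{eq:cm} yields
\begin{align*}
K(y,x)=\overline{a(y)}\,a(x)\sum_{i=1}^m\int_{\R^{3N-3}}\overline{\phi_i(x_k-y)}\,\phi_i(x_l-x)\,F_i(\hat\bx,y,x)\,d\hat\bx,
\end{align*}
where $\{\phi_i\}_{i=1}^m$ are the scalar components of the $\a$-homogeneous function $\Phi$ and $F_i(\hat\bx,y,x)=\overline{b_{j,k}(\hat\bx)\b_{j,k}(\hat\bx,y)}\,b_{j,l}(\hat\bx)\b_{j,l}(\hat\bx,x)$ is $\plainC\infty$ in all variables and compactly supported in $\hat\bx$ by \eqref{eq:abbeta}.

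The decisive use of the hypothesis $k\neq l$ is the following. Splitting $\hat\bx=(\tilde\bx_{k,l},x_k,x_l)$, the singular factor $\phi_i(x_k-y)$ depends only on the $x_k$-slot and $\phi_i(x_l-x)$ only on the $x_l$-slot. Using $\p_y\phi_i(x_k-y)=-\p_{x_k}\phi_i(x_k-y)$ I integrate by parts in $x_k$ to transfer any $y$-derivative from $\phi_i$ onto $F_i$, and symmetrically integrate by parts in $x_l$ to shift any $x$-derivative onto $F_i$. Because $x_k$ and $x_l$ are distinct integration variables these two procedures act independently and may be iterated arbitrarily often, producing, for every pair of multi-indices $\mu,\nu$, an identity of the form
\begin{align*}
\p_y^\mu\p_x^\nu K(y,x)=\sum_{i=1}^m\int\overline{\phi_i(x_k-y)}\,\phi_i(x_l-x)\,G_{\mu,\nu,i}(\hat\bx,y,x)\,d\hat\bx\;+\;R_{\mu,\nu}(y,x),
\end{align*}
where each $G_{\mu,\nu,i}$ is smooth and compactly supported in $\hat\bx$, and $R_{\mu,\nu}$ collects the routine lower-order terms where derivatives fall on $a(y)$, $a(x)$, or on the smooth weight $F_i$ without any integration by parts.

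Since $\a>-3$, each $\phi_i\in\plainL1_{\loc}(\R^3)$, so on the compact $\hat\bx$-support of $G_{\mu,\nu,i}$ the integrand is dominated by a fixed integrable majorant; thus $\p_y^\mu\p_x^\nu K$ is bounded for every $\mu,\nu$, while the factors $\overline{a(y)}\,a(x)$ provide compact support in both $y$ and $x$. Hence $K\in\plainC\infty_0(\R^3\times\R^3)$, so Proposition \ref{prop:BS} applied with any $l$ and weights carrying the supports of $K$ yields $\SfG_q\big(\iop(K)\big)=0$ for every $q>0$, giving the claim. The case $N=2$ is vacuous, since then $k,l\in\{1\}$ forces $k=l$. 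The main obstacle — and the only place where the hypothesis $k\neq l$ is genuinely used — is arranging the two integrations by parts to act in disjoint variables; were $k=l$, the singularities $\phi_i(x_k-y)$ and $\phi_i(x_k-x)$ would share the same integration variable and the smoothing argument would fail.
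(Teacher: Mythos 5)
Your proof is correct and follows essentially the same route as the paper: both exploit that for $k\neq l$ the two singular factors $\Phi(x_k-\cdot)$ and $\Phi(x_l-\cdot)$ sit in distinct integration slots, deduce smoothness and compact support of the composed kernel, and then invoke Proposition \ref{prop:BS} with $l$ arbitrary to get $\SfG_q=0$ for all $q>0$. The only cosmetic difference is that the paper absorbs the singularity by the translation $x_l\mapsto x_l+y$ (which is the cleanest way to justify your integration-by-parts identities, since $\p\phi_i$ need not be locally integrable for $\a$ near $-3$) and only needs smoothness in one of the two variables, whereas you establish joint smoothness.
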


\begin{proof} 
Fix a $j = 1, 2, \dots, N$ and write the kernel of the operator 
$\iop(\CM_{j,k})^*\iop(\CM_{j,l})$:
\begin{align*}
\CP_{k,l}(x, y) 
= &\ \overline{a(x)} a(y)\int\overline{\CM_{j,k}(\hat\bx, x)} \CM_{j,l}(\hat\bx, y) d\hat\bx \\
= &\ \overline{a(x)} a(y) \int \overline{\Phi(x-x_k)}\cdot \Phi(x_l-y) \,
\overline{b_{j,k}(\hat\bx)\b_{j,k}(\hat\bx, x)} b_{j,l}(\hat\bx) \b_{j,l}(\hat\bx, y)\,d\hat\bx.
\end{align*} 
Write $\hat\bx = (\tilde\bx_{l, N}, x_l)$, $d\hat\bx = d\tilde\bx_{l, N} dx_l$  
and change $x_l$ to $x_l+y$, so that
\begin{align*}
\CP_{k,l}(x, y) =  \overline{a(x)} a(y) \int \overline{\Phi(x-x_k)}\cdot &\ \Phi(x_l)   
\overline{b_{j,k}(\tilde\bx_{l, N}, x_l+y)\b_{j,k}(\tilde\bx_{j, N}, x_j+y, x)} \\
&\qquad\qquad \times b_{j,l}(\tilde\bx_{l, N}, x_l+y) 
\b_{j,l}(\tilde\bx_{l, N}, x_l+y, y) d\tilde\bx_{l, N} dx_l.
\end{align*}
Because of the conditions \eqref{eq:abbeta} for all $x\in \R^3$ 
the kernel $\CP_{k,l}$ is a $\plainC\infty_0$-function of $y\in\R^3$. 
Hence by Proposition \ref{prop:BS} the singular values of the operator $\iop(\CP_{k,l})$ decay 
faster than any negative power of their number.  In particular, 
$\SfG_{p/2}\big(\iop(\CP_{k,l})\big) = 0$, as required.
\end{proof}

\subsection{Proof of Theorem \ref{thm:model} for $\b_{jk} = 1$} \label{subsect:beta1} 
First we prove Theorem \ref{thm:model} for the simpler case $\b_{j,k} = 1$.
It follows from Lemma \ref{lem:cross} and from the inequality \eqref{eq:triangleg} that 
\begin{align*}
\SfG_{p/2}\bigg(\sum_{j=1}^N\sum_{k\not = l} \iop(\CM_{j,k})^*\iop(\CM_{j,l})\bigg) = 0.
\end{align*}
By Corollary \ref{cor:zero} this implies that 
\begin{align*}
\SfG_{p/2}\big(\iop(\CM)^* \iop(\CM)\big) 
= \SfG_{p/2}\bigg(\sum_{j=1}^N
\sum_{k=1}^{N-1} \iop(\CM_{j,k})^*\iop(\CM_{j,k})\bigg),
\end{align*}
and the same equality holds for the functional $\sg_{p/2}$.
Let us write the kernel 
$\CF(x, y)$ 
of the operator on the right-hand side, remembering that $\b_{j,k} = 1$:
\begin{align*}
\CF(x, y) = \overline{a(x)}a(y)&\ \sum_{j=1}^N
\sum_{k=1}^{N-1} \int \overline{\Phi(x-x_k)}\cdot \Phi(x_k-y) 
|b_{j,k}(\hat\bx)|^2
d\hat\bx\\
= &\ \overline{a(x)}a(y)\sum_{j=1}^N
\sum_{k=1}^{N-1} 
\int_{\R^3} \overline{\Phi(x-t)}\cdot \Phi(t-y)  \int_{\R^{3N-6}}
|b_{j,k}(\tilde\bx_{k, N}, t)|^2
d\tilde\bx_{k, N} dt\\
= &\ \overline{a(x)}a(y)\int_{\R^3}  \overline{\Phi(x-t)}\cdot \Phi(t-y) \ h(t)^2dt,  
\end{align*}
where the function
\begin{align*}
h(t) = \bigg[\sum_{j=1}^N
\sum_{k=1}^{N-1} \int_{\R^{3N-6}} 
|b_{j,k}(\tilde\bx_{k, N}, t)|^2 d\tilde\bx_{k, N}\bigg]^{\frac{1}{2}}
\end{align*}
coincides with \eqref{eq:hm} for $\b_{j,k}=1$. 
Define the vector-valued kernel $\mathcal G$ by  
\begin{align*}
\mathcal G(x, y) = h(x)\Phi(x-y)a(y), 
\end{align*}
so that 
$\iop(\CF) = \iop(\mathcal G)^* \iop(\mathcal G)$.  
Thus the functionals $\SfG_{p/2}$ for the operators $\iop(\CM)^*\iop(\CM)$ and 
$\iop(\mathcal G)^* \iop(\mathcal G)$ coincide 
with each other, and the same applies to the functionals $\sg_{p/2}$.   
Consequently, by virtue of \eqref{eq:double}, 
\begin{align}\label{eq:mtog}
\SfG_{p}\big(\iop(\CM)\big) = \SfG_{p}\big(\iop(\mathcal G)\big),\quad
\sg_{p}\big(\iop(\CM)\big) = \sg_{p}\big(\iop(\mathcal G)\big). 
\end{align}
Since $b_{j, k}\in\plainC\infty_0$, the function $h$ belongs to $\plainC{}_0$. 
Thus, to find $\SfG_p$ and $\sg_p$ for the operator 
$\iop(\mathcal G)$ we can 
apply Corollary \ref{cor:vec} with $d=3$ and with 
the weights $b = h\in \plainC{}_0$ and  
$a\in\plainC{\infty}_0$, which gives 
\begin{align*} 
\SfG_p\big(
\iop(\mathcal G)\big)
= \sg_p\big(\iop(\mathcal G)\big) 
= \frac{1}{24 \pi^3} \int_{\R^3}\int_{\mathbb S^2}\big(|a(x) h(x)| |X_\infty(\om)|\big)^p\,d\om dx,
\end{align*}
with $1/p = 1+ \a/3$. 
By \eqref{eq:mtog}, this equality implies \eqref{eq:model}, which completes 
the proof of Theorem \ref{thm:model} for $\b_{j,k} = 1$.

\subsection{Proof of Theorem \ref{thm:model} for arbitrary $\b_{j,k}\in\plainC\infty$} 
We reduce the general case to the one considered in 
Subsect. \ref{subsect:beta1}.  Since $b_{j,k}$ and $a$ are compactly supported, 
without loss of generality we may assume that $\b_{j,k}\in\plainC\infty_0(\R^{3N})$.
For each $j = 1, 2, \dots, N$ represent 
\begin{align*}
\CM_j = \CA_j+\sum_{k=1}^{N-1}\CF_{j,k},
\end{align*}
where
\begin{align*}
\CA_j(\hat\bx, x) = &\ \sum_{k=1}^{N-1}\Phi(x_k-x) b_{j,k}(\hat\bx) a(x) \b_{j,k}(\hat\bx, x_j),\\
\CF_{j,k}(\hat\bx, x) = &\ \Phi(x_k-x) b_{j,k}(\hat\bx) 
a(x) \big(\b_{j,k}(\hat\bx, x) - \b_{j,k}(\hat\bx, x_k)\big),\quad j = 1, 2, \dots, N-1.
\end{align*}
Representing 
\begin{align*}
\b_{j,k}(\hat\bx, x) - 
\b_{j,k}(\hat\bx, x_k)
= (x-x_k)\cdot \int_0^1 \nabla_x \b_{j,k}(\hat\bx, x_k+ s(x-x_k)) ds 
=: (x_k-x) \cdot\s_{j,k}(\hat\bx, x), 
\end{align*}
we can rewrite $\CF_{j,k}$ as 
\begin{align*}
\CF_{j,k}(\hat\bx, x) = 
\Xi_{j,k}(\hat\bx, x) 
b_{j,k}(\hat\bx) a(x), 
\quad  \textup{where}
\quad \Xi_{j,k}(\hat\bx, x) = \Phi(x_k-x)\,\big[(x_k-x)\cdot\s_{j,k}(\hat\bx, x)\big].
\end{align*}
Remembering that $\Phi$ is homogeneous of order $\a$  
and that $\s_{j,k}\in\plainC\infty_0(\R^{3N})$, 
we conclude that 
\begin{align*}
\big|\p^m_x \Xi_{j,k}(\hat\bx, x)\big|
\lesssim |x-x_j|^{\a+1-|m|}, \quad m\in \mathbb N_0^3.
\end{align*}
Since $\s_{j,k}$ is compactly supported, 
the kernel $\Xi_{j,k}(\hat\bx, x)$, as a function of $x\in \R^3$, 
belongs to $\plainH{l}(\R^3)$ for all $0\le l<\a+5/2$. 
As $\a > -5/2$ the set of such values $l$ is non-empty. 
Moreover, the $\plainH{l}$-norm of the kernel, as a function of $\hat\bx\in\R^{3N-3}$,  
is uniformly bounded, and hence 
it trivially belongs to $\plainL2(\R^{3N-3}, |b_{j,k}(\hat\bx)|^2 d\hat\bx)$.
By virtue of Proposition \ref{prop:BS}, we obtain 
 that $\SfG_{q}(\iop(\CF_{j,k})) = 0$, $1/q = 1/2+ l/3$.  Note that 
\begin{align*}
\frac{1}{q} \ge \frac1{p} =  1+\frac{\a}{3},\quad \textup{for}\quad l\ge \a+\frac{3}{2}. 
\end{align*}
Consequently, taking $l$ to be the only non-negative integer in the interval $[\a+3/2, \a+5/2)$, 
we conclude that $\SfG_p(\iop(\CF_{j,k})) = 0$, 
and, by \eqref{eq:triangleg}, 
\begin{align*}
\SfG_p \bigg(\sum_{k=1}^{N-1}\iop(\CF_{j,k})\bigg) = 0.
\end{align*}
By Corollary \ref{cor:zero},
\begin{align}\label{eq:btoxi}
\SfG_p\big(\iop(\CM)\big)  
= \SfG_p\big(\iop(\CA)\big),\quad \sg_p\big(\iop(\CM)\big)  
= \sg_p\big(\iop(\CA)\big),
\end{align}
where $\CA$ is the vector function with components $\CA_j(\hat\bx, x)$, 
$j = 1, 2, \dots, N$. 
To find $\SfG_p$ and $\sg_p$
for the operator $\iop(\CA)$, we observe that each kernel $\CA_j(\hat\bx, x)$ has the form 
\begin{align*}
\sum_{k=1}^{N-1} \Phi(x_k-x) \tilde b_{j,k}(\hat\bx) a(x) \quad \textup{with}
\quad \tilde b_{j,k}(\hat\bx) = b_{j,k}(\hat\bx) \b_{j,k}(\hat\bx, x_k),
\end{align*}
Using the result of Subsect \ref{subsect:beta1} we obtain the formula 
\eqref{eq:model} for the operator $\iop(\CA)$ 
 with the function $h$ defined in \eqref{eq:hm}. 
In view of \eqref{eq:btoxi} this implies 
 \eqref{eq:model} for the operator $\iop(\CM)$, as claimed. 
\qed

\begin{cor}\label{cor:scal1} 
Let $\Phi(x)$ be as in Example \ref{ex:scal} with $\a = 1, d = 3$, i.e. 
$\Phi(x) = |x|$ and $1/p = 1+\a/d = 4/3$.
According to \eqref{eq:model} and \eqref{eq:Xint},  
 \begin{align}\label{eq:scal1} 
\SfG_{3/4}\big( 
\iop(\CM)\big) 
= \sg_{3/4}\big( 
\iop(\CM)\big)
= \mu_{1, 3}\int_{\R^3} \big(|a(x) h(x)| \big)^{\frac{3}{4}}\,  dx,
\end{align}
where $\mu_{\a, d}$ is defined in \eqref{eq:Xint}.
\end{cor}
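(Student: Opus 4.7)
The plan is to deduce the corollary directly by specializing Theorem \ref{thm:model} and combining it with the explicit integral computation recorded in Example \ref{ex:scal}. First I would verify that the hypotheses of Theorem \ref{thm:model} are met: the function $\Phi(x)=|x|$ is scalar (so the vector in \eqref{eq:BX} has a single component $\phi_1(x)=|x|$), lies in $\plainC\infty(\R^3\setminus\{0\})$, and is homogeneous of order $\a=1$, which comfortably satisfies $\a>-5/2$. The exponent prescribed by Theorem \ref{thm:model} is therefore $1/p=1+\a/3=4/3$, so $p=3/4$.

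Next, I would apply the asymptotic identity \eqref{eq:model}, which with these parameters reads
\begin{equation*}
\SfG_{3/4}\big(\iop(\CM)\big)=\sg_{3/4}\big(\iop(\CM)\big)=\frac{1}{24\pi^3}\int_{\mathbb S^2}|X_\infty(\om)|^{3/4}\,d\om\int_{\R^3}\big(|a(x)h(x)|\big)^{3/4}\,dx,
\end{equation*}
where $X_\infty$ is the function defined in \eqref{eq:X0} corresponding to $\Phi(x)=|x|$.

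Finally, I would substitute the explicit value of the spherical integral. Since $\Phi(x)=|x|^\a$ with $\a=1$ is precisely the scalar homogeneous kernel treated in Example \ref{ex:scal} (and $\a=1$ avoids the exceptional nonnegative even integers $0,2,4,\dots$ for which $X_\infty$ vanishes), formula \eqref{eq:Xint} with $d=3$ and $\a=1$ gives
\begin{equation*}
\mu_{1,3}=\frac{1}{24\pi^3}\int_{\mathbb S^2}|X_\infty(\om)|^{3/4}\,d\om.
\end{equation*}
Inserting this identification into the preceding display yields exactly \eqref{eq:scal1}.

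There is no real obstacle here: the corollary is a one-line specialization of Theorem \ref{thm:model}, with the sole task being to recognize that the coefficient in front of the $\plainL{3/4}$-norm is the constant $\mu_{1,3}$ already computed in Example \ref{ex:scal}. The only point worth double-checking is the matching of hypotheses ($\a>-5/2$ and the regularity of $a,b_{j,k},\b_{j,k}$ assumed in \eqref{eq:abbeta}), both of which are immediate in the setting stated.
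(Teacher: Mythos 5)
Your proposal is correct and is exactly the paper's own (essentially one-line) justification: the corollary is stated as a direct specialization of Theorem \ref{thm:model} with $\Phi(x)=|x|$, $\a=1$, $d=3$, $p=3/4$, with the spherical integral $\frac{1}{24\pi^3}\int_{\mathbb S^2}|X_\infty(\om)|^{3/4}\,d\om$ identified as $\mu_{1,3}$ via \eqref{eq:Xint}. Your checks of the hypotheses ($\a=1>-5/2$, $\a$ not an even nonnegative integer, regularity of the weights) are the right ones and all hold.
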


\begin{cor}\label{cor:grad1} 
Let  $\Phi(x)$ be as in Example \ref{ex:grad} with $\a = 0, d = 3$, i.e. 
$\Phi(x) = \nabla|x| = |x|^{-1} x$ and $1/p = 1+\a/d = 1$.
According to \eqref{eq:model}  
and \eqref{eq:Xintvec},    
 \begin{align*}
\SfG_1\big(\iop(\CM)\big)
= \sg_1\big(\iop(\CM)\big)
= \nu_{0, 3}\int_{\R^3} |a(x) h(x)|\, dx,
\end{align*}
where $\nu_{\a, d}$ is defined in \eqref{eq:Xintvec}.
\end{cor}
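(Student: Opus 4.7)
The plan is to specialise Theorem \ref{thm:model} to the particular homogeneous kernel $\Phi(x) = \nabla|x| = |x|^{-1} x$ and identify the resulting spherical integral with the constant $\nu_{0,3}$ from Example \ref{ex:grad}. Since the main spectral asymptotic work has already been carried out in Theorem \ref{thm:model}, the proof of the corollary is essentially a verification of hypotheses plus a substitution.

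First I would verify the hypotheses of Theorem \ref{thm:model}: the function $\Phi$ is in $\plainC\infty(\R^3 \setminus \{0\})$, takes values in $\R^3$, and is positively homogeneous of order $\a = 0$. Since $0 > -5/2$, Theorem \ref{thm:model} applies, and the prescribed exponent is $p = 1$, determined by $1/p = 1 + \a/3 = 1$. Substituting into \eqref{eq:model} yields
\[
\SfG_1\big(\iop(\CM)\big) = \sg_1\big(\iop(\CM)\big) = \frac{1}{24 \pi^3} \int_{\mathbb S^2} |X_\infty(\om)|\, d\om \int_{\R^3} |a(x) h(x)|\, dx,
\]
where $X_\infty$ is the asymptotically homogeneous function attached to $\Phi$ through \eqref{eq:XR}--\eqref{eq:X0}.

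The remaining step is to recognise the spherical prefactor as $\nu_{0,3}$. Example \ref{ex:grad} computes $X_\infty$ explicitly for the kernel $\nabla|x|^{\a+1}$ in dimension $d$; taking $\a = 0$ and $d = 3$ gives precisely our $\Phi(x) = |x|^{-1} x$, and $\a = 0$ is not one of the excluded values $1, 3, 5, \dots$, so $X_\infty$ is non-zero. By the definition in \eqref{eq:Xintvec}, $\nu_{\a, d}$ equals $(d(2\pi)^d)^{-1}$ times $\int_{\mathbb S^{d-1}} |X_\infty(\om)|^p\, d\om$; for $d = 3$ the prefactor is $1/(24\pi^3)$, matching the constant in the display above. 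Hence the two spherical factors coincide and the asserted formula follows.

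Because the argument consists of checking the hypotheses of an already proved theorem and quoting an already computed integral, I do not anticipate any genuine analytical obstacle. The only points requiring attention are the confirmation that $\a = 0$ falls outside the exceptional set $\{1,3,5,\dots\}$ in Example \ref{ex:grad} and that the normalising constants $1/(d(2\pi)^d)$ and $1/(24\pi^3)$ from the two sources agree; both are straightforward.
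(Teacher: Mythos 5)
Your proposal is correct and matches the paper's (essentially immediate) argument: the corollary is just Theorem \ref{thm:model} specialised to $\Phi(x)=\nabla|x|$, homogeneous of order $\a=0>-5/2$ so $p=1$, with the spherical factor $\frac{1}{24\pi^3}\int_{\mathbb S^2}|X_\infty(\om)|\,d\om$ identified as $\nu_{0,3}$ via the computation in Example \ref{ex:grad} and \eqref{eq:Xintvec}. Your checks (that $\a=0$ avoids the exceptional odd integers and that the normalising constants agree) are exactly the points the paper relies on.
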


In the current paper we need only Corollary \ref{cor:scal1}. 
Corollary \ref{cor:grad1} is needed for future use.   

\section{Factorization of $\BOLG$: operator $\BPsi$}\label{sect:factor}

\subsection{Reformulation of the problem} 
Using the functionals \eqref{eq:limsupinf}, one can rewrite the sought formula \eqref{eq:main} as 
\begin{align*}
\SfG_{3/8}(\BOLG) = \sg_{3/8}(\BOLG) = A. 
\end{align*}
Since $\BOLG = \BPsi^*\BPsi$ with the operator $\BPsi: \plainL2(\R^3)\to\plainL2(\R^{3N-3})$ 
defined in \eqref{eq:bpsi}, 
by \eqref{eq:double} the above equalities rewrite as
\begin{align}\label{eq:gtopsi}
\SfG_{3/4}(\BPsi) = \sg_{3/4}(\BPsi) = A.
\end{align}
Thus the main Theorem \ref{thm:maincompl} can be recast as follows:

\begin{thm}\label{thm:gtopsi}
Under the conditions of Theorem \ref{thm:maincompl} the 
formula \eqref{eq:gtopsi} holds with the constant $A$ 
which is defined in \eqref{eq:coeffA}.
\end{thm}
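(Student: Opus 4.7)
The plan is to reduce $\BPsi$ to the model operator of Section \ref{sect:model} by microlocalizing in $(\hat\bx,x)$, using the representation \eqref{eq:trepr} near each pair coalescence $\{x=x_k\}$, $k=1,\dots,N-1$, smoothly approximating the non-smooth factor $\tilde\eta_{j,k}$, and passing to the limit via Corollary \ref{cor:zero1}. Throughout, the a priori bound $\BPsi\in\BS_{3/4,\infty}$ established in \cite{Sobolev2020} serves as a majorant for controlling the approximation errors.

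First I would introduce a smooth partition of unity $1 = \chi^\flat(\hat\bx,x) + \sum_{k=1}^{N-1}\chi_k(\hat\bx,x)$ on $\R^{3N}$ with each $\chi_k$ supported in a tube around the diagonal $\{x=x_k\}$ contained in $\bigcap_{j=1}^N\tilde\Om_{j,k}$ and vanishing near every other coalescence. The tail $\chi^\flat$ splits further into (a) a piece supported where every $\psi_j$ is real analytic, (b) pieces near the nucleus loci $\{x=0\}\cup\bigcup_l\{x_l=0\}$, and (c) pieces near the ``other-pair'' loci $\{x_l=x_m\}$, $l,m\le N-1$, $l\ne m$. On (a), using \eqref{eq:exp} and taking $l$ sufficiently large, Proposition \ref{prop:BS} gives vanishing $\SfG_{3/4}$ for the corresponding cut-off operator. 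On (c) the coalescences lie only in $\hat\bx$, so $\psi_j$ is smooth in $x$ and the same conclusion follows. On (b) I would invoke \eqref{eq:repr} with $k=0$ to split the kernel into a smooth $\tilde\xi_{j,0}$-part (handled as above) and a $|x|\tilde\eta_{j,0}$-part supported in a small ball; Corollary \ref{cor:scal1} bounds the $\SfG_{3/4}$ of the latter by an integral that shrinks with the cut-off, so Corollary \ref{cor:zero1} gives vanishing $\SfG_{3/4}$ as well.

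Next, on each $\supp\chi_k$ with $k=1,\dots,N-1$ the representation \eqref{eq:trepr} splits $\chi_k\psi_j$ into the real-analytic piece $\chi_k\tilde\xi_{j,k}$ (again with $\SfG_{3/4}=0$ by Proposition \ref{prop:BS}) and the Lipschitz piece $\chi_k|x-x_k|\tilde\eta_{j,k}$. Assembled over $(j,k)$, these Lipschitz pieces give an operator whose vector kernel is precisely of the form \eqref{eq:cm} with $\Phi(t)=|t|$, weight $a$ and $\hat\bx$-cutoffs $b_{j,k}$ extracted from $\chi_k$, and $\beta_{j,k}=\tilde\eta_{j,k}$. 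Since $\tilde\eta_{j,k}$ need not be smooth up to $\partial\tilde\Om_{j,k}$, I would approximate it in $\plainL2_{\rm loc}$ by $\beta^\nu_{j,k}\in\plainC\infty_0(\tilde\Om_{j,k})$, $\nu\to 0$, slightly shrinking the cutoffs to remain strictly inside $\tilde\Om_{j,k}$, and let $\BPsi^\nu$ denote the resulting approximant. Corollary \ref{cor:scal1} then yields
\begin{align*}
\SfG_{3/4}(\BPsi^\nu) = \sg_{3/4}(\BPsi^\nu) = \mu_{1,3}\int_{\R^3}h_\nu(x)^{3/4}\,dx,
\end{align*}
with $h_\nu$ built from the $\beta^\nu_{j,k}$ via \eqref{eq:hm}; a direct computation using \eqref{eq:Xint} with $d=3,\alpha=1$ gives $\mu_{1,3}=\tfrac13(2/\pi)^{5/4}$, matching \eqref{eq:coeffA}. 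Corollary \ref{cor:zero1} reduces the theorem to the two claims $\SfG_{3/4}(\BPsi-\BPsi^\nu)\to 0$ and $\int h_\nu^{3/4}\to\int H^{3/4}$ as $\nu\to 0$, with $H$ as in \eqref{eq:tH}; the finiteness of the latter integral, and hence Theorem \ref{thm:etadiag}, drops out of the proof.

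The main obstacle is the estimate $\SfG_{3/4}(\BPsi-\BPsi^\nu)\to 0$: converting the merely $\plainL2_{\rm loc}$ smallness of $\tilde\eta_{j,k}-\beta^\nu_{j,k}$ into smallness of the $\SfG_{3/4}$ quasi-norm of the corresponding integral operator requires a careful adaptation of the upper bound of \cite{Sobolev2020}, with uniform control of the $\BS_{3/4,\infty}$ quasi-norm in the approximation parameter, combined with the quasi-triangle inequalities \eqref{eq:triangleg}. Bookkeeping the cut-offs near $\partial\tilde\Om_{j,k}$, where the \cite{FHOS2009} representation breaks down, and matching the approximation domains to the cut-offs is where most of the technical work lies.
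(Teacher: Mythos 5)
Your overall architecture matches the paper's: localize near the pair coalescences $\{x=x_k\}$, discard the analytic, nuclear and $\hat\bx$-coalescence pieces, use \eqref{eq:trepr} to isolate the Lipschitz part, feed it into the model operator of Section \ref{sect:model} via Corollary \ref{cor:scal1}, and pass to the limit with Corollary \ref{cor:zero1}; the identification $\mu_{1,3}=\tfrac13(2/\pi)^{5/4}$ is also correct. But the step you yourself flag as the ``main obstacle'' is a genuine gap, and it stems from a misdiagnosis of where the difficulty lies. You propose to approximate $\tilde\eta_{j,k}$ in $\plainL2_{\rm loc}$ by $\beta^\nu_{j,k}\in\plainC\infty_0(\tilde\Om_{j,k})$ and then to show $\SfG_{3/4}(\BPsi-\BPsi^\nu)\to 0$. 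No tool available here delivers that: $\plainL2$ closeness of kernels controls only the Hilbert--Schmidt norm, i.e.\ $s_k=o(k^{-1/2})$, whereas continuity of $\SfG_{3/4}$ needs smallness in $\BS_{3/4,\infty}$, which via Proposition \ref{prop:BS} would require $\plainH{l}$ control in $x$ with $l\ge 3$; and the weighted bound \eqref{eq:psifull} from \cite{Sobolev2020} applies to the specific kernels $\psi_j$ (whose regularity comes from \cite{FS2018}), not to an arbitrary difference $\tilde\eta_{j,k}-\beta^\nu_{j,k}$. Moreover this $\nu$-smoothing is unnecessary: once you have cut off with compactly supported functions whose support is a compact subset of $\tilde\Om_{j,k}$ (as in Lemma \ref{lem:cutoff}), the product of the cutoffs with $\tilde\eta_{j,k}$ is \emph{already} $\plainC\infty_0$, because $\tilde\eta_{j,k}$ is real analytic on the open set $\tilde\Om_{j,k}$; smoothness up to $\partial\tilde\Om_{j,k}$ never enters.

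The idea you are missing is that the only approximation parameters should be the cutoff scales $\d, R, \varepsilon$ themselves, and that every error term must be arranged so that it contains the \emph{full} kernel $\psi_j$ multiplied by a weight that is small in the $M_\varkappa$ or $S_\varkappa$ sense --- e.g.\ $(1-Y_\d)\BPsi$, $(1-Q_R)\BPsi$, $\BPsi(1-K_R)$ and the near-nucleus piece with weight $\t(|x|\varepsilon^{-1})$ --- so that \eqref{eq:psifull} makes each of them $o(1)$. The lack of integrability of $\tilde\eta_{j,k}$ over all of $\tilde\Om_{j,k}$ is then never confronted directly: one applies Corollary \ref{cor:scal1} for each fixed $\d,R,\varepsilon$, where everything is smooth and compactly supported, and the uniform boundedness of the resulting integrals (guaranteed by the existence of the limit from the operator side) combined with monotone convergence of the cutoffs yields both $H\in\plainL{3/4}$ and the value $A$. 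Your $\plainL2_{\rm loc}$ scheme also leaves $\int h_\nu^{3/4}\to\int H^{3/4}$ unjustified, since without the monotone structure this convergence is exactly the integrability statement you are trying to prove. Finally, a smaller point: the limits must be taken in the right order ($\varepsilon\to 0$ first for fixed $\d,R$, then $\d\to0$, $R\to\infty$), since $\varepsilon_0$ depends on $\d$ and $R$; your write-up does not address this.
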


The rest of the paper is focused on the proof of Theorem \ref{thm:gtopsi}.

As explained in the Introduction, 
at the heart of the proof is the formula \eqref{eq:repr} for the function $\psi$, 
which translates to the representation \eqref{eq:trepr} for the kernels $\psi_j$ defined in 
\eqref{eq:psij}.  
This representation allows us to reduce the problem to the model operator 
considered in Sect. \ref{sect:model} with the function $\Phi(x) = |x|$. 
At the first stage of this reduction we construct $\plainC\infty_0$ approximations of the 
functions $\tilde\xi_{j,k}$ and $\tilde\eta_{j, k}$ from \eqref{eq:trepr}.  

\subsection{Cut-off functions} 
Firt we construct appropriate cut-offs. 
Fix a $\d > 0$. Along with the sets \eqref{eq:sls} introduce 
\begin{align}\label{eq:slsd}
\SfS_{l, s}(\d) = \SfS_{s, l}(\d) = \{\bx\in\R^{3N}: |x_l-x_s|>\d\},\ 0\le l < s\le N,
\end{align} 
and for all $k = 0, 1, \dots, N-1$, define
\begin{align}\label{eq:ujd}
\SU_{k}(\d) = \bigg(\bigcap_{0\le l < s\le N-1} \SfS_{l, s}(\d)\bigg)\bigcap \bigg(
\bigcap_{\substack{0\le s\le N-1\\s\not = k}} \SfS_{s, N}(\d)\bigg).
\end{align}
Comparing with \eqref{eq:uj} we see that $\SU_k(\d)\subset \SU_{k, N}$, 
and for $\bx\in \SU_k(\d)$ all 
the coordinate pairs, except for $x_k$ and $x_N$, are separated by a distance $\d$.
Similarly to \eqref{eq:diag} define the diagonal set 
\begin{align*}
\SU^{(\rm d)}_{k}(\d) = \{\bx\in\SU_{k}(\d): x_j = x_N\}\subset\SU_{k, N}^{(\dc)}.
\end{align*}
Recall that the representation 
\eqref{eq:trepr} holds on the domain $\tilde\Om_{j, k}$ which satisfies 
\eqref{eq:tom} for all $j = 1, 2, \dots, N$, $k = 0, 1, \dots, N-1$. 
We construct a compact subset of $\tilde\Om_{j, k}$ in the following way. 
For $R>0$ let 
\begin{align*}
\SU_{k}(\d, R) = &\  \SU_k(\d)\bigcap \ (B_R)^N,\\ 
\SU^{(\rm d)}_{k}(\d, R) = &\ \{\bx\in\SU_{k}(\d, R): x_k = x_N\},
\end{align*}
where $B_R = \{x\in\R^3: |x| <R\}$. 
The  set $\SU^{(\rm d)}_{k}(\d, R)$ is bounded and its closure belongs to 
$\tilde\Om_{j,k}$ for all $\d>0, R>0$. Therefore, there exists 
an $\varepsilon_0 = \varepsilon_0(\d, R)>0$ such that the $\varepsilon$-neighbourhood 
\begin{align}\label{eq:omj}
\tilde\Om_{k}(\d, R, \varepsilon) := \{\bx\in\SU_{k}(\d, R): |x_k-x_N|<\varepsilon\},
\end{align}
together with its closure, belongs to $\tilde\Om_{j,k}$ 
for all $\varepsilon\in (0, \varepsilon_0)$:
\begin{align}\label{eq:omdere}
\overline{\tilde\Om_{k}(\d, R, \varepsilon)}\subset \tilde\Om_{j,k},\quad \forall 
\varepsilon\in (0, \varepsilon_0).
\end{align} 

Now we specify $\plainC\infty_0$ cutoffs supported on the domains 
$\tilde\Om_{k}(\d, R, \varepsilon)$. 
Let $\t\in\plainC\infty_0(\R)$ and $\z = 1-\t$ be as defined in \eqref{eq:sco}, \eqref{eq:sco1}. 
Denote 
\begin{align}\label{eq:ydel}
Y_\d(\hat\bx) = \prod_{0\le l < s \le N-1} \z\big(|x_l-x_s|(4\d)^{-1}\big).
\end{align}
By the definition of $\z$, 
\begin{align}\label{eq:omdel}
\supp Y_\d\subset 
\bigcap_{0\le l<s\le  N-1} \SfS_{l, s}(2\d),
\end{align}
where $\SfS_{l, s}(\ \cdot\ )$ is defined in \eqref{eq:slsd}. 
Define also cut-offs at infinity. Denote
\begin{align}\label{eq:qr} 
Q_R(\hat\bx) = \prod_{1\le l\le N-1} \t\big(|x_l|R^{-1}\big),\quad 
K_R(x) = \t\big(|x|R^{-1}\big).
\end{align}

\begin{lem}\label{lem:cutoff}
Let $\tilde\Om_k(\d, R, \varepsilon)$ be the set introduced in \eqref{eq:omj}. 
Then for all $\varepsilon<\min\{\varepsilon_0, \d\}$ the support of the function 
\begin{align}\label{eq:cutoff}
Q_R(\hat\bx) K_R(x) Y_\d(\hat\bx) \t\big(|x-x_k|\varepsilon^{-1}\big)
\end{align}
belongs to $\tilde\Om_k(\d, R, \varepsilon)$ for all $k = 0, 1, \dots, N-1$. 
\end{lem}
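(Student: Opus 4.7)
The plan is to verify directly, term by term, that every defining inequality of $\tilde\Om_k(\d, R, \varepsilon)$ is forced on any point $(\hat\bx, x)$ at which the product cutoff in \eqref{eq:cutoff} is nonzero. Since each factor is a composition with $\t$ or $\z$ whose supports are controlled explicitly by \eqref{eq:sco1}, the argument is essentially a bookkeeping exercise combined with one application of the triangle inequality.

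First I would fix an arbitrary point $(\hat\bx, x)$ in the support of the function \eqref{eq:cutoff} and translate each factor into a pointwise bound. From $Q_R(\hat\bx)\ne 0$ and $K_R(x)\ne 0$ together with \eqref{eq:sco1}, one gets $|x_l|\le R$ for $l = 1,\dots,N-1$ and $|x|\le R$, which (after identifying $x$ with $x_N$) yields the $(B_R)^N$-confinement required in the definition of $\SU_k(\d, R)$. From $Y_\d(\hat\bx)\ne 0$ and the fact that $\z(t) = 0$ whenever $|t| < 1/2$, one obtains
\begin{equation*}
|x_l - x_s| \ge 2\d, \qquad 0 \le l < s \le N-1,
\end{equation*}
where $x_0 = 0$ by convention. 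This is stronger than the first family of conditions in \eqref{eq:ujd}. Finally, the last factor $\t(|x - x_k|\varepsilon^{-1})\ne 0$ gives $|x - x_k|\le \varepsilon$, providing the diagonal condition from \eqref{eq:omj}.

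The only condition of \eqref{eq:ujd} not yet verified is the second family, namely $|x_s - x| > \d$ for all $s\ne k$, $0\le s\le N-1$, with $x$ in the role of $x_N$. This is where the hypothesis $\varepsilon < \d$ enters: by the triangle inequality and the two bounds just derived,
\begin{equation*}
|x_s - x| \ge |x_s - x_k| - |x_k - x| \ge 2\d - \varepsilon > \d,
\end{equation*}
for every $s \ne k$ in $\{0, 1, \dots, N-1\}$. This completes the check that the point lies in $\tilde\Om_k(\d, R, \varepsilon)$. The role of the second hypothesis $\varepsilon < \varepsilon_0$ is not needed inside this verification itself; it is recorded so that, combined with \eqref{eq:omdere}, one automatically obtains the inclusion of the support in the analyticity domain $\tilde\Om_{j,k}$ as well, which is how the lemma will be used in the sequel.

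There is no genuine obstacle here: the only mildly delicate point is that the cutoffs are supported on closed sets (so the inequalities come out as $\le$ rather than strict), but this is absorbed by the usual convention of stating support containment up to closure, and the strict inequality $\varepsilon < \d$ gives the necessary slack in the crucial triangle-inequality step above.
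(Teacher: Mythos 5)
Your proposal is correct and follows essentially the same route as the paper: read off the support constraints of each cutoff factor ($|x_l-x_s|\ge 2\d$ from $Y_\d$, the ball confinement from $Q_RK_R$, and $|x-x_k|\le\varepsilon$ from the last factor), then use the single triangle inequality $|x-x_s|\ge|x_k-x_s|-|x-x_k|\ge 2\d-\varepsilon>\d$ to get the remaining separation conditions in \eqref{eq:ujd}. Your remarks on the closed-versus-open support technicality and on the role of $\varepsilon<\varepsilon_0$ (needed only for the subsequent inclusion \eqref{eq:omdere}, not for this verification) are accurate and consistent with how the paper uses the lemma.
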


\begin{proof} 
Assume that $\bx$ belongs to the support of the function \eqref{eq:cutoff}. 
In view of \eqref{eq:omdel}, for such $\bx$  
we have 
\begin{align}\label{eq:omdel1}
|x_l-x_s|>2\d, \ 0\le l< s\le N-1,\quad \textup{and}\quad |x-x_k|<\varepsilon.
\end{align}
As $\varepsilon<\d$, for all $s = 0, 1, \dots, N-1$, $s\not = k$, we can write
\begin{align*}
|x-x_s|\ge |x_k-x_s| - |x-x_k|>  2\d - \varepsilon > \d,
\end{align*}
By definition \eqref{eq:ujd}, together with 
\eqref{eq:omdel1} this gives 
$\bx\in \SU_k(\d)$. Moreover, since $\supp (Q_R K_R)\subset (B_R)^N$, this means that 
$\bx\in \SU_k(\d, R)$.  
Now the claimed inclusion follows from the definition \eqref{eq:omj}. 
\end{proof}

\subsection{} 
Using the cut-offs introduced above we construct a convenient approximation 
for the kernels $\psi_j(\hat\bx, x)$. 
Taking if necessary, a smaller $\varepsilon_0$ in \eqref{eq:omdere},  
we will assume that $\varepsilon_0(\d, R) \le \d$, and hence for all 
$\varepsilon < \varepsilon_0(\d, R)$, apart from the inclusion \eqref{eq:omdere} 
we have Lemma \ref{lem:cutoff}. 
Thus, for these values of $\varepsilon$ 
the real analytic functions $\tilde\xi_{j,k}, \tilde\eta_{j,k}$ 
are well-defined on the support of \eqref{eq:cutoff}, 
and hence the kernel 
\begin{align}\label{eq:upsilon}
\Upsilon_j[\d,R,\varepsilon](\hat\bx, x) 
=  Q_R(\hat\bx) Y_\d(\hat\bx) K_R(x)\sum_{k=1}^{N-1}\t\big(|x-x_k|\varepsilon^{-1}\big)
|x-x_k|\tilde\eta_{j, k}(\hat\bx, x),
\end{align}
is well-defined for all $(\hat\bx, x)\in\R^{3N}$, and each of the functions
\begin{align*}
Q_R(\hat\bx) Y_\d(\hat\bx) K_R(x) \t\big(|x-x_k|\varepsilon^{-1}\big)
 \tilde\eta_{j,k}(\hat\bx, x),\quad k = 1, 2, \dots, N-1,
\end{align*}
is $\plainC\infty_0(\R^{3N})$. 
Our objective is to prove that the vector-valued kernel 
\begin{align*}
\boldsymbol\Upsilon[\d, R, \varepsilon](\hat\bx, x) 
= \big\{\Upsilon_j[\d,R,\varepsilon](\hat\bx, x)\big\}_{j=1}^N
\end{align*} 
is an approximation for $\BPsi(\hat\bx, x)$(see \eqref{eq:bpsi}) in the following sense. 

\begin{lem}\label{lem:central} 
The following relations hold: 
\begin{align*}
\SfG_{3/4}(\BPsi) = \lim\limits_{\substack{\d\to 0\\ R\to\infty}} \lim_{\varepsilon\to 0}
\SfG_{3/4}\big(\iop(\boldsymbol\Upsilon[\d, R, \varepsilon])\big),\quad 
\sg_{3/4}(\BPsi) = \lim\limits_{\substack{\d\to 0\\ R\to\infty}} \lim_{\varepsilon\to 0}
\sg_{3/4}\big(\iop(\boldsymbol\Upsilon[\d, R, \varepsilon])\big), 
\end{align*}
where the limits on the right-hand side exist. 
\end{lem}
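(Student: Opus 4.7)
The plan is to use the quasi-triangle inequalities \eqref{eq:triangleg}, which provide continuity of $\SfG_{3/4}$ and $\sg_{3/4}$; this reduces the lemma to establishing
\begin{equation*}
\lim_{\substack{\d\to 0\\ R\to\infty}}\ \lim_{\varepsilon\to 0}\SfG_{3/4}\bigl(\BPsi - \iop(\boldsymbol\Upsilon[\d, R, \varepsilon])\bigr) = 0.
\end{equation*}
I would introduce the cut-offs $\chi_k(\hat\bx, x) = \t(|x-x_k|\varepsilon^{-1})$ for $k = 0, 1, \dots, N-1$, recalling the convention $x_0 = 0$, and set $\chi_* = \sum_{k=0}^{N-1}\chi_k$. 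For $\varepsilon<\d$ the supports of the $\chi_k$ on $\supp(Q_R K_R Y_\d)$ are pairwise disjoint by construction of $Y_\d$. Applying \eqref{eq:trepr} on each $\supp(Q_R K_R Y_\d \chi_k)\subset \tilde\Om_{j, k}$, as permitted by Lemma \ref{lem:cutoff}, one obtains the decomposition
\begin{equation*}
\psi_j - \Upsilon_j[\d, R, \varepsilon] = F_j + G_j + H_j + I_j
\end{equation*}
with
\begin{align*}
F_j &= (1 - Q_R K_R Y_\d)\psi_j,\qquad G_j = Q_R K_R Y_\d(1 - \chi_*)\psi_j,\\
H_j &= Q_R K_R Y_\d\,\chi_0\,\psi_j,\qquad I_j = Q_R K_R Y_\d \sum_{k=1}^{N-1}\chi_k\,\tilde\xi_{j, k}.
\end{align*}

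Next I would check that each of $I_j$ and $G_j$ is $\plainC\infty_0(\R^{3N})$: the former because $\tilde\xi_{j, k}$ is real-analytic on $\tilde\Om_{j,k}$ and the cut-off is compactly supported there; the latter because on $\supp G_j$ every distance $|x_l - x_s|$ and every $|x - x_l|$ with $l, s\in\{0, 1, \dots, N-1\}$ is bounded below by $\min\{\varepsilon/2, 2\d\}$, so $\psi_j$ is real-analytic throughout the (compact) support. Proposition \ref{prop:BS} with arbitrarily large $l$ then yields $\SfG_{3/4}(\iop(I_j)) = \SfG_{3/4}(\iop(G_j)) = 0$ for every admissible $(\d, R, \varepsilon)$. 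For the nuclear-coalescence term $H_j$ I would use Proposition \ref{prop:repr} at $k = 0$ to split
$H_j = Q_R K_R Y_\d \chi_0 \tilde\xi_{j, 0} + Q_R K_R Y_\d \chi_0 |x|\tilde\eta_{j, 0}$;
the first summand is again $\plainC\infty_0$, while for the second a Taylor expansion $\tilde\eta_{j, 0}(\hat\bx, x) = \tilde\eta_{j, 0}(\hat\bx, 0) + x\cdot \tilde c(\hat\bx, x)$ with smooth $\tilde c$ produces a rank-one leading term and a smoother remainder $Q_R K_R Y_\d\chi_0|x|(x\cdot \tilde c)$ lying in $\plainH l(\R_x^3)$ for every $l<7/2$, so Proposition \ref{prop:BS} with $l = 3$ (giving $1/q = 3/2 > 4/3$) yields $\SfG_{3/4}(\iop(H_j)) = 0$.

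The decisive step, and the only genuinely non-trivial one, is
\begin{equation*}
\lim_{\substack{\d\to 0\\ R\to\infty}} \SfG_{3/4}\bigl(\iop(F_j)\bigr) = 0.
\end{equation*}
Since $F_j$ does not depend on $\varepsilon$, the inner limit becomes automatic once the other three pieces have been disposed of. The cut-off $1 - Q_R K_R Y_\d$ is supported in the union of a far-field region (where some coordinate exceeds $R/2$) and a near-coalescence region (where some pair $|x_l - x_s|$ with $0\le l<s\le N-1$ is smaller than $8\d$). On the far-field part I would exploit the exponential decay \eqref{eq:exp} to produce an $\plainL2$-weight in $\hat\bx$ that decays arbitrarily rapidly in $R$; on the near-coalescence part I would appeal to the sharp Sobolev regularity of $\psi$ away from the remaining coalescences, as developed in \cite{FS2018} and already exploited in \cite{Sobolev2020}, to bound $\|F_j(\hat\bx, \cdot)\|_{\plainH l}$ in a weighted $\plainL2(\R^{3N-3})$-space with explicit control over $\d$. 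Proposition \ref{prop:BS} with $l$ large enough that $1/q = 1/2 + l/3 > 4/3$ converts these estimates into the required smallness of $\SfG_{3/4}(\iop(F_j))$ as $\d\to 0, R\to\infty$. This control of $F_j$ is the main obstacle, as it rests on the regularity theory for multi-particle eigenfunctions at pair coalescences rather than on the factorization framework itself. Combining the four pieces via \eqref{eq:triangleg} yields the first assertion of the lemma, and the second inequality in \eqref{eq:triangleg} delivers the companion statement for $\sg_{3/4}$.
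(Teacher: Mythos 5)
Your decomposition $\psi_j-\Upsilon_j[\d,R,\varepsilon]=F_j+G_j+H_j+I_j$ mirrors the paper's two-stage trimming, and the pieces $G_j$ and $I_j$ are handled exactly as in the paper ($\plainC\infty_0$ kernels, Proposition \ref{prop:BS}). Your treatment of the nuclear term $H_j$ is a legitimate variant: the paper instead keeps $\psi_j$ intact and uses the weighted bound \eqref{eq:psifull} with $a(x)=\t(|x|\varepsilon^{-1})$ to get $\SfG_{3/4}\lesssim\varepsilon^{9/8}\to0$, whereas your Taylor-expansion argument gives exact vanishing; both work (your expansion is along the segment $(\hat\bx,sx)$, which stays inside $\tilde\Om_{j,0}$ by Lemma \ref{lem:cutoff}).

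The genuine gap is in the term you yourself flag as decisive, $F_j=(1-Q_RK_RY_\d)\psi_j$. You propose to bound $\|F_j(\hat\bx,\cdot)\|_{\plainH{l}}$ and invoke Proposition \ref{prop:BS} with $l$ ``large enough that $1/q=1/2+l/3>4/3$'', which forces $l\ge3$. But the cut-off $1-Q_RK_RY_\d$ only localizes in $\hat\bx$ (and in $|x|$ large); it does nothing to remove the singularities of $\psi_j(\hat\bx,\cdot)$ at the points $x=x_k$, $k=0,\dots,N-1$. Since $\psi_j$ contains the factor $|x-x_k|\tilde\eta_{j,k}$ near such points, the map $x\mapsto\psi_j(\hat\bx,x)$ lies in $\plainH{l}_{\textup{loc}}$ only for $l<5/2$, so $\|F_j(\hat\bx,\cdot)\|_{\plainH{3}}=\infty$ and the hypothesis of Proposition \ref{prop:BS} fails; the admissible choice $l=2$ gives only $s_k=o(k^{-7/6})$ with $7/6<4/3$, which says nothing about $\SfG_{3/4}$. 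No amount of regularity theory ``away from the remaining coalescences'' repairs this, because the obstruction sits at the pair coalescences in the $x$-variable themselves. What is actually needed — and what the paper uses — is the quantitative weak-class estimate \eqref{eq:psifull} from \cite[Theorem 3.1]{Sobolev2020}, $\SfG_{3/4}(b\,\iop(\psi_j)\,a)\lesssim\big(M_\varkappa(b)S_\varkappa(a)\big)^{3/4}$, applied separately with $b=1-Y_\d$ (giving $O(\d^{9/8})$), $b=1-Q_R$ and $a=1-K_R$ (giving exponentially small contributions), combined via \eqref{eq:triangleg} and Corollary \ref{cor:zero1}. That borderline $\BS_{3/4,\infty}$ bound with explicit dependence on the weights is a strictly stronger input than Proposition \ref{prop:BS} and cannot be recovered from it; without it your argument for $F_j$ does not close.
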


The proof of this lemma is given in the next section.

\section{Proof of Lemma \ref{lem:central}}\label{sect:trim}

\subsection{Spectral estimates for $\BPsi$} 
Our proof of Lemma \ref{lem:central} relies on the bounds 
obtained in \cite{Sobolev2020}. 
Let $\CC_n = (0, 1)^3 + n$,\ $n\in\mathbb Z^3$. 
Assume that $b\in\plainL\infty(\R^{3N-3})$ and that 
$a\in\plainL2_{\textup{\tiny loc}}(\R^3)$ is such that  
\begin{align*}
\sup_{n\in\mathbb Z^3} \|a\|_{\plainL2(\CC_n)}<\infty.
\end{align*}
Then the functionals 
\begin{align*}
S_\varkappa(a) = \bigg[\sum_{n\in\mathbb Z^3} e^{- \frac{3}{4}\varkappa|n|}
\|a\|_{\plainL2(\CC_n)}^{\frac{3}{4}}\bigg]^{\frac{4}{3}} 
\end{align*}
and 
\begin{align*}
M_\varkappa(b) = \biggl[\int_{\R^{3N-3}} 
|b(\hat\bx)|^2 e^{-2\varkappa|\hat\bx|} 
d\hat\bx\biggr]^{\frac{1}{2}},
\end{align*}
are both finite for all $\varkappa >0$. 
Recall that the functional $\SfG_p$ is defined in \eqref{eq:limsupinf}, 
and $\psi_j$ -- in \eqref{eq:psij}.  
The next bound for the operators 
$b\,\iop(\psi_j)a$ 
follows from \cite[Theorem 3.1]{Sobolev2020}. 

\begin{prop} 
Assume that $\psi$ satisfies \eqref{eq:exp}, and let $j = 1, 2, \dots, N$.  
Let the functions $a$ and $b$ be as described above. 
Then  $b\, \iop(\psi_j) a\in \BS_{3/4, \infty}$ and for some $\varkappa \le \varkappa_0$ we have  
\begin{align}\label{eq:psifull}
 \SfG_{3/4}(b\,\iop(\psi_j) a)
\lesssim \big(M_\varkappa(b) S_\varkappa(a)\big)^{\frac{3}{4}}.
\end{align}
\end{prop}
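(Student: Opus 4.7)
The plan is to derive the bound directly from Theorem~3.1 of \cite{Sobolev2020}; the task reduces to matching hypotheses and handling a trivial coordinate permutation. For each $j$, the kernel $\psi_j(\hat\bx, x)$ is obtained from the eigenfunction $\psi(\bx)$ by permuting the three-dimensional slots, so the exponential decay bound \eqref{eq:exp} transfers verbatim with the same constant $\varkappa_0$, and the induced unitary change of variable on the ``target'' space $\plainL2(\R^{3N-3})$ leaves all singular values of $b\,\iop(\psi_j) a$ invariant. It therefore suffices to work with $j=N$, where $\psi_N=\psi$.

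For this case, the strategy reconstructs the proof in \cite{Sobolev2020} via a lattice decomposition of the $x$-variable. Partition $\R^3$ into unit cubes $\CC_n$, set $a_n := a\,\chi_{\CC_n}$, and write $b\,\iop(\psi)a = \sum_n T_n$ with $T_n := b\,\iop(\psi) a_n$. The central local bound to establish is
\begin{align*}
\SfG_{3/4}(T_n) \lesssim M_\varkappa(b)^{3/4}\, e^{-\frac{3}{4}\varkappa |n|}\, \|a\|_{\plainL2(\CC_n)}^{3/4}
\end{align*}
for some $\varkappa\in(0,\varkappa_0]$. The exponential factor arises by splitting \eqref{eq:exp}: on the support of $a_n$ one has $|x|\sim|n|$, so $e^{-\varkappa_0|(\hat\bx,x)|}$ contributes an $e^{-\varkappa|n|}$ prefactor in $x$ and an $e^{-\varkappa|\hat\bx|}$ factor in $\hat\bx$, the latter being absorbed by $b$ into $M_\varkappa(b)$. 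The exponent $3/4$ itself, matching the rate $s_k\sim k^{-4/3}$, is dictated by the Lipschitz regularity of $\psi$ at the pair-coalescence set.

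Given the local bound, the global estimate follows by the quasi-triangle inequality \eqref{eq:triangleg} with $p=3/4$ (so that $1/(p+1) = 4/7$):
\begin{align*}
\SfG_{3/4}\Big(\sum_n T_n\Big)^{4/7}\le \sum_n \SfG_{3/4}(T_n)^{4/7}\lesssim M_\varkappa(b)^{3/7}\sum_n e^{-\frac{3}{7}\varkappa|n|}\|a\|_{\plainL2(\CC_n)}^{3/7}.
\end{align*}
A Hölder inequality with exponents $7/4$ and $7/3$ then bounds the right-hand sum by $\big(\sum_n e^{-\frac{3}{4}\varkappa'|n|}\|a\|_{\plainL2(\CC_n)}^{3/4}\big)^{4/7}$ times a convergent geometric constant, at the cost of replacing $\varkappa$ by any strictly smaller $\varkappa'$. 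Raising to the $7/4$-th power produces \eqref{eq:psifull} with $\varkappa'$ in the role of $\varkappa$; since the statement only requires the existence of \emph{some} $\varkappa\in(0,\varkappa_0]$, this loss is harmless.

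The principal obstacle lies entirely in the local estimate on $\SfG_{3/4}(T_n)$. A direct application of Proposition~\ref{prop:BS} with $l=1$ yields only $s_k(T_n)=o(k^{-5/6})$, which is far weaker than the required rate $k^{-4/3}$; yet one cannot use $l\ge 2$ because $\psi$ fails to lie in $\plainH2$ on cubes touching the coalescence set. Closing this gap is the technical heart of \cite{Sobolev2020}: one exploits the structural decomposition $\psi = \xi + |x_j - x_k|\eta$ at pair coalescence to separate a smooth piece (handled by higher-order Birman--Solomyak estimates) from a Lipschitz piece whose scale-invariant structure is analyzed via integral operators with homogeneous kernels of order one, along the lines of Corollary~\ref{cor:w}. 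The remaining steps---permutation, cube decomposition, and Hölder---are elementary.
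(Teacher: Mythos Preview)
The paper does not prove this proposition; it simply records it as a consequence of \cite[Theorem~3.1]{Sobolev2020}. Your proposal does the same at the top level, so the overall route matches. Where you go further---sketching what you believe \cite{Sobolev2020} does---you mischaracterise the key input. The local bound $\SfG_{3/4}(T_n)\lesssim\dots$ in \cite{Sobolev2020} is \emph{not} obtained via the representation $\psi=\xi_{j,k}+|x_j-x_k|\eta_{j,k}$ from \cite{FHOS2009} together with homogeneous-kernel asymptotics in the spirit of Corollary~\ref{cor:w}. As the introduction of the present paper states explicitly, \cite{Sobolev2020} instead uses the pointwise derivative estimates of Fournais--S\o rensen \cite{FS2018}. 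The \cite{FHOS2009} decomposition is the new ingredient of the \emph{current} paper, brought in precisely to upgrade the upper bound of \cite{Sobolev2020} to an asymptotic formula; it is not part of the proof of the proposition you are citing. Moreover, using the \cite{FHOS2009} decomposition for a global quantitative bound would face the very difficulty the introduction flags: $\xi_{j,k},\eta_{j,k}$ are only defined on the open set $\Om_{j,k}$ with no a~priori control near its boundary.

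A smaller technical point: you apply the quasi-triangle inequality \eqref{eq:triangleg} for $\SfG_p$ directly to the infinite sum $\sum_n T_n$. That inequality is stated for two summands and does not extend to infinite sums without further argument (indeed $\SfG_p$ is a $\limsup$ functional). The clean way is to work with the quasi-norm $\|\cdot\|_{3/4,\infty}$ via \eqref{eq:triangle}, establish convergence of $\sum_n T_n$ in $\BS_{3/4,\infty}$, and then invoke $\SfG_p(T)\le\|T\|_{p,\infty}^p$; this also delivers the membership $b\,\iop(\psi_j)a\in\BS_{3/4,\infty}$ claimed in the proposition.
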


\subsection{Proof of Lemma \ref{lem:central}}
 The strategy of the proof is to ``trim down" the kernel \eqref{eq:bpsi} in several steps, 
 by multiplying it by appropriate cut-offs including the functions \eqref{eq:ydel} and \eqref{eq:qr}, or dropping some of the components, 
 until it reduces to the kernel \eqref{eq:upsilon}. 
 At every step of this process we justify the trimming using either Corollary \ref{cor:zero} 
 or Corollary \ref{cor:zero1}.

The first stage is described in the next lemma.

\begin{lem}
The following relations hold: 
\begin{align}\label{eq:asymp1}
\SfG_{3/4}(\BPsi) = \lim\limits_{\substack{\d\to 0\\ R\to\infty}} \SfG_{3/4}(Q_RY_\d \BPsi K_R),\quad 
\sg_{3/4}(\BPsi) = \lim\limits_{\substack{\d\to 0\\ R\to\infty}} \sg_{3/4}(Q_RY_\d \BPsi K_R),
\end{align}
where the limits on the right-hand side exist. 
\end{lem}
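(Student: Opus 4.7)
The plan is to apply Corollary \ref{cor:zero1} with $T = \BPsi$ and with the approximating family $T_{\d, R} = Q_R Y_\d \BPsi K_R$. Once we verify the hypothesis
\begin{align*}
\SfG_{3/4}\bigl(\BPsi - Q_R Y_\d \BPsi K_R\bigr) \longrightarrow 0, \quad \d \to 0,\ R \to \infty,
\end{align*}
both equalities in \eqref{eq:asymp1}, as well as the existence of the limits, follow at once (the case of $\sg_{3/4}$ is identical to that of $\SfG_{3/4}$ in the corollary). To set this up I would add and subtract $Q_R Y_\d \BPsi$, which gives the decomposition
\begin{align*}
\BPsi - Q_R Y_\d \BPsi K_R = (1 - Q_R Y_\d)\BPsi + Q_R Y_\d \BPsi (1 - K_R).
\end{align*}

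By the quasi-triangle inequality \eqref{eq:triangleg}, each summand can be controlled separately, and since $\BPsi$ has $N$ scalar components $\iop(\psi_j)$ (interpreted as a column of operators acting from $\plainL2(\R^3)$ into $\plainL2(\R^{3N-3})$), the same inequality reduces everything to bounding the scalar pieces
\begin{align*}
\SfG_{3/4}\bigl((1 - Q_R Y_\d)\,\iop(\psi_j)\bigr) \quad \text{and} \quad \SfG_{3/4}\bigl(Q_R Y_\d\,\iop(\psi_j)\,(1 - K_R)\bigr),\quad j = 1, \dots, N.
\end{align*}
For these I would apply the weighted estimate \eqref{eq:psifull}. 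For the first family, take $b = 1 - Q_R Y_\d$ and $a \equiv 1$, so that $S_\varkappa(1)$ is a fixed finite constant while
\begin{align*}
M_\varkappa(1 - Q_R Y_\d)^2 = \int_{\R^{3N-3}} |1 - Q_R(\hat\bx) Y_\d(\hat\bx)|^2 e^{-2\varkappa|\hat\bx|}\, d\hat\bx \longrightarrow 0
\end{align*}
by dominated convergence, since $Q_R Y_\d \to 1$ pointwise on the full-measure set where all pairs $(x_l, x_s)$ with $0 \le l < s \le N-1$ are distinct, and the integrand is uniformly bounded by the integrable function $e^{-2\varkappa|\hat\bx|}$. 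For the second family, take $b = Q_R Y_\d$, which gives the uniform bound $M_\varkappa(Q_R Y_\d) \le M_\varkappa(1) < \infty$, and $a = 1 - K_R$, which is supported in $\{|x| > R/2\}$; dominated convergence inside the sum defining $S_\varkappa$ then yields $S_\varkappa(1 - K_R) \to 0$ as $R \to \infty$, because for each fixed $n \in \Z^3$ one has $\|1 - K_R\|_{\plainL2(\CC_n)} \to 0$, and the weights $e^{-3\varkappa|n|/4}$ provide the required integrable majorant.

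Combining these two estimates shows that the $\SfG_{3/4}$-quasinorm of the difference tends to zero along any joint sequence with $\d_k \to 0$ and $R_k \to \infty$, and Corollary \ref{cor:zero1} then delivers both limits in \eqref{eq:asymp1} along with the claimed equalities. The only subtle point is handling the two-parameter nature of the approximation: this is done by invoking Corollary \ref{cor:zero1} along arbitrary joint sequences, which pins down the common value $\SfG_{3/4}(\BPsi)$ (resp. $\sg_{3/4}(\BPsi)$) independently of the sequence chosen. The main obstacle in making the whole scheme work is the availability of the exponential weight estimate \eqref{eq:psifull} from \cite{Sobolev2020}; without it the dominated convergence arguments for $M_\varkappa$ at infinity in $\hat\bx$, and for $S_\varkappa$ in the tail $|x| > R/2$, would have no source of uniform control.
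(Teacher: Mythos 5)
Your proposal is correct and takes essentially the same route as the paper: both reduce matters to the weighted estimate \eqref{eq:psifull} applied componentwise, the quasi-triangle inequality \eqref{eq:triangleg}, and Corollary \ref{cor:zero1}. The only (cosmetic) differences are that the paper splits the difference into three pieces, $(I-Q_R)\BPsi + Q_R(1-Y_\d)\BPsi + Q_RY_\d\BPsi(I-K_R)$, and derives explicit rates ($\lesssim \d^{9/8}$ and $\lesssim e^{-3\varkappa R/8}$) where you use dominated convergence along joint sequences.
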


\begin{proof} First we check that 
\begin{align}\label{eq:psidelR}
\begin{cases}
\lim\limits_{\d\to 0} \SfG_{3/4}\big((I- Y_\d)\BPsi\big) = 0,\\[0.3cm] 
\lim\limits_{R\to \infty} \SfG_{3/4}\big((I- Q_R)\BPsi\big) = 0,\
\lim\limits_{R\to \infty} \SfG_{3/4}\big(\BPsi(I- K_R)\big) = 0. 
\end{cases}
\end{align}
It suffices to check the above relations for each operator $\iop(\psi_j)$, $j = 1, 2, \dots, N$. 
Consider first $(I-Y_\d)\iop(\psi_j)$. 
Since 
\begin{align*}
1-Y_\d(\hat\bx)\le \sum_{0\le l<s\le N-1} \t\big(|x_l-x_s|(4\d)^{-1}\big),
\end{align*}
it follows from \eqref{eq:psifull} that 
\begin{align*}
\SfG_{3/4}\big((1- Y_\d)\iop(\psi_j)\big)
\lesssim &\ \big(M_\varkappa(1-Y_\d)\big)^{3/4}\\
\lesssim &\ \sum_{0\le l < s\le N-1} 
\bigg[
\int \t\big(|x_l-x_s|(4\d)^{-1}\big)^2 e^{-2\varkappa |\hat\bx| } d\hat\bx
\bigg]^{3/8}\lesssim \d^{9/8}\to 0,\ \d\to 0,
\end{align*}
and hence the first relation in \eqref{eq:psidelR} holds. 

In a similar way one estimates $(I-Q_R)\iop(\psi_j)$ and 
$\iop(\psi_j)(I-K_R)$. Estimate, for example, the first of these operators. 
Since 
\begin{align*}
1-Q_R(\hat\bx)\le \sum_{1\le l\le N-1} \z\big(|x_l|R^{-1}\big),
\end{align*}
it follows from \eqref{eq:psifull} again that 
\begin{align*}
\SfG_{3/4}\big((I-Q_R)\iop(\psi_j)\big)\lesssim &\ \big(M_\varkappa(1-Q_R)\big)^{3/4}\\
\lesssim &\ \sum_{0\le l\le N-1}
\bigg[
\int_{\R^{3N-3}} \z(|x_l| R^{-1})^2  e^{-2\varkappa |\hat\bx| }\, d\hat\bx
\bigg]^{3/8}\lesssim  e^{-3\varkappa R/8}\to 0, \ R\to\infty, 
\end{align*}
whence the second equality in \eqref{eq:psidelR}.  

Represent $\BPsi$ in the form 
\begin{align*}
\BPsi = Q_R Y_d\BPsi K_R + (I-Q_R)\BPsi + Q_R(1-Y_\d)\BPsi + 
Q_R Y_\d \BPsi(I-K_R),
\end{align*}
According to \eqref{eq:triangleg}, 
\begin{align*}
\SfG_{3/4}\big(\BPsi - Q_R Y_d\BPsi K_R\big)^{\frac{\scalel{3}}{\scalel{7}}}
\le &\ \SfG_{3/4}\big((I-Q_R)\BPsi\big)^{\frac{\scalel{3}}{\scalel{7}}}\\[0.2cm] 
&\ + \SfG_{3/4}\big(Q_R(1-Y_\d)\BPsi\big)^{\frac{\scalel{3}}{\scalel{7}}} + 
\SfG_{3/4}\big(Q_R Y_\d \BPsi(I-K_R)\big)^{\frac{\scalel{3}}{\scalel{7}}}\\[0.2cm]
\le &\ \SfG_{3/4}\big((I-Q_R)\BPsi\big)^{\frac{\scalel{3}}{\scalel{7}}}\\[0.2cm] 
&\ + \SfG_{3/4}\big((1-Y_\d)\BPsi\big)^{\frac{\scalel{3}}{\scalel{7}}} + 
\SfG_{3/4}\big(\BPsi(I-K_R)\big)^{\frac{\scalel{3}}{\scalel{7}}}.
\end{align*}
By virtue of \eqref{eq:psidelR} the right-hand side tends to 
zero as $\d\to 0, R\to\infty$.
By Corollary \ref{cor:zero1} this implies \eqref{eq:asymp1}. 
\end{proof}

At the next stage we partition the kernel 
\begin{align}\label{eq:trim}
Q_R(\hat\bx) Y_\d(\hat\bx) \BPsi(\hat\bx, x) K_R(x)
\end{align}
of the operator $Q_R Y_\d\BPsi K_R$ on the right-hand side of the formulas 
\eqref{eq:asymp1}. 
We do this by introducing the cut-offs $\t\big(|x-~x_k|\varepsilon^{-1}\big)$, 
$k = 0, 1, \dots, N-1$, assuming that $\varepsilon<\d$. 
In view of the definition 
\eqref{eq:ydel} it is straightforward to check that under this condition, we have 
\begin{align*}
Y_\d(\hat\bx)\sum_{k=0}^{N-1} \t\big(|x-x_k|\varepsilon^{-1}\big) 
+ Y_\d(\hat\bx)\prod_{k=0}^{N-1} \z\big(|x-x_k|\varepsilon^{-1}\big) = Y_\d(\hat\bx),
\end{align*} 
and hence the $j$'th component of \eqref{eq:trim} can be represented as follows:
\begin{align}\label{eq:split}
Q_R(\hat\bx) Y_\d(\hat\bx) \psi_j(\hat\bx, x) K_R(x)
= \sum_{k=0}^{N-1}\phi_{j,k}[\d, R, \varepsilon](\hat\bx, x) + \tau_j[\d, R, \varepsilon](\hat\bx, x)
\end{align}
with 
\begin{align*}
\phi_{j,k}[\d, R, \varepsilon](\hat\bx, x)
= &\ Q_R(\hat\bx) Y_\d(\hat\bx)\t\big(|x-~x_k|\varepsilon^{-1}\big) \psi_j(\hat\bx, x) K_R(x),\quad 
k = 0, 1, \dots, N-1,\\[0.2cm]
\tau_j[\d, R, \varepsilon](\hat\bx, x)
= & \  Q_R(\hat\bx) 
Y_\d(\hat\bx) \prod_{k=0}^{N-1} \z\big(|x-x_k|\varepsilon^{-1}\big)  \psi_j(\hat\bx, x) K_R(x).
\end{align*} 
First we show that the kernels $\tau_j[\d, R, \varepsilon]$ and $\phi_{j,0}[\d, R, \varepsilon]$ 
give negligible contributions to the asymptotics. 

\begin{lem} For each $\d>0, R>0$ and $\varepsilon<\d$ one has 
\begin{align}\label{eq:tau}
\SfG_{3/4}\big(\iop\big(\tau_j[\d, R, \varepsilon]\big)\big) = 0,\ j = 1, 2, \dots, N.
\end{align}
\end{lem}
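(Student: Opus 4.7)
The plan is to show that the cutoffs in the definition of $\tau_j[\d,R,\varepsilon]$ force the argument of $\psi$ to lie in a compact subset of $\SU$, so that $\tau_j[\d,R,\varepsilon]$ belongs to $\plainC\infty_0(\R^{3N})$, after which the Birman--Solomyak smoothness bound of Proposition~\ref{prop:BS} can be invoked with an arbitrarily high regularity exponent $l$.

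The first step would be the geometric one. Recall the relabeling $\psi_j(\hat\bx,x) = \psi(x_1,\dots,x_{j-1},x,x_j,\dots,x_{N-1})$, so that in the original variable $\by$ of $\psi$ one has $y_j = x$, $y_l = x_l$ for $l<j$, and $y_l = x_{l-1}$ for $l>j$. On the support of $\tau_j[\d,R,\varepsilon]$, the factor $Q_R$ imposes $|x_l|\le R$ for $l=1,\dots,N-1$ and $K_R$ imposes $|x|\le R$; the factor $Y_\d$ imposes $|x_l - x_s|\ge 2\d$ for all $0\le l<s\le N-1$ (with $x_0=0$, so in particular $|x_l|\ge 2\d$ for $l\ge 1$); and $\prod_{k=0}^{N-1}\z(|x-x_k|\varepsilon^{-1})$ imposes $|x-x_k|\ge \varepsilon/2$ for each $k=0,1,\dots,N-1$, giving also $|x|\ge \varepsilon/2$. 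A short case analysis over the index pairs then rules out every possible coalescence $y_l = 0$ or $y_l = y_s$ ($1\le l<s\le N$): differences $y_l - y_s$ with both indices $<j$ or both $>j$ are controlled by the $2\d$-separation of the $x_l$'s, a difference $y_j - y_l = x - x_l$ (or $x - x_{l-1}$) is controlled by the $\varepsilon/2$-separation, and the origin conditions are handled the same way. Thus $\by$ stays in a compact subset $K_j(\d,R,\varepsilon)\subset \SU$.

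Since $\psi$ is real-analytic on $\SU$, every partial derivative of $\psi$ is continuous on $\SU$ and therefore uniformly bounded on $K_j(\d,R,\varepsilon)$. Combined with the smoothness and compact support of the cutoff factors, this shows $\tau_j[\d,R,\varepsilon]\in\plainC\infty_0(\R^{3N})$. In particular, for every $l\in\mathbb N_0$ the function $\hat\bx\mapsto \|\tau_j[\d,R,\varepsilon](\hat\bx,\cdot)\|_{\plainH{l}(\R^3)}$ is bounded and compactly supported in $\R^{3N-3}$, hence lies in $\plainL2(\R^{3N-3})$.

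Finally I would apply Proposition~\ref{prop:BS} with $d=3$, $n=3(N-1)$, $b\equiv 1$, an auxiliary $a\in\plainC\infty_0(\R^3)$ equal to $1$ on $\bar B_R$, and $T(\hat\bx,x) = \tau_j[\d,R,\varepsilon](\hat\bx,x)$; the resulting integral operator then coincides with $\iop(\tau_j[\d,R,\varepsilon])$. Choosing $l$ large enough that $1/q = 1/2 + l/3 > 4/3$ (for instance $l=3$), the proposition yields $\SfG_q(\iop(\tau_j[\d,R,\varepsilon]))=0$, which immediately implies $\SfG_{3/4}(\iop(\tau_j[\d,R,\varepsilon]))=0$, as required. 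The only point that really needs care is the bookkeeping in the geometric step; once the coalescence hyperplanes are seen to be avoided, the conclusion reduces to a routine application of a standard smoothness estimate for integral operators.
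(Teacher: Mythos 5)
Your proposal is correct and follows essentially the same route as the paper: the cutoffs $Y_\d$, $\prod_k\z(|x-x_k|\varepsilon^{-1})$, $Q_R$, $K_R$ confine the support of $\tau_j[\d,R,\varepsilon]$ to a bounded region where all coalescence configurations are avoided, so the kernel is $\plainC\infty_0(\R^{3N})$, and Proposition \ref{prop:BS} then gives $\SfG_q=0$ for $1/q=1/2+l/3$, hence $\SfG_{3/4}=0$. Your extra care with the weights $a,b$ and the explicit choice $l=3$ are fine details that the paper leaves implicit.
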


\begin{proof} 
By the definitions \eqref{eq:ydel} and \eqref{eq:sco1}, 
the support of the kernel 
$\tau_j[\d, R, \varepsilon]$ belongs to the bounded domain 
\begin{align*}
 {\bigcap_{0\le l < s \le N}  \SfS_{l, s}(\varepsilon/2)\cap (B_R)^N}. 
\end{align*}
The function $\psi_j$ is real-analytic on this domain and it is uniformly bounded 
together with all its derivatives, so that $\tau_j[\d, R, \varepsilon]\in \plainC\infty_0(\R^{3N})$.
By Proposition \ref{prop:BS}, 
$\SfG_p(\iop(\tau_j[\d, \R, \varepsilon])) = 0$ for all $p >0$, and in particular, 
for $p = 3/4$, as claimed. 
\end{proof}
 
\begin{lem} For each $\d>0, R>0$ one has 
\begin{align}\label{eq:phi0}
\lim_{\varepsilon\to 0}\SfG_{3/4}\big(\iop\big(\phi_{j,0}[\d, R, \varepsilon]\big)\big) = 0,\ 
j = 1, 2, \dots, N.
\end{align}
\end{lem}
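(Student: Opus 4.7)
The statement says the $k=0$ contribution vanishes as $\varepsilon \to 0$. Since $x_0 = 0$ by convention, the factor $\t(|x - x_0|\varepsilon^{-1}) = \t(|x|\varepsilon^{-1})$ appearing in $\phi_{j,0}$ localizes $x$ to the shrinking ball $B_\varepsilon$; this corresponds to the coalescence of a particle with the nucleus, which should not contribute to the leading asymptotics. I will make this rigorous via the quantitative bound \eqref{eq:psifull}, with no structural information on $\psi$ near the nucleus required.

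First I would factor
\begin{align*}
\iop\big(\phi_{j,0}[\d, R, \varepsilon]\big) = b\, \iop(\psi_j)\, a_\varepsilon,
\end{align*}
where $b(\hat\bx) = Q_R(\hat\bx) Y_\d(\hat\bx)$ is bounded by $1$ and compactly supported independently of $\varepsilon$, and $a_\varepsilon(x) = K_R(x) \t(|x|\varepsilon^{-1})$ is bounded by $1$ and supported in $B_\varepsilon$. Applying \eqref{eq:psifull} (with $\varkappa \le \varkappa_0$) yields
\begin{align*}
\SfG_{3/4}\big(\iop(\phi_{j,0}[\d, R, \varepsilon])\big) \lesssim \big(M_\varkappa(b)\, S_\varkappa(a_\varepsilon)\big)^{\frac{3}{4}}.
\end{align*}
Since $b$ does not depend on $\varepsilon$, the weight $M_\varkappa(b)$ is a finite constant depending only on $\d, R, N, \varkappa$, so the task reduces to showing $S_\varkappa(a_\varepsilon) \to 0$ as $\varepsilon \to 0$.

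For this I would observe that once $\varepsilon \le 1$, the support of $a_\varepsilon$ meets only $O(1)$ of the unit cubes $\CC_n = (0,1)^3 + n$, $n \in \Z^3$; on each such cube $\|a_\varepsilon\|_{\plainL2(\CC_n)} \le \meas(B_\varepsilon)^{1/2} \lesssim \varepsilon^{3/2}$. Substituting into the definition of $S_\varkappa$, only $O(1)$ summands contribute, each bounded by $C\varepsilon^{9/8}$, so $S_\varkappa(a_\varepsilon) \lesssim \varepsilon^{3/2}$, and hence
\begin{align*}
\SfG_{3/4}\big(\iop(\phi_{j,0}[\d, R, \varepsilon])\big) \lesssim \varepsilon^{9/8} \xrightarrow[\varepsilon \to 0]{} 0,
\end{align*}
which is what is claimed.

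There is no serious obstacle here: the proof is essentially a one-line application of \eqref{eq:psifull} combined with the fact that the Birman--Solomyak-type functional $S_\varkappa$ is controlled by the $\plainL2$-norm over the (tiny) support of $a_\varepsilon$. Conceptually, the role of this lemma is to discard the nucleus coalescence --- unlike the genuine pair coalescences $x = x_k$, $k \ge 1$, for which the representation \eqref{eq:trepr} will produce the nontrivial Lipschitz contribution, the $k=0$ term is simply absorbed by the shrinking cut-off.
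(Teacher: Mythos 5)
Your proof is correct and follows essentially the same route as the paper: factor out the shrinking weight $a_\varepsilon$ supported in $B_\varepsilon$, apply the bound \eqref{eq:psifull}, and observe that $S_\varkappa(a_\varepsilon)\lesssim \varepsilon^{3/2}$, giving $\SfG_{3/4}\lesssim\varepsilon^{9/8}\to 0$. The only cosmetic difference is that the paper first estimates $Q_R Y_\d\le 1$ and $K_R\le 1$ so as to compare with $\iop(\psi_j)\t(|x|\varepsilon^{-1})$ directly, whereas you keep these bounded weights inside \eqref{eq:psifull}; the computation is the same.
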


\begin{proof}
As $x_0 = 0$ by definition, the kernel $\phi_{j,0}[\d, R, \varepsilon]$ has the form 
\begin{align*}
\phi_{j,0}[\d, R, \varepsilon](\hat\bx, x)
= Q_R(\hat\bx) Y_\d(\hat\bx)\psi_j(\hat\bx, x) \t\big(|x|\varepsilon^{-1}\big)  K_R(x). 
\end{align*}
Estimating $Q_R Y_\d\le 1$, $ K_R\le 1$, one sees that 
the singular values of $\iop(\phi_{j,0}[\d, R, \varepsilon])$ do not exceed those of the operator 
$\iop(\psi_j) a$ with the weight $a(x) = \t(|x|\varepsilon^{-1})$. 
By \eqref{eq:psifull}, 
\begin{align*}
\SfG_{3/4}(\iop(\psi_j)a)\lesssim S_\varkappa(a)^{3/4}\lesssim \bigg(\int_{\R^3} \t\big(|x|\varepsilon^{-1} \big)^2 dx\bigg)^{3/8}
\lesssim \varepsilon^{9/8}\to 0,\ \varepsilon\to 0.
\end{align*}
This implies \eqref{eq:phi0}.
\end{proof}

\begin{cor}
Denote by $\boldsymbol\a[\d, R, \varepsilon](\bx,x) = \{\a_j[\d, R, \varepsilon]\}_{j=1}^N$ 
the vector-valued kernel with the components  
\begin{align*}
\a_j[\d, R, \varepsilon](\hat\bx, x) 
= \sum_{k=1}^{N-1}\phi_{j,k}[\d, R, \varepsilon](\hat\bx, x).
\end{align*}
Then for all $\d >0$ and $R>0$, we have  
\begin{align}\label{eq:asymp2}
\begin{cases}
\SfG_{3/4}(Q_RY_\d \BPsi K_R) = &\ 
\lim\limits_{\varepsilon\to 0} 
\SfG_{3/4}(\iop(\boldsymbol\a[\d, R, \varepsilon])),
\\[0.2cm]
\sg_{3/4}(Q_RY_\d \BPsi K_R) = &\ 
\lim\limits_{\varepsilon\to 0} 
\sg_{3/4}(\iop(\boldsymbol\a[\d, R, \varepsilon])),
\end{cases}
\end{align}
where the limits on the right-hand side exist.
\end{cor}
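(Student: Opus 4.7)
The strategy is straightforward: show that the extra contributions that distinguish $Q_R Y_\d \BPsi K_R$ from $\iop(\boldsymbol\a[\d, R, \varepsilon])$ are negligible in the $\SfG_{3/4}$-sense as $\varepsilon \to 0$, then invoke the continuity statement of Corollary \ref{cor:zero1}. The computational input is already packaged in the two preceding lemmas \eqref{eq:tau} and \eqref{eq:phi0}, so essentially what remains is organizing vector-valued bookkeeping.

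The first step is to apply the componentwise identity \eqref{eq:split} and rewrite
\begin{align*}
Q_R Y_\d \BPsi K_R - \iop(\boldsymbol\a[\d, R, \varepsilon]) = \iop(\boldsymbol\rho[\d, R, \varepsilon]),
\end{align*}
where $\boldsymbol\rho[\d, R, \varepsilon]$ is the vector-valued kernel whose $j$-th component is $\phi_{j, 0}[\d, R, \varepsilon] + \tau_j[\d, R, \varepsilon]$. In view of Corollary \ref{cor:zero1} (with parameter $\nu = \varepsilon$), both identities in \eqref{eq:asymp2} and the existence of the two limits will follow at once, provided one can establish $\SfG_{3/4}(\iop(\boldsymbol\rho[\d, R, \varepsilon]))\to 0$ as $\varepsilon\to 0$.

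The second step is to reduce the vector-valued functional to a sum of scalar contributions. Writing $V_j : \plainL2(\R^{3N-3}) \to \plainL2(\R^{3N-3}; \mathbb C^N)$ for the isometric embedding onto the $j$-th coordinate, one decomposes $\iop(\boldsymbol\rho) = \sum_{j=1}^N V_j \iop(\phi_{j, 0} + \tau_j)$. Since $V_j$ preserves singular values, $\SfG_{3/4}(V_j \iop(f)) = \SfG_{3/4}(\iop(f))$, so applying the triangle inequality \eqref{eq:triangleg} (with exponent $1/(p+1) = 4/7$ for $p = 3/4$) twice yields
\begin{align*}
\SfG_{3/4}\big(\iop(\boldsymbol\rho[\d, R, \varepsilon])\big)^{\frac{4}{7}} \le \sum_{j=1}^N \Big(\SfG_{3/4}\big(\iop(\phi_{j, 0}[\d, R, \varepsilon])\big)^{\frac{4}{7}} + \SfG_{3/4}\big(\iop(\tau_j[\d, R, \varepsilon])\big)^{\frac{4}{7}}\Big).
\end{align*}
By \eqref{eq:tau} the $\tau_j$-terms vanish identically for every $\varepsilon \in (0, \d)$, while by \eqref{eq:phi0} the $\phi_{j, 0}$-terms tend to zero as $\varepsilon\to 0$. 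This gives the required convergence and completes the reduction.

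I do not foresee any substantive obstacle here. The only delicate point is the passage from the vector-valued operator $\iop(\boldsymbol\rho)$ to its scalar components, but this is immediate once one recognises that a direct-sum/coordinate embedding is an isometry and hence preserves the relevant spectral functionals; after that, the argument is a purely mechanical application of the quasi-triangle inequality for $\SfG_{3/4}$ and the continuity packaged in Corollary \ref{cor:zero1}.
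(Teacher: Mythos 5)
Your proposal is correct and follows the paper's own argument: decompose via \eqref{eq:split}, kill the $\tau_j$ and $\phi_{j,0}$ terms using \eqref{eq:tau}, \eqref{eq:phi0} together with the quasi-triangle inequality \eqref{eq:triangleg}, and conclude with Corollary \ref{cor:zero1}. The explicit coordinate-embedding step you add is just a more detailed account of the componentwise reduction the paper performs implicitly.
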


\begin{proof}
By \eqref{eq:split}, the kernel $Q_R Y_\d \psi_j K_R$ has the form 
\begin{align*}
\a_j[\d, R, \varepsilon] + \phi_{j, 0}[\d, R, \varepsilon] + \tau_j[\d, R, \varepsilon].
\end{align*}
By virtue of \eqref{eq:triangleg} and \eqref{eq:tau}, \eqref{eq:phi0}, we have 
\begin{align*}
\lim_{\varepsilon\to 0}
\SfG_{3/4}(\iop\big(\phi_{j, 0}[\d, R, \varepsilon] + \tau_j[\d, R, \varepsilon]\big) = 0.
\end{align*}
Now \eqref{eq:asymp2} follows from Corollary \ref{cor:zero1}.
\end{proof}

\begin{proof}[Completion of the proof of Lemma \ref{lem:central}]
According to Lemma \ref{lem:cutoff}, under the condition $\varepsilon < \varepsilon_0(\d, R)$, the 
support of each kernel
\begin{align*}
\phi_{j, k}[\d, R, \varepsilon], \quad j= 1, 2, \dots, N,\quad k = 1, 2, \dots, N-1,
\end{align*}  
belongs to $\tilde\Om_k(\d, R, \varepsilon)$, see \eqref{eq:omj} 
for the definition. Therefore one can use the representation \eqref{eq:trepr} for the function 
$\psi_j$: 
\begin{align*}
\a_j[\d, R, \varepsilon](\hat\bx, x) 
= \sum_{k=1}^{N-1}&\ \phi_{j,k}[\d, R, \varepsilon](\hat\bx, x)
= \sum_{k=1}^{N-1} Q_R(\hat\bx) Y_\d(\hat\bx)\t\big(|x-x_k|\varepsilon^{-1}\big) 
\tilde\xi_{j,k}(\hat\bx, x) K_R(x)\\[0.2cm]
&\ \quad + \sum_{k=1}^{N-1} Q_R(\hat\bx) Y_\d(\hat\bx)\t\big(|x-x_k|\varepsilon^{-1}\big) 
|x_k-x|\tilde\eta_{j,k}(\hat\bx, x) K_R(x).
\end{align*}
Each term in the first sum on the right-hand side is $\plainC\infty_0(\R^{3N})$.
 Thus, by Proposition \ref{prop:BS}, the functional 
$\SfG_p$ for the associated operator equals zero for all $p >0$, and in particular, for $p=3/4$.
The second sum coincides with the kernel $\Upsilon_j[\d, R, \varepsilon](\hat\bx, x)$, 
defined in \eqref{eq:upsilon}. Therefore, by Corollary \ref{cor:zero}, 
\begin{align}\label{eq:asymp3} 
\begin{cases}
\SfG_{3/4}(\iop(\boldsymbol\a[\d, R, \varepsilon])) = 
\SfG_{3/4}(\iop(\boldsymbol\Upsilon[\d, R, \varepsilon])),\\[0.2cm]
\sg_{3/4}(\iop(\boldsymbol\a[\d, R, \varepsilon])) = 
\sg_{3/4}(\iop(\boldsymbol\Upsilon[\d, R, \varepsilon])),
\end{cases}
\end{align}
for each $\d>0, R>0$ and $\varepsilon<\varepsilon_0(\d, R)$. 
Putting together \eqref{eq:asymp1}, \eqref{eq:asymp2} and \eqref{eq:asymp3}, 
and using Corollary \ref{cor:zero1}, 
we conclude the proof of Lemma \ref{lem:central}.
\end{proof}

\section{Proof of Theorems \ref{thm:etadiag} and \ref{thm:gtopsi}, \ref{thm:maincompl}}
\label{sect:proofs}

\begin{lem} 
The operator $\iop(\boldsymbol\Upsilon[\d, R, \varepsilon])$ belongs to $\BS_{3/4, \infty}$ 
for all $\d >0, R>0, \varepsilon<\varepsilon_0(\d, R)$ 
 and 
\begin{align}\label{eq:upsas}
\SfG_{3/4}\big(\iop(\boldsymbol\Upsilon[\d, R, \varepsilon])\big) 
= \sg_{3/4}\big(\iop(\boldsymbol\Upsilon[\d, R, \varepsilon])\big) 
= \mu_{1, 3} \int \big(K_R(t) H_{\d, R}(t) \big)^{\frac{3}{4}}dt,
\end{align}
where
\begin{align*}
H_{\d, R}(t) = Q_R(t) Y_\d(t)\, \big(|\tilde\eta_{1, 1}(t, t)|^2 + \tilde\eta_{1, 2}(t, t)|^2\big)^{1/2},\ \textup{if}\ N=2,
\end{align*}
and
\begin{align}\label{eq:hdr}
H_{\d, R}(t) =  
\bigg[\sum_{j=1}^N\sum_{k=1}^{N-1}\int_{\R^{3N-6}} \big| 
Q_R(\tilde\bx_{k, N}, t) Y_\d(\tilde\bx_{k, N}, t)
\tilde\eta_{j, k}(\tilde\bx_{k, N}, t, t) \big|^2 d\tilde\bx_{k, N}\bigg]^{\frac{1}{2}},\ \textup{if}\ N\ge 3,
\end{align} 
and $\mu_{\a, d}$ is defined in \eqref{eq:Xint}.
\end{lem}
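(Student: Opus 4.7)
The plan is to recognize $\iop(\boldsymbol\Upsilon[\d,R,\varepsilon])$ as an instance of the model operator $\iop(\CM)$ studied in Section \ref{sect:model} and apply Corollary \ref{cor:scal1} directly. The natural identification is
\begin{align*}
\Phi(x) = |x|,\quad a(x) = K_R(x),\quad b_{j,k}(\hat\bx) = Q_R(\hat\bx) Y_\d(\hat\bx),\quad
\b_{j,k}(\hat\bx, x) = \t\big(|x-x_k|\varepsilon^{-1}\big)\tilde\eta_{j,k}(\hat\bx, x),
\end{align*}
so that $\Upsilon_j[\d,R,\varepsilon] = \sum_{k=1}^{N-1} b_{j,k}(\hat\bx)\Phi(x_k-x)a(x)\b_{j,k}(\hat\bx, x) = \CM_j(\hat\bx, x)$ with $m=1$. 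Note that $\Phi$ is scalar, homogeneous of order $\a = 1 > -5/2$, and that $a \in \plainC\infty_0(\R^3)$, $b_{j,k} \in \plainC\infty_0(\R^{3N-3})$ by construction.

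The one non-cosmetic obstacle is that Theorem \ref{thm:model} requires $\b_{j,k} \in \plainC\infty(\R^{3N})$, whereas the function $\tilde\eta_{j,k}$ is only real-analytic on $\tilde\Om_{j,k}$, not on all of $\R^{3N}$. This is resolved using Lemma \ref{lem:cutoff}: since $\varepsilon < \varepsilon_0(\d, R)$, the support of the product $Q_R(\hat\bx) Y_\d(\hat\bx) K_R(x)\t\big(|x-x_k|\varepsilon^{-1}\big)$ is a compact subset of $\tilde\Om_k(\d, R, \varepsilon)$, whose closure lies in $\tilde\Om_{j,k}$ by \eqref{eq:omdere}. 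Choosing a cut-off $\chi_{j,k} \in \plainC\infty_0(\tilde\Om_{j,k})$ which equals $1$ on this support and replacing $\b_{j,k}$ by $\t\big(|x-x_k|\varepsilon^{-1}\big)\chi_{j,k}(\hat\bx,x)\tilde\eta_{j,k}(\hat\bx,x)$ (extended by zero) gives a genuine $\plainC\infty_0(\R^{3N})$ function while leaving the kernel $\Upsilon_j[\d,R,\varepsilon]$ unchanged. So Theorem \ref{thm:model} applies.

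With these identifications, Corollary \ref{cor:scal1} yields
\begin{align*}
\SfG_{3/4}\big(\iop(\boldsymbol\Upsilon[\d,R,\varepsilon])\big)
= \sg_{3/4}\big(\iop(\boldsymbol\Upsilon[\d,R,\varepsilon])\big)
= \mu_{1,3}\int_{\R^3}\big(|a(t) h(t)|\big)^{3/4}\,dt,
\end{align*}
where $h$ is defined by \eqref{eq:hm} with the present $b_{j,k}$ and $\b_{j,k}$. It remains to evaluate $h$ explicitly. Since $\t(0)=1$, the restriction of $\b_{j,k}(\hat\bx, x)$ to $x = x_k$ (equivalently, evaluating at $(\tilde\bx_{k,N}, t, t)$) equals $\tilde\eta_{j,k}(\tilde\bx_{k,N}, t, t)$. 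Substituting $b_{j,k} = Q_R Y_\d$ and this boundary value of $\b_{j,k}$ into \eqref{eq:hm} gives exactly $h(t) = H_{\d,R}(t)$ from \eqref{eq:hdr} (and analogously for $N=2$). Combined with $a(t) = K_R(t)$, this yields \eqref{eq:upsas}.

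The only point requiring care is therefore the smooth extension of $\tilde\eta_{j,k}$, which I expect to be mild because the cut-off architecture of Section \ref{sect:trim} has been designed precisely to keep the support inside $\tilde\Om_{j,k}$; everything else is a bookkeeping exercise matching the notation of the model operator to the notation of $\boldsymbol\Upsilon[\d,R,\varepsilon]$.
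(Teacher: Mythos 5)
Your proposal is correct and follows essentially the same route as the paper: recognize $\boldsymbol\Upsilon[\d,R,\varepsilon]$ as a model kernel of the form \eqref{eq:cm} with $\Phi(x)=|x|$ and apply Corollary \ref{cor:scal1}, the only issue being the global smoothness of $\b_{j,k}$. The paper resolves that issue by a slightly different (but equivalent) device — it takes $a=K_{2R}$, $b_{j,k}=Q_{2R}Y_{\d/2}$ and absorbs the whole product $Q_R Y_\d K_R\,\t(|x-x_k|\varepsilon^{-1})\tilde\eta_{j,k}$ into $\b_{j,k}$, using $K_RK_{2R}=K_R$, $Q_RQ_{2R}=Q_R$, $Y_\d Y_{\d/2}=Y_\d$ — whereas you insert an auxiliary cut-off $\chi_{j,k}\in\plainC\infty_0(\tilde\Om_{j,k})$ equal to $1$ on the support; your evaluation of $h$ then still gives $H_{\d,R}$ because at diagonal points where $Q_RY_\d\neq 0$ the factor $K_R(t)$ coincides with the $k$-th factor of $Q_R$ and is therefore nonzero, so $\chi_{j,k}=1$ there.
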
 
 
\begin{proof}
The kernel $\boldsymbol\Upsilon[\d, R, \varepsilon]$ (see \eqref{eq:upsilon}) has the form \eqref{eq:cm} 
with 
\begin{align*}
a(x) = K_{2R}(x),\ &\ b_{j,k}(\hat\bx) = Q_{2R}(\hat\bx) Y_{\d/2}(\hat\bx),\\ 
\b_{j, k}(\hat\bx, x) = &\ \t\big(|x-x_k|\varepsilon^{-1}\big) \tilde\eta_{j, k}(\hat\bx, x)
Q_{R}(\hat\bx) Y_{\d}(\hat\bx)K_{R}(x),
\end{align*}
and the homogeneous function $\Phi(x) = |x|$.   
Here we have used the fact that 
\begin{align*}
Q_R(\hat\bx)Q_{2R}(\hat\bx) = Q_R(\hat\bx),\quad 
Y_\d(\hat\bx) Y_{\d/2}(\hat\bx) = Y_\d(\hat\bx)
\quad \textup{and} \quad K_R(x) K_{2R}(x) = K_R(x).
\end{align*}
Therefore we can use Corollary \ref{cor:scal1}. It is immediate to see that in this case 
the function $h$ defined in \eqref{eq:hm}, coincides with $H_{\d, R}$, so that 
\eqref{eq:scal1} entails \eqref{eq:upsas}, as required. 
\end{proof}

\begin{proof}[Proof of Theorems \ref{thm:etadiag}, \ref{thm:gtopsi} and \ref{thm:maincompl}]
By Lemma \ref{lem:central}, each term in the relation \eqref{eq:upsas} has a limit 
as $\d\to 0, R\to\infty$. Therefore the integral on the right-hand side of 
\eqref{eq:upsas} is bounded uniformly in $\d>0, R>0$. 
Assume for convenience that the function $\t$ defined 
in \eqref{eq:sco1} is monotone decreasing for $t\ge 0$. Therefore the pointwise 
convergencies
\begin{align*}
Y_\d(\tilde\bx_{k, N}, t)\to 1, \ \d\to 0\quad \textup{and}\quad 
K_R(t)\to 1, Q_R(\tilde\bx_{k, N} , t)\to 1,\ R\to\infty,
\end{align*} 
are monotone increasing. By the Monotone Convergence Theorem, 
the integrand $ K_R(t) H_{\d, R}(t) $ on the right-hand side of \eqref{eq:upsas} 
converges for a.e. $t\in\R^3$ as $\d\to 0, R\to\infty$ to an $\plainL{3/4}(\R)$-function,  
which we denote by $\tilde H(t)$, and the 
integral in \eqref{eq:upsas} converges to 
\begin{align}\label{eq:limit}
\mu_{1, 3}\int \big(\tilde H(t)\big)^{3/4} dt.  
\end{align}
If $N=2$, then this concludes the proof 
of Theorem 
\ref{thm:etadiag}, 
since in this case 
\begin{align*}
H_{\d, R}(t)\to\big(|\tilde\eta_{1, 1}(t, t)|^2 + \tilde\eta_{2, 1}(t, t)|^2\big)^{1/2},
\end{align*} 
a.e. $t\in\R^3$, and by virtue of \eqref{eq:tH} this limit coincides with $H(t)$. 

If $N\ge 3$, then the convergence to $\tilde H(t)$ implies that  
for a.e. $t\in \R^3$ 
the function $K_R(t)H_{\d, R}(t)$, and hence $H_{\d, R}(t)$, is bounded uniformly in $\d$ and $R$. 
Applying the Monotone Convergence Theorem to the integral \eqref{eq:hdr}, 
we conclude that the a.e.-limit
\begin{align*}
|\tilde\eta_{j,k}(\tilde\bx_{k, N}, t, t)| 
= \lim_{\d\to 0, R\to\infty}\big| Q_R(\tilde\bx_{k, N}, t) Y_\d(\tilde\bx_{k, N}, t)
\tilde\eta_{j,k}(\tilde\bx_{k, N}, t, t) \big|,\ 
\end{align*}
belongs to $\plainL2(\R^{3N-6})$,\ a.e. $t\in\R^3$, and 
\begin{align*}
\lim_{\d\to 0, R\to\infty} H_{\d, R}(t) = H(t),\quad \textup{a.e.}\quad t\in \R^3,
\end{align*}
 where we have used the formula \eqref{eq:tH} for $H$. Thus 
 $H = \tilde H\in\plainL{3/4}(\R^3)$. As 
\eqref{eq:tH} is equivalent to \eqref{eq:H}, 
this completes the proof of Theorem \ref{thm:etadiag}. 
  
 An easy calculation shows that 
 $\mu_{1, 3} = 3^{-1}(2/\pi)^{5/4}$, so that the limit \eqref{eq:limit} 
 coincides with the coefficient 
 $A$ in \eqref{eq:coeffA}. Together with Lemma \ref{lem:central} this completes the proof of Theorem 
 \ref{thm:gtopsi}.
 As explained before,  Theorem \ref{thm:gtopsi} is 
 equivalent to Theorem \ref{thm:maincompl}. This completes the proof.
\end{proof}  
    
  \textbf{Acknowledgments.} The author is grateful to S. Fournais, T. Hoffmann-Ostenhof, 
M. Lewin and T. \O. S\o rensen
for stimulating discussions and advice. 
The author thanks J. Cioslowski for his comments and 
for bringing to the author's attention  
papers \cite{Cioslowski2020}, 
\cite{CioPrat2019} and \cite{HaKlKoTe2012}. 

The author was supported by the EPSRC grant EP/P024793/1.

\bibliographystyle{../beststyle}

\end{document}